\definecolor{darkblue}{rgb}{0,0,0.5}
\definecolor{darkgreen}{rgb}{0,0.5,0}
\definecolor{darkred}{rgb}{0.9,0,0}
\titlespacing{\paragraph}{0pt}{2ex}{0.15cm}
\tikzset{
    vertex/.style={circle,draw,minimum size=1.5em},
    edge/.style={->,> = latex'}
}
\DeclareMathOperator{\negl}{negl}
\DeclarePairedDelimiter\bra{\langle}{\rvert}
\DeclarePairedDelimiter\ket{\lvert}{\rangle}
\DeclarePairedDelimiterX\braket[2]{\langle}{\rangle}{#1 \delimsize\vert #2}
\newtheorem{thm}{Theorem}[section]
\newtheorem{cor}[thm]{Corollary}
\newtheorem{lemma}[thm]{Lemma}
\newtheorem{prop}[thm]{Proposition}
\newtheorem{defn}[thm]{Definition}
\newtheorem{question}{Question}
\newcommand{\tinyspace}{\mspace{1mu}}
\renewcommand{\R}{\mathbb{R}}
\renewcommand{\C}{\mathbb{C}}
\newcommand{\N}{\mathbbm{N}}
\renewcommand{\H}{\mathcal{H}} 
\newcommand{\mcN}{\mathcal{N}}
\newcommand{\mcI}{\mathcal{I}}
\newcommand{\mcO}{\mathcal{O}}
\newcommand{\mcS}{\mathcal{S}}
\newcommand{\mcH}{\mathcal{H}}
\newcommand{\mcK}{\mathcal{K}}
\newcommand{\mcG}{\mathcal{G}}
\newcommand{\msB}{\mathscr{B}}
\newcommand{\msU}{\mathscr{U}}
\newcommand{\epr}{\Phi_{\text{\tiny EPR}}}
\newcommand{\eprn}{\epr^{\otimes n}}
\newcommand{\wtd}{\widetilde}
\newcommand{\Tr}{\mathrm{Tr}}
\newcommand{\Irr}{\mathrm{Irr}}
\newcommand{\I}{\mathcal{I}} 
\newcommand{\ang}[1]{\langle #1 \rangle}
\newcommand{\norm}[1]{\lVert\tinyspace #1 \tinyspace\rVert} 
\newcommand{\abs}[1]{\lvert\tinyspace #1 \tinyspace\rvert} 
\newcommand{\Id}{\mathbbm{1}}
\newcommand{\nolines}[1]{\multicolumn{1}{c}{#1}}
\renewcommand{\hat}[1]{\widehat{#1}} %
\let\epsilon=\varepsilon %
\newcommand{\arr}{\rightarrow}
\newcommand{\SV}{\class{OTS}} %
\newcommand{\OTS}{\class{OTS}}
\title{Quantum Delegation\\ with an Off-the-shelf Device
}
\author[1,2] {Anne Broadbent}
\author[1,2]{Arthur Mehta}
\author[3,4]{Yuming Zhao}
\affil[1]{Department of Mathematics and Statistics, University of Ottawa}
\affil[2]{Nexus for Quantum Technologies, University of Ottawa}
\affil[3]{Institute for Quantum Computing, University of Waterloo}
\affil[4]{Department of Pure Mathematics, University of Waterloo}
\date{}
\begin{document}

\pagenumbering{roman}
\setcounter{page}{1}

\maketitle

\begin{abstract}
Given that reliable cloud quantum computers are becoming closer to reality,  the concept of delegation of quantum computations and its verifiability is of central interest. Many models have been proposed, each with specific strengths and weaknesses. Here, we put forth a new model where the client trusts only its classical processing, makes no computational assumptions, and interacts with a quantum server in a \emph{single} round. In addition, during a set-up phase, the client specifies the size $n$ of the computation and receives an untrusted, \emph{off-the-shelf (OTS)} quantum device that is used to report the outcome of a single measurement.

 We show how to delegate polynomial-time quantum computations in the OTS model. This also yields an interactive proof system for all of $\QMA$, which, furthermore, we show can be accomplished in statistical zero-knowledge. This provides the first relativistic (one-round), two-prover zero-knowledge proof system for~$\QMA$.

As a proof approach, we provide a new self-test for $n$ EPR pairs using only constant-sized Pauli measurements, and show how it provides a new avenue for the use of simulatable codes for local Hamiltonian verification. Along the way, we also provide an enhanced version of a well-known stability result due to Gowers and Hatami and show how it completes a common argument used in self-testing.

\end{abstract}
\newpage
\setcounter{tocdepth}{2}
\tableofcontents
\newpage

\section{Introduction}\label{sec:intro}
\pagenumbering{arabic}
\setcounter{page}{1}

In an interactive proof system, a computationally-bounded verifier interacts with a powerful prover in order to verify the truthfulness of an agreed-upon problem instance. Starting with $\QMA$,
and followed by  $\QIP$ and $\QMIP$ (among others), \emph{quantum} interactive proof system, (in which  the verifier  is  \emph{quantum} polynomial-time)  were defined and studied~\cite{Wat00,Wat03,KM03}.

Yet, these quantizations depend crucially on the tacit assumption that the verifier has access to \emph{trusted} quantum polynomial-time verification.
Given the current state-of-the-art in quantum computation development, the inherent difficulty at characterizing quantum systems, and the fact that there is no way to reliably verify the trace of a quantum computation, there is ample evidence that this assumption may be questionable. Indeed, despite impressive technological improvements, we may ultimately have to contend with a reality where quantum computers are never as trustworthy or reliable as classical devices. This prospect has motivated consideration of models where the verifier has access to very limited but trusted quantum functionality \cite{ABEM17arxiv,Bro18,FK17}, or where the verifier is entirely classical and the prover is computationally bounded \cite{Mah18}, while another class called $\MIP^*$ models an efficient classical verifier interacting with several isolated, unbounded quantum provers~\cite{CHTW04}. Each approach provides advantages and encounters challenges: early quantum servers will be expensive and thus all else equal, requiring a single prover is preferable; on the other hand, existing single-prover protocols either require a trusted device or make computational assumptions.
Multi-prover protocols utilize powerful device-independence techniques which avoid these assumptions but at the high cost of requiring several powerful provers and requiring isolation.\looseness=-1

The current zeitgeist in this field allows for imaginative considerations of how we describe and model tasks in a quantum world.  These approaches have in common that instead of considering the straightforward quantum analog of classical protocols, we strive to make considerations that are naturally motivated in the quantum setting\footnote{See, for instance, the recent work on the complexity of preparing quantum states and unitaries~\cite{RY22}.}. Here, we continue on this momentum and introduce a novel approach to proof verification, where the set-up itself can only be motivated in the quantum setting. To this end, we consider the following question:

\begin{question}
\label{question:skeptical}
 \emph{What is the expressive power of the class of \textbf{relativistic}, interactive proof systems with a single quantum prover,  and a classical verifier having access to an \textbf{off-the-shelf} \textbf{untrusted quantum device}?}
\end{question}

\paragraph{Off-the-shelf Device.} We call the above model the \emph{off-the-shelf (OTS)} model since it models the fact that the verifier, in addition to interacting with a standard prover, has access to a device that is (1) generic (it does not depend on the instance of the problem to be solved, only on the instance size), (2) efficient (for completeness, polynomial resources are suffice) (3) completely untrusted (for soundness, there are no assumptions on its computational power or inner-workings). Importantly, \emph{relativistic} refers to a 1-round protocol; this is desirable for its relative ease in enforcing isolation\footnote{A relativistic protocol is highly desirable in the multi-prover scenario since isolation can be enforced using relative position and response times~\cite{CL17,Gri19}.}.

Operationally, we imagine the OTS model as the prover providing the verifier with such a generic, off-the-shelf device ahead of the proof verification. In particular, the preparation of such a device in terms of its capabilities is independent of the particular problem instance, although we do allow dependence on its size. Once in possession of this device, the verifier may query the prover and simultaneously use a single measurement from the off-the-shelf device, which leads the verifier to \emph{accept} or \emph{reject}. The figures of merit for the interactive proof system are the usual \emph{completeness} and \emph{soundness}.

\begin{figure}[!htbp]
  \centering
  \includegraphics[width=0.65\textwidth]{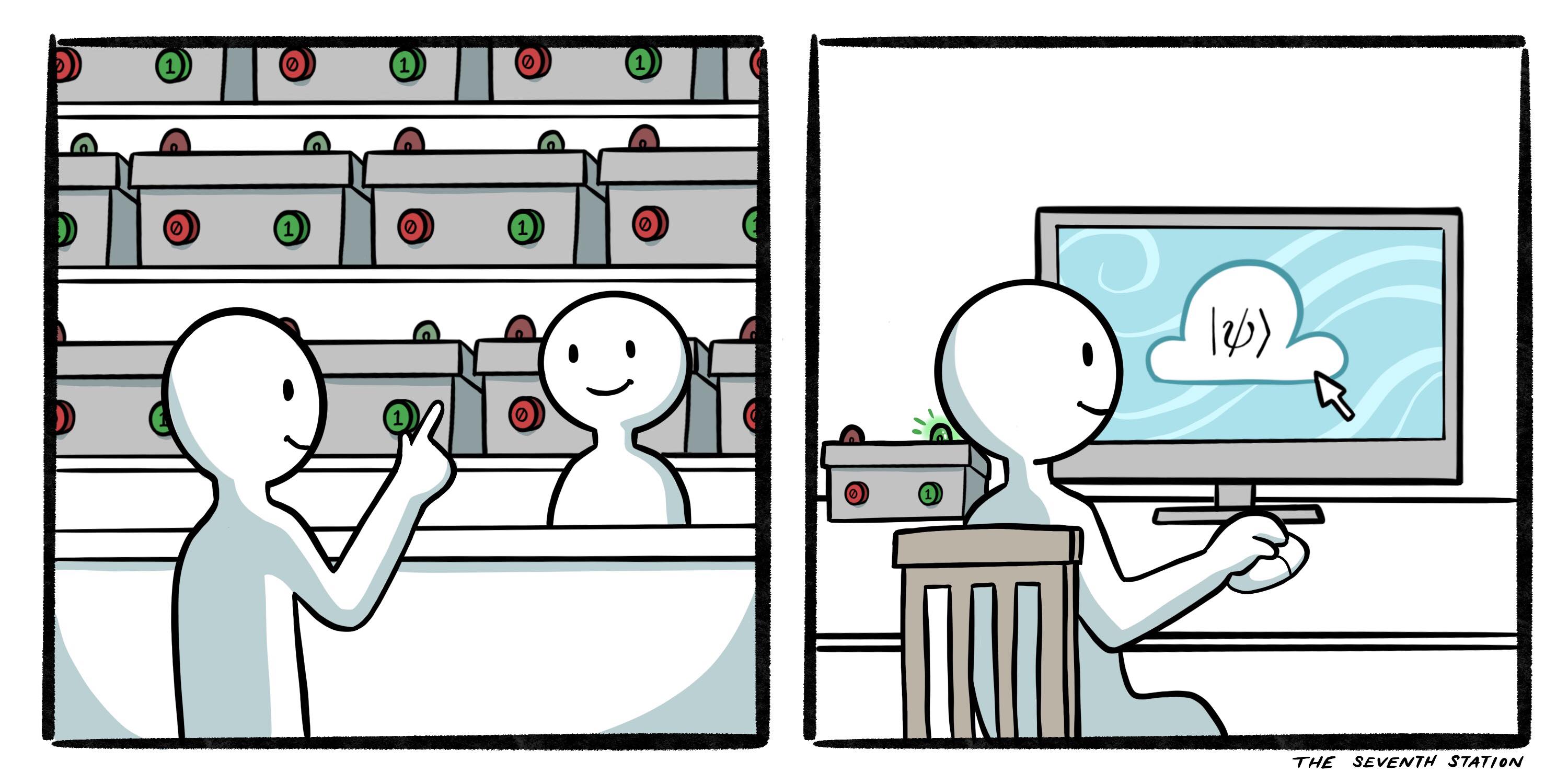}
  \caption{During the set-up the verifier selects an off-the-shelf device based on the required size of the problem instance. Afterward, the verifier is free to select any language and instance and interacts in a single round with both the  prover and the off-the-shelf device, leading to the \emph{accept/reject} output of the verifier.}
  \label{fig:Delegated quantum computation}
\end{figure}

Since the OTS scenario models aspects of near-term proof verification using untrusted quantum devices, we naturally wish to understand how it relates to some of the most relevant and studied properties of interactive proof systems:
\begin{question}
\label{question:zk}
\emph{Can the OTS model provide  novel approaches to \textbf{zero-knowledge proof systems} and to \textbf{delegated quantum computation}?}
\end{question}

\paragraph{Zero-Knowledge Proof Systems.}
\emph{Zero-knowledge (ZK)} proof systems capture the counter-intuitive notion that an interactive proof can be simultaneously convincing, while also completely concealing the inner-workings of the prover; such proof systems play an influential role in many areas of complexity and cryptography as well as in cybersecurity since they prevent reverse-engineering. We are interested here in \emph{statistical} ZK, meaning that the concealing property holds against an unbounded verifier. Given our new OTS model, one of the first questions we thus ask is whether proof systems in the OTS model can be made~ZK.


\paragraph{Delegated Quantum Computation.}
In \emph{delegated quantum computation}, a computationally-weak client outsources a quantum computation to a more powerful (but computationally-bounded) device in a way that the result is verifiable.  Starting with the OTS model, we can scale it down (such that the prover is no longer unbounded, but merely quantum polynomial-time; the verifier and off-the-shelf device remain unchanged).
The question then becomes: via a relativistic interaction, can a classical polynomial-time client, together with an OTS device, verifiably outsource a quantum computation of their choosing to an isolated quantum polynomial-time device, assuming that the size of the quantum computation is compatible with the parameters of the OTS device?

\paragraph{Summary of Results.}~In this work, we make important steps towards answering the above questions:\looseness=-1
\begin{itemize}
\item  We show that any language in $\QMA$ has a statistical ZK proof system in the OTS model.

\item  We show that the above OTS proof system can be adapted for delegated quantum computation for any problem in $\BQP$, while remaining ZK and in the OTS model.

\end{itemize}

\subsection{Context}
We now give an overview of background material, together with a summary of the current state-of-the-art approaches to interactive proof systems and delegated computation in the quantum setting, focusing on elements that are relevant to our main questions, and highlighting where existing approaches fail in their applicability to our scenario.

\subsubsection{Classical and quantum interactive proof systems} In the model of interactive proof systems ($\IP$), an efficient classical verifier interacts with an all-powerful and untrusted prover in order to verify the correctness of a statement~\cite{GMR89}. We note that class $\NP$ corresponds to a single-message interaction (with $\MA$ being in probabilistic version), while $\AM$ incorporates a single round (\emph{i.e.}, \emph{two} messages).

In \emph{a multiprover interactive proof system} ($\MIP$), a verifier interacts with multiple isolated provers~\cite{BGKW88}. Each of the models above has been \emph{quantized}, \emph{i.e.,} extended to the setting where some (or all) of the parties are quantum. This is captured, \emph{e.g.} by the classes $\QMA$ (the quantum version of $\MA$), $\QIP$ (the quantum version of $\IP$) and ${\MIP}^*$ (a version of a multi-prover interactive proof system ($\MIP$) where the unbounded provers share entanglement). Groundbreaking results have characterized some these quantum classes, \emph{e.g.} $\QIP = \PSPACE$ \cite{JJUW11} and ${\MIP}^*= \RE$~\cite{JNV+20arxiv}.

\subsubsection{Zero-knowledge}
\label{sec:intro-ZK}
A strong motivation for the study of interactive proof systems is the connections to the counter-intuitive concept of a zero-knowledge proof system \cite{GMW91,BOGG88}. Informally, a proof system is \emph{zero-knowledge} when the verifier is unable to learn anything beyond the fact that the agreed-upon instance is true. This is more formally treated by establishing the existence of a \emph{simulator} which can reproduce the transcript of the interaction.

Zero-knowledge proof systems were first extended to the quantum setting by Watrous \cite{Wat09}, who considered the setting where the verifier has access to a trusted polynomial-time quantum device. Subsequently, it was shown that under certain cryptographic assumptions, all problems in $\QMA$ admit a zero-knowledge proof system \cite{BJSW16,BJSW20,BG22} (again, assuming the verifier has trusted polynomial-time quantum computation). There have been several approaches in the case of a fully classical verifier. Vidick and Zhang showed that argument protocols can be made to satisfy the zero-knowledge property \cite{VZ20}. Recent work by Crépeau and Stuart~\cite{CS23arxiv} provides a two-prover one-round zero-knowledge proof system for $\NP$. The work of Chiesa, Forbes, Gur and Spooner provides a two-prover zero-knowledge
proof system for $\NEXP$~\cite{CFGS18}, however, their work requires polynomially many rounds of interaction. Work due to Grilo, Yuen, and Slofstra~\cite{GSY19} shows that any proof system for ${\MIP}^*$ can be made zero-knowledge at the cost of adding four additional provers. Although these works provide inspiration for studying zero-knowledge proof systems in the OTS model, as far as we are aware, they do not directly contribute to our main question on ZK.  In fact, according to the current state-of-the-art, an implicit open question~\cite{Gri19} is the following: \emph{``Does there exists a \emph{relativistic} zero-knowledge proof system for $\QMA$ with two provers and a classical verifier?''}. We emphasize that our OTS model takes this question further, by requiring one of the provers to operate generically and independently of the problem instance.


\subsubsection{Delegation of quantum computations}
\label{sec:intro-delegated} Delegated quantum computation allows a computationally-weak classical client to delegate a computational task to an untrusted, polynomial-time quantum server. Under certain conditions, an interactive proof system leads in a straightforward way to a protocol for delegated quantum computation. Typically, this is achieved if the interactive proof system captures \emph{e.g.}~$\QMA$, and furthermore, given the witness, the prover is efficient; it is also relevant that the $\QMA$ witness is used in such a way that we can scale down the proof system in order to achieve a delegation protocol for~$\BQP$ (\emph{e.g.}~\cite{Gri19})\footnote{$\BQP$ is closed under complementation, hence this is sufficient for delegation}. The sketch above is also applicable to the scenario of multiple servers. Note that because of the resemblance between the models of the interactive proof system and delegated quantum computation, we occasionally confound the two --- using the complexity class acronym to refer to the interaction pattern between prover(s) and verifier --- but we emphasize that in delegated quantum computation \emph{protocols}, the server is always computationally bounded (as opposed to a prover in interactive \emph{proof systems}).

Following Reichardt, Unger, and Vazirani~\cite{RUV13}, who showed a delegated quantum computation for the setting of ${\MIP}^*$, much progress was made, aiming at improving parameters and techniques; despite these efforts,  as far as we are aware, none of the existing works are applicable to our model.
Notable here is the work of~\cite{CGJV19} which uses \emph{quasilinear} resources for \emph{both} servers, and achieves at best a constant round complexity, as well as~\cite{Gri19} which is the first 2-server, 1-round (relativistic) protocol for delegated quantum computations, but uses the full polynomial-power of both servers.

We note that in the protocol of~\cite{RUV13}, as well as for most of the related works, we have the property that the protocol for delegated quantum computation can be \emph{scaled up} to an interactive proof system for $\QMA$ in the following way. We let the unbounded prover derive a $\QMA$ witness, then apply the delegation protocol for the circuit of the $\QMA$ verifier, where the \emph{other} prover does the computation on the witness, as teleported by the initial prover (and where the outcome of the teleportation measurement is reported to the verifier). In this way, we can closely relate interactive proof systems for $\QMA$ with two-prover protocols for delegated quantum computations.

\subsection{Technical context}

We now introduce two recent and highly successful techniques in the areas of quantum multi-prover interactive proof systems, delegated quantum computation and zero-knowledge proof systems:  \emph{self-testing} and \emph{quantum simulatable codes}. These techniques are pivotal for our work.

\subsubsection{Self-testing}
\label{sec:intro-self-testing}

\emph{Self-testing} (also called \emph{device-independence}) is a ubiquitous and powerful technique in the study of ${\MIP}^*$ and related delegation protocols. The concept was introduced by Mayers and Yao~\cite{MY04}. Informally, a protocol \emph{self-tests} a particular state or measurement when this state/measurements (or an equivalent version thereof) are required for obtaining the maximal acceptance probability. The most well-known examples are the non-local games known as the CHSH game and the Magic Square game \cite{CHSH69,Mer90,Per90,Tsi93}. Subsequently, numerous works have enriched our understanding of self-testing and its applications to delegated quantum computation, \emph{e.g.},  \cite{MYS12,McK17,Col17,CRSV18,CGJV19,NV17,NV18}.

Many of the above works arrived somewhat before the complementary mathematical formalism and careful examination of the basic properties of self-testing. Current approaches to formalize self-testing use the theory of approximate representation theory of groups and $C^*$-algebras \cite{slo19,Slo16arxiv,MPS21arXiv}, and a variety of very fundamental questions regarding self-testing have only recently been asked and examined \cite{CMMN20,MNP21arxiv,MS23}. These formalisms, and especially their operationally-useful \emph{approximate} versions utilize a key stability result due to Gowers and Hatami which allows one to relate approximate representations to exact representations~\cite{GH17}.

\subsubsection{Simulatable codes}
\label{sec:intro-sim-codes}
Recent works by Grilo, Yuen, and Slofstra \cite{GSY19}, as well as Broadbent and Grilo \cite{BG22} introduce the notion of simulatable codes as a tool for establishing zero-knowledge proof systems and protocols in the quantum setting. The idea is to use techniques from quantum error-correcting codes to create a ``simulatable'' witness or proof for use in the verification process. Here the witness is \emph{simulatable} in the sense that there is an efficient classical algorithm
 which can reproduce the description of the local density matrix of the witness on any small enough subspace. This is a pivotal tool in establishing zero-knowledge, and the application of the technique consists in developing  a verification protocol, (or verification circuit in the case of~\cite{BG22}) which verifies such simulatable witnesses; this can  then be applied to the situation of encoding \emph{e.g.}, a witness for~$\QMA$ into a simulatable code~\cite{BG22}.

We note that the full power of simulatable codes is best appreciated when thinking of them as for encoding a quantum state: thanks to such codes, we achieve a situation where local descriptions are easy to compute, while \emph{global} correlations might be hard --- a situation that has no classical equivalent since a series of local descriptions trivially define a global description.

\subsection{Contributions}

We now give more details and motivation for our model and an overview of our main contributions at the conceptual level.

\subsubsection{Model}

As introduced earlier, we are interested in modeling near-term proof verification and delegation of quantum computations. To this end, we propose a new paradigm that is particularly relevant to the quantum scenario: a verifier having access to an OTS device. To motivate the model, consider that the complexity class~$\QMA$ models a verifier having access to fully-trusted polynomial-time quantum computation. While such a verifier is \emph{skeptical} of the prover (and thus needs to verify the claimed proof independently), in the quantum case, a new level of skepticism is possible, namely that the verifier's quantum processing is untrusted. A common solution in this case is to postulate \emph{two} (or more) untrusted and all-powerful devices together with a classical verifier; this is the realm of~${\MIP}^*$.  In this work, we propose a new paradigm that treats the provers asymmetrically. Starting with a conventional two-prover interactive proof system, we ask that only \emph{one} of the provers do the heavy lifting (via its unbounded computational capabilities), with the second prover becoming efficient and completely generic (for completeness, this prover need not even be given a description of the task at hand; soundness, however, is shown against \emph{two} unbounded provers).


We denote $\OTS$ the set of all languages~$L$ that can be decided under a constant completeness-soundness gap, in the model that follows.  Before the instance $x \in L$ is selected, the classical verifier is provided with an untrusted off-the-shelf device which only depends on a parameter $n$, indicating the size of the problem instance (without loss of generality, we can assume that the prover provides such OTS device). For completeness, such a device shares an entangled state $\ket{\psi}_n$ with a quantum prover and will be purported to perform efficient measurements from a predetermined list of available options. The verifier may select any choice $x \in L$ provided $|x| \leq n$ and \emph{simultaneously} uses a single question to the prover and to the device; the verifier then determines whether or not to accept based on the responses. We stress that OTS proof systems are sound against both an unbounded prover and unbounded~OTS.

We observe that  $\OTS$ is a refinement of and thus contained in ${\MIP}^*$. In the other direction, the OTS model is a generalization of $\AM$, where the otherwise classical verifier has additional 1-round query access to a small, off-the-shelf quantum device.  In summary, we have the following straightforward containments (see also \Cref{fig:protocol})
\begin{equation}
 \AM \subseteq \OTS \subseteq {\MIP}^*
\end{equation}

In a classical proof system, an $\OTS$ can be understood as an instance-independent hardware token. This device can be used to provide a commitment for a zero-knowledge proof system for $\NP$~\cite{GMW91}; what is more, the one-time property of the OTS can be used as an oblivious transfer device, which then yields a non-interactive zero-knowledge proof system for~$\NP$~\cite{Kil88}.  We note that in the quantum case, our model requires a fully classical verifier and hence the case of zero-knowledge for $\QMA$~\cite{BJSW16,BG22} in the OTS model is much more complex, and a classical-verifier analogue to the $\NP$ proof systems above is not directly applicable. Other approaches based on using the OTS as a one-time memory~\cite{BGS13} also run into a roadblock due to the fact that we require a fully classical verifier.

\begin{figure}[!htbp]
    \centering
    \includegraphics[scale=0.12]{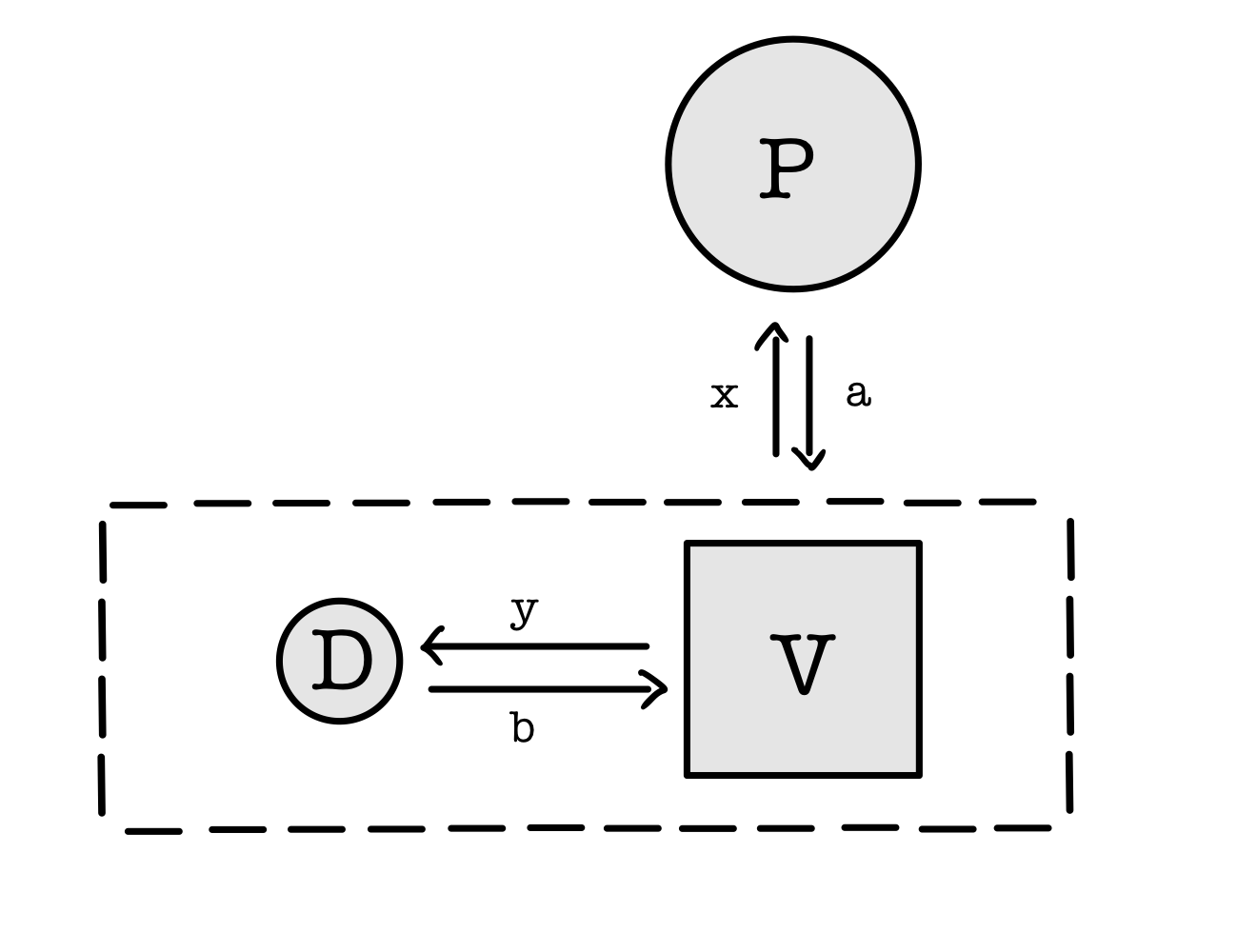}
    \caption{Off-the-shelf (OTS) proof system. $P$ is the quantum prover, $V$ is the classical verifier, and $D$ a  rudimentary off-the-shelf-device; each  arrow represents a single classical message. We can interchangeably think of the model as a strengthening of the verifier in~$\AM$ ($V$ having access to an additional OTS device~$D$), or a weakening of ${\MIP}^*$ (one of the provers, $D$, is severely restricted).}
    \label{fig:protocol}
\end{figure}

\subsubsection{OTS proof systems for $\QMA$} Our first result is that any language in $\QMA$ is also in $\OTS$. An interpretation of this result is that starting with a conventional proof system for $\QMA$, we can exchange the unwavering trust of the verifier in its quantum verification process for a classical verifier with two new features: (1) the verifier has access to an untrusted, and instance-independent, off-the-shelf quantum device; and (2) the verifier interacts with the prover (and the device) in a single simultaneous round. In summary, we thus have:

\begin{thm}\label{thm:InformalQMA} (Restated as part of \Cref{thm:QMA in Skep})
$ \QMA \subseteq \SV\,.$
\end{thm}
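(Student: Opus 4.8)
The plan is to reduce an arbitrary language $L \in \QMA$ to the local Hamiltonian problem and then build an OTS proof system around a new $n$-EPR-pair self-test. First I would invoke the $\QMA$-completeness of the (constant-promise-gap) local Hamiltonian problem, so that deciding $x \in L$ amounts to distinguishing, for an explicitly computable local Hamiltonian $H = \sum_i H_i$ on $n$ qubits, whether the ground energy is below $a$ or above $b$ with $b - a$ at least inverse-polynomial (which can be amplified to a constant gap). The honest prover's job is to hold a low-energy witness state $\ket{\xi}$ on $n$ qubits together with its half of $n$ EPR pairs shared with the OTS device $D$; by teleportation, the verifier's classical choice of a random Hamiltonian term $H_i$ and an associated Pauli basis determines a measurement that the prover applies to $\ket{\xi}$, while $D$ is asked to report the outcome of a constant-sized Pauli measurement on its half of the EPR state. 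The verifier's check combines (i) the self-test condition, enforced by asking $D$ (and cross-checking with $P$) to produce correlations consistent with $n$ maximally entangled pairs measured in Pauli bases, and (ii) the energy check, computed from the teleportation-corrected outcomes, accepting iff the (Pauli-basis) estimate of $\bra{\xi} H_i \ket{\xi}$ is small on average.

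The core technical steps, in order, are: \textbf{(1)} State and use the new self-test for $n$ EPR pairs with constant-sized Pauli measurements (promised later in the paper) — this guarantees that any near-optimal strategy of $(P, D)$ is, up to local isometries and small error, the honest EPR strategy, so in particular $D$'s reported measurement outcomes are (approximately) the outcomes of genuine Pauli measurements on genuine EPR pairs, and the state on $P$'s side, after conditioning on teleportation outcomes, is close to some well-defined $n$-qubit state $\ket{\xi}$. \textbf{(2)} Translate the self-test conclusion through the teleportation/rigidity machinery into a statement that the prover is effectively performing honest Pauli measurements on $\ket{\xi}$; here the enhanced Gowers--Hatami stability result (also promised in the paper) is what lets us pass from the approximate representation implicit in the prover's operators to an exact one, closing the standard self-testing argument. \textbf{(3)} Do the standard ``energy from random Pauli measurements'' analysis: since each $H_i$ is a constant-size local term, it is a fixed linear combination of tensor products of Paulis, so sampling a term and a Pauli label and reading off the teleported outcomes gives an unbiased (up to the rigidity error) estimator of $\bra{\xi} H \ket{\xi}$; completeness follows by having the honest prover use a ground state, and soundness follows because if all strategies forced $\bra{\xi} H \ket{\xi}$ to be small then the ground energy would be small, contradicting $x \notin L$. \textbf{(4)} Amplify the resulting completeness--soundness gap to a constant (sequential or parallel repetition of the one-round protocol, which is allowed since $\OTS$ is defined with a constant gap and the device only depends on the size parameter $n$, here the repeated size).

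The main obstacle I expect is step (2): making the self-test robust enough to yield a \emph{quantitatively useful} energy estimate. Generic self-testing gives closeness to the ideal strategy with an error that degrades with $n$ (each of the $n$ EPR pairs contributes), whereas the Hamiltonian has $\mathrm{poly}(n)$ terms and we need the per-term error, after union-bounding, to stay below the (amplified, but still only constant) promise gap. So the delicate point is ensuring the $n$-EPR self-test has the right error scaling — ideally error depending only on the deviation from optimal value and not blowing up with $n$ in a way that swamps the gap — and then carefully combining the rigidity bound with the teleportation correction so that the verifier's accept/reject decision genuinely reflects $\bra{\xi} H \ket{\xi}$. A secondary subtlety is the simultaneity/relativistic (single-round) constraint: the verifier must pack the basis choice for $D$ and the Hamiltonian-term choice for $P$ into one simultaneous round without leaking which term is being tested in a way that lets a cheating $P$ adaptively prepare a low-energy-looking answer; this is handled by the usual trick of also, with some probability, running ``consistency'' rounds that only test the EPR correlations and are indistinguishable to $P$ from the energy rounds, but verifying that this does not reopen a soundness gap requires care. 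I would expect these points to be exactly where the bulk of the paper's later technical work (the new self-test and the enhanced Gowers--Hatami bound) is deployed.
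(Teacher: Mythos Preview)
Your proposal is correct and tracks the paper's approach closely: reduce to a local Hamiltonian, interleave an $n$-EPR self-test (the paper's LWPBT) with a teleportation-based energy test, use rigidity (via the enhanced Gowers--Hatami theorem) for soundness, and amplify. Two places where the paper's execution differs from your expectations are worth flagging. First, the self-test does \emph{not} achieve $n$-independent robustness; the LWPBT gives only $\mathrm{poly}(n)\sqrt{\epsilon}$ closeness (\Cref{rigidity}). The paper absorbs this not by sharpening the self-test but by playing the energy test with probability $p=\Theta\!\big(n^{-6}(\beta-\alpha)^{24}\big)$, so that the rigidity error $p\cdot\mathrm{poly}(n)\epsilon^{1/4}$ is dominated by the linearity-test penalty $(1-p)\epsilon$ for any cheating $\epsilon$ (\Cref{Upper-Bound Lemma}); this yields only an inverse-polynomial gap, which is then amplified by \emph{anchoring plus threshold parallel repetition} (\Cref{gap}) rather than ordinary repetition, since the honest strategy does not win with probability $1$. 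Second, the reduction must land in $6$-local Hamiltonians of \emph{$XZ$-type}, because the LWPBT certifies only $\sigma_X$ and $\sigma_Z$ observables; the paper secures this via a specific choice of physical gate set in the circuit-to-Hamiltonian construction (\Cref{thm:simulation of history states}), a point your sketch leaves implicit.
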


\subsubsection{Zero-knowledge OTS proof system for $\QMA$} What is more, we show that the OTS proof system for $\QMA$ is also \emph{statistical zero-knowledge}, meaning that we can simulate in classical polynomial time the verifier's transcript when interacting with the provers on a yes-instance.

\begin{thm}(Restated in  \Cref{thm:QMA in Skep}) \label{thm:InformalZK}
For every language $L$ in $\QMA$, there exists a statistical zero-knowledge OTS proof system for $L$.
\end{thm}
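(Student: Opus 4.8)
\medskip\noindent\textbf{Proof plan.}\quad
The plan is to obtain the zero-knowledge OTS proof system by taking the (non-ZK) OTS proof system for $\QMA$ underlying \Cref{thm:InformalQMA} and replacing the raw $\QMA$ witness by its image under a \emph{simulatable} quantum code, in the spirit of Broadbent--Grilo~\cite{BG22} and Grilo--Slofstra--Yuen~\cite{GSY19}. Concretely, fix the $\QMA$-complete local Hamiltonian problem; in the honest protocol the prover $P$ holds its halves of the $n$ EPR pairs shared with the off-the-shelf device $D$ (with $n$ chosen polynomially large enough to carry the encoding), and, rather than teleporting the witness state $\rho$ (the relevant history state) into those halves, teleports an encoding $\mathrm{Enc}(\rho)$ under a code whose encoded Hamiltonian terms still act on a constant number of physical qubits and whose privacy threshold exceeds that locality. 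The verifier then runs, on a random round, either the $n$-EPR self-test branch (querying both $P$ and $D$ for constant-sized Pauli outcomes on their EPR halves and checking the EPR correlations and rigidity) or the energy branch (receiving the teleportation corrections $(a,b)$ from $P$, asking $D$ to measure a random encoded Hamiltonian term on a constant-size subset of its qubits, correcting for the Pauli frame $(a,b)$, and estimating the energy). Completeness follows from the honest encoding; soundness against two unbounded, mutually isolated provers follows because the self-test forces $D$ to perform genuine Pauli measurements on genuine EPR pairs --- hence forces $P$ to have committed, via $(a,b)$, to an actual physical state --- and the encoded-Hamiltonian test is a sound $\QMA$ verification of that state. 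Here the enhanced Gowers--Hatami stability result~\cite{GH17} is what upgrades the approximate rigidity produced by the self-test into the near-exact structure needed to pin down $D$'s measurements.

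For the zero-knowledge property I would exhibit a classical polynomial-time simulator of the verifier's view, exploiting that the protocol is designed so that, whatever question the verifier sends, $P$ and $D$ only ever reply with constant-size local information about $\mathrm{Enc}(\rho)$. The view consists of the verifier's own randomness together with $P$'s and $D$'s classical messages. In the self-test branch these are the standard EPR measurement outcomes --- uniform, with the known pairwise (anti)correlations --- which the simulator samples directly. In the energy branch, $P$'s message is the teleportation outcome $(a,b)$, which is uniform and independent of $\rho$, while $D$'s message is the outcome of measuring a Pauli term on a constant-size subset $S$ of qubits carrying $X^a Z^b \mathrm{Enc}(\rho) Z^b X^a$, whose distribution depends only on $X_S^{a_S} Z_S^{b_S}\, \Tr_{\bar S}[\mathrm{Enc}(\rho)]\, Z_S^{b_S} X_S^{a_S}$. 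By simulatability of the code, $\Tr_{\bar S}[\mathrm{Enc}(\rho)]$ is computable in classical polynomial time and is independent of $\rho$ whenever $\lvert S\rvert$ stays below the privacy threshold; so the simulator samples $(a,b)$ uniformly, computes that reduced density matrix, conjugates by the Pauli frame, samples $D$'s outcome, and assembles the transcript. Since the simulated joint distribution agrees with the real one branch by branch (exactly, up to the completeness/sampling slack of the energy test), the simulated view is statistically close to the real one on yes-instances, which is precisely statistical zero-knowledge.

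The hard part will be reconciling soundness with simulatability inside the encoded-Hamiltonian verification. Detecting a cheating prover pushes toward probing large subsets of the encoded state, while the simulator requires that \emph{every} subset that any (even malicious) verifier can cause $P$ or $D$ to reveal is small enough for the code's privacy guarantee to apply; I expect the resolution to require (a) a protocol in which an accepting round exposes only a constant-size local view, uniformly over the verifier's choices, (b) code distance and privacy threshold comfortably exceeding the locality of the encoded Hamiltonian terms, and (c) the verifier's questions to $D$ in the energy branch being indistinguishable, from $D$'s side, from its self-test questions, so that the rigidity conclusion transfers and soundness is preserved. A further subtlety is that the self-test only certifies \emph{approximate} EPR pairs and \emph{approximate} Pauli measurements, so the (enhanced) Gowers--Hatami rounding must be interleaved with the simulator argument so that the small honest-behavior deviation it introduces is absorbed into the statistical error rather than breaking soundness or the simulation.
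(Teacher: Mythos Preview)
Your plan is correct and matches the paper's approach: combine a self-test for $n$ EPR pairs that uses only \emph{constant-weight} Pauli questions (the paper's low-weight Pauli braiding test) with a Grilo-style energy test on the history state of a simulatable verification circuit, so that every message either party can be asked to produce depends only on a constant-size marginal of the encoded witness; the simulator then samples the uniform teleportation keys and computes the relevant constant-size reduced density matrix exactly as you describe. Two small remarks: (i) the honest state in the paper is already the history state of a \emph{simulatable} verification circuit (so the encoding is baked into the circuit-to-Hamiltonian step rather than applied afterward, and one must check the resulting Hamiltonian stays $XZ$-type---this is the content of \Cref{thm:simulation of history states}); (ii) your ``further subtlety'' about interleaving Gowers--Hatami with the simulator is a red herring---the enhanced Gowers--Hatami theorem is used only in the \emph{soundness} analysis to round a near-optimal cheating strategy to the canonical one, whereas zero-knowledge is proved against the \emph{exact} honest strategy, so no approximation error enters the simulation at all.
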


\subsubsection{Delegated quantum computation in the OTS model}
As our final conceptual contribution, we show how our OTS proof system for $\QMA$ (\Cref{thm:InformalQMA}) can be adapted to the setting of delegated quantum computation (see \Cref{sec:intro-delegated}); note that the ZK property as described above also extends to the delegated quantum computation paradigm.

\begin{thm}(Restated as \Cref{thm:BQP Delegation}) \label{thm:InformalDelegated}
$\BQP$ has a relativistic delegated quantum computation protocol in the OTS model with the statistical zero-knowledge property. 
\end{thm}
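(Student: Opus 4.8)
The plan is to obtain \Cref{thm:InformalDelegated} by ``scaling down'' the OTS proof system for $\QMA$ from \Cref{thm:InformalQMA,thm:InformalZK}, in the standard way but being careful about the two features that make the OTS model non-trivial: the single simultaneous round and the instance-independence of the device. First I would recall that $\BQP \subseteq \QMA$ with an \emph{empty} witness, so that the honest $\QMA$-prover in our protocol can be replaced by a quantum polynomial-time prover that, on input the $\BQP$ instance $x$ and the circuit $C$, simply prepares the history state of $C$ on its own (the history state of a $\BQP$ computation is efficiently preparable given $C$). Concretely, the client chooses the size parameter $n$ large enough to accommodate the circuit $C$ it wishes to delegate, receives the OTS device $D$ (which, crucially, only depends on $n$ and not on $C$), and then runs exactly the OTS verification protocol of \Cref{thm:QMA in Skep} with the $\QMA$-verifier's circuit instantiated to the local-Hamiltonian instance associated to $(C,x)$ via the Kitaev-style circuit-to-Hamiltonian reduction. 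Since that reduction is classical-polynomial-time, the fully classical client can carry it out; completeness follows because the efficient prover can prepare the history state; soundness follows from soundness of the $\QMA$ proof system against two unbounded provers, which in particular holds against a computationally bounded prover. To get a genuine delegation protocol (i.e.\ the client should \emph{learn the output} $C(x)$, not just accept/reject), I would use that $\BQP$ is closed under complement: run the protocol for the instance ``$C$ accepts'' and conclude $C(x)=1$ or $C(x)=0$ according to whether the verifier accepts, with the standard amplification to push the error below any constant (and then, if desired, below $2^{-\mathrm{poly}}$ by sequential or parallel repetition, noting that each repetition uses a fresh OTS device of the same size $n$, which is permitted).

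Second, I would argue that the statistical zero-knowledge property carries over. The simulator from \Cref{thm:QMA in Skep} reproduces the classical transcript of the verifier's interaction on a yes-instance in classical polynomial time; when we scale down to $\BQP$, a ``yes-instance'' is precisely a pair $(C,x)$ with $C(x)=1$, which is exactly the regime in which delegation is meant to be zero-knowledge (we are hiding the prover's intermediate quantum state / any auxiliary information, while the output bit is the thing being delegated and hence is allowed to be revealed). So I would invoke the $\QMA$ simulator verbatim on the local-Hamiltonian instance coming from $(C,x)$; the only thing to check is that the reduction $(C,x)\mapsto$ (Hamiltonian instance) is efficiently computable by the simulator too, which it is. One subtlety worth spelling out: the simulatable-code machinery underlying the ZK proof (see \Cref{sec:intro-sim-codes}) must still apply when the witness is the efficiently-preparable history state rather than an arbitrary $\QMA$ witness — but this is immediate, since a history state is a fortiori a valid witness for the associated local Hamiltonian, and the encoding into a simulatable code is applied to it exactly as before.

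The main obstacle I expect is \emph{not} conceptual but a matter of bookkeeping in the round/interaction structure: I must make sure that when the honest prover is downgraded to quantum polynomial time, it can still perform all of its prescribed actions within the single round — in particular preparing the (encoded) history state, performing the teleportation-style measurements that feed the device, and reporting the classical outcomes — using only $\mathrm{poly}(n)$ quantum resources. Since the history state of a polynomial-size circuit is preparable by a polynomial-size quantum circuit, and the self-test of $n$ EPR pairs uses only constant-sized Pauli measurements (as emphasized in the abstract and developed in the body), every honest-prover operation is polynomial-time; hence there is no obstruction. The remaining point to verify carefully is that the OTS device's honest behaviour is genuinely $(C,x)$-independent in the scaled-down setting — but this is inherited directly from the $\QMA$ construction, where the device only ever performs Pauli measurements on its half of $\ket{\psi}_n$ according to the verifier's (instance-dependent) query, and the \emph{device's repertoire} of available measurements depends only on $n$. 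Assembling these observations gives \Cref{thm:BQP Delegation}; the formal proof in \Cref{sec:delegation} will simply instantiate \Cref{thm:QMA in Skep} at the history-state witness and add the complement-and-amplify wrapper.
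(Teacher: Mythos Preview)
Your proposal is correct and follows essentially the same route as the paper: view the $\BQP$-complete circuit problem as a $\QMA$ instance with an efficiently preparable (history/ground) state, instantiate the OTS proof system of \Cref{thm:QMA in Skep} on it, and inherit completeness, soundness, the instance-independence of the device, and statistical ZK directly. The only superfluous addition is the ``fresh OTS device per repetition'' amplification remark---the paper already builds a constant gap into $\widehat{\mcG}_x$ via threshold parallel repetition, so no further repetition (and hence no multiple devices, which would sit awkwardly with the single-device OTS model) is needed.
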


 We believe that this  result is of particular impact since it addresses a new model for delegated quantum computation that has distinct conceptual benefits over existing delegation protocols:

\begin{enumerate}
\item Comparing our delegated quantum computation protocol to single-server protocols, we note that we make an extra assumption of an off-the-shelf, isolated device. However, the benefits are:
    \begin{enumerate}
\item     We achieve soundness against an unbounded server; existing single-server, classical-client delegation protocols require computational assumptions~\cite{Mah18}.
    \item The client does not trust \emph{any} quantum device at all; existing single-server, statistically secure protocols for delegated quantum computation require trust in a small quantum preparation device~\cite{Bro18,FK17}.
 \end{enumerate}
\item Comparing our delegated quantum computation protocol to existing multiple-server ($\MIP^*$) protocols, we note that:
 \begin{enumerate}
 \item Our approach only requires a single high-performance quantum server that handles the bulk of the computations; with a secondary efficient and generic device which need not even be given a description of the problem instance. This has practical advantages, especially when we consider that the off-the-shelf device can be acquired ahead of the verification stage (\Cref{fig:Delegated quantum computation}).
     \item Our approach is a single round, which means that relativistic means to enforce isolation are possible. The only other known relativistic protocol requires full quantum computational power for both servers and is not ZK~\cite{Gri19}.
 \end{enumerate}

\end{enumerate}

\subsection{Proof approach and technical contributions}
\label{sec:technical-contrib}

We now give an overview of the approach toward proving our main results, including an informal statement of our main technical contributions.

\subsubsection{Obstructions to the straightforward approach}

In delegating quantum computations in two- or multi-server models, the classical verifier is able to command quantum provers~\cite{RUV13} using two intertwined tests: (1) a computational test, with acceptance probability based on the required quantum computation (\emph{e.g.}, computation-by-teleportation~\cite{RUV13} or energy checking of a local Hamiltonian \cite{Ji17,Gri19}); (2) a rigidity test, ensuring provers' actions stay within a known range (\emph{e.g.}, self-test via CHSH game or Pauli braiding test). In order to establish the  ZK property, we must show that responses from the provers can be simulated using a classical probabilistic polynomial-time (PPT) device. Generally, approaches used for the rigidity test can be simulated in a straightforward way, hence the difficulty in obtaining ZK in this setting is in simulating the energy test. Furthermore, even if both tests are simulatable in isolation, this does not guarantee the ZK property since a malicious verifier may form question pairs  emanating from different tests, during  a single round.

Grilo~\cite{Gri19} presents a game $\mcG(H)$ determined by an ``XZ-type"\footnote{These are Hamiltonians where each local term $H_i$ is a real linear combination of tensor products of the Pauli-$X$ and Pauli-$Z$ operators.} Hamiltonian $H$. Honest provers for this game share suitably many EPR pairs, and one prover privately holds a ground state for $H$. The game $\mcG(H)$ combines an energy test with the Pauli braiding test~\cite{NV17,Vid18}. During the energy test, one prover reports measurement results of a randomly chosen term $H_i$ on their side of EPR pairs, and the other provides teleportation keys from a Bell basis measurement on the other EPR pairs and the ground state. Combining the energy test with the Pauli braiding test allows the verifier to ensure that provers share $n$ EPR pairs and that the required Pauli-X/Pauli-Z measurements are performed when measuring the local term~$H_i$.

The straightforward approach to obtaining a two-prover ZK proof system would be to combine recent results on simulatable codes in order to make the measurement results in Grilo's energy test simulatable. More specifically, one could apply the well-known circuit-to-Hamiltonian construction using the family of simulatable verification circuits given in~\cite{BG22}. Given such a circuit $V$, it is shown that local measurements on the ground state of the corresponding Hamiltonian $H_V$ are simulatable and thus this approach  would make the results of the energy test simulatable. Unfortunately, this approach fails for two technical reasons.

\paragraph{The choice of encoding.} Firstly, one cannot employ previously-known self-testing techniques to show the players perform the required measurements on the simulatable ground states given in~\cite{BG22}. On the one hand, previously-studied single-round self-testing techniques can only be used to show the players perform Pauli-$X$, and Pauli-$Z$ measurements. On the other hand, the choice of physical gates used by Broadbent and Grilo during the encoding of logic gates may result in a local Hamiltonian that is not of $XZ$-type and thus local terms $H_i$ may require measurements that have no known self-test.

\paragraph{The size of the measurement.} The second obstruction arises from the fact that existing rigidity tests in this setting require both players to make large-sized measurements on their shared state. 
 These large measurements can provide an avenue for attack by a malicious verifier which compromises the zero-knowledge property. In particular, since the Pauli braiding test allows for requests for measurements on all qubits, a malicious verifier may indicate to one player that an energy test is being played and simultaneously request Pauli-$X$ and \mbox{Pauli-$Z$} measurements on a large number of qubits. Such a measurement result cannot be simulated using simulatable codes, which only protect against constant-sized measurements, and thus this compromises zero-knowledge.

\subsubsection{Overview of proof and technical results}\label{sec:intro-overview-technical}

In order to correct for an appropriate choice of encoding, we prove that one can re-instantiate the verification circuit given by Broadbent and Grilo using an approach to simulatable codes given in \cite{GSY19}. This change allows us to encode logical gates of the verification circuit given by Broadbent and Grilo using a different set of physical gates and consequently, we show that the local Hamiltonian corresponding to the circuit is of $XZ$-type, while preserving simulatability.

\begin{thm}[Informal version of \Cref{thm:simulation of history states}]
    For any language $L=(L_{yes},L_{no})$ in $\QMA$, there is a family of verification circuits $V_{x}$ satisfying (1) the circuit-to-Hamiltonian construction applied to $V_x$  produces a Hamiltonian $H_x$ which is of $XZ$-type, and (2) if $x \in L_{yes}$ there exists a polynomial-time algorithm that can approximate the reduced density matrix obtained by tracing out all but 6 qubits of the ground state of $H_x$.
\end{thm}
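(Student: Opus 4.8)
The plan is to take the simulatable verification circuits of Broadbent and Grilo~\cite{BG22} and \emph{recompile} them so that two things hold at once: every physical gate is drawn from a universal set whose Kitaev propagation terms are real linear combinations of tensor products of Pauli-$X$ and Pauli-$Z$ operators, and the underlying simulatable encoding is the one of Grilo, Slofstra and Yuen~\cite{GSY19}, which is robust enough to survive the recompilation. Since the recompiled circuit decides exactly the same predicate as the original $\QMA$ verifier $Q_x$, completeness and soundness are inherited for free, so all the work is in (1) showing $H_x$ is of $XZ$-type and (2) showing that small reduced density matrices of its history state are computable in classical polynomial time.

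For~(1), I would fix a \emph{real} universal gate set, say $\mcG=\{\Had,\ \text{Toffoli},\ \CNOT,\ \CZ,\ X,\ Z\}$, supplemented by standard-basis measurements and classically-controlled Paulis (both of which I coherentize into $\CNOT$ and Toffoli gates onto fresh ancillas, so the effective circuit uses only $\mcG$). Each $U\in\mcG$ is a real matrix that expands as a real linear combination of tensor products of $\Id,X,Z$: for instance $\Had=(X+Z)/\sqrt{2}$ and $\CNOT=\tfrac12(\Id+Z)\otimes\Id+\tfrac12(\Id-Z)\otimes X$, and Toffoli and $\CZ$ are treated identically using $\ketbra{b}{b}=\tfrac12(\Id+(-1)^bZ)$. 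Running the circuit-to-Hamiltonian construction with a unary (domain-wall) clock yields $H_x=H_{\mathrm{in}}+H_{\mathrm{out}}+H_{\mathrm{clock}}+\sum_tH_{\mathrm{prop},t}$ where $H_{\mathrm{in}},H_{\mathrm{out}}$ are single-qubit $Z$-penalties times clock projectors, $H_{\mathrm{clock}}$ is a sum of two-qubit $Z$-terms on the clock, and $H_{\mathrm{prop},t}=\tfrac12\bigl(\Id\otimes\ketbra{t}{t}-U_t\otimes\ketbra{t+1}{t}-U_t^{\dagger}\otimes\ketbra{t}{t+1}+\Id\otimes\ketbra{t+1}{t+1}\bigr)$. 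Substituting the expansions of $U_t$ and of the clock transition operators $\ketbra{t}{t+1}$ (which act on a constant number of clock qubits by real combinations of $X$ and $Z$) makes every term of $H_x$ a real linear combination of tensor products of $X$ and $Z$; hence $H_x$ is of $XZ$-type. A short count shows each term acts on at most $6$ qubits --- a Toffoli touches $3$ data qubits and a domain-wall transition touches up to $3$ clock qubits --- which pins down the constant appearing in part~(2).

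For~(2), I would use the defining property of the~\cite{GSY19} simulatable encoding: a classical polynomial-time algorithm that, given a constant-size set $S$ of physical qubits, outputs the reduced density matrix on $S$ of an encoded state, independently of the logical content. The circuit $V_x$ encodes the purported witness with this code, simulates $Q_x$ gate-by-gate on the codeword with each logical gate of $Q_x$ compiled into a bounded-size $\mcG$-circuit, and then measures, arranged so that at every time step $t$ the intermediate state $U_t\cdots U_1\ket{\psi_{\mathrm{in}}}$ is a (possibly re-encoded) simulatable codeword tensored with a poly-time-describable ancilla workspace; hence all of its $O(1)$-qubit marginals are poly-time computable. Since the unary clock states are pairwise orthogonal and differ only on consecutive positions, the $6$-qubit reduced density matrix of the history state $\ket{\psi_{\mathrm{hist}}}=\tfrac{1}{\sqrt{T+1}}\sum_{t=0}^{T}\bigl(U_t\cdots U_1\ket{\psi_{\mathrm{in}}}\bigr)\otimes\ket{t}$ is a sum of polynomially many terms, each being either a marginal of one intermediate state or a cross term between two of them at times $O(1)$ apart, and each is poly-time computable from an $O(1)$-qubit marginal of an intermediate state by the previous observation. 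Summing these polynomially many contributions yields the claimed polynomial-time approximation to the $6$-qubit marginals of the ground state on yes-instances.

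The main obstacle is the joint compatibility demanded by~(2): one must re-run the circuit-design argument of~\cite{BG22} simultaneously with the~\cite{GSY19} code and the restricted gate set $\mcG$, and verify that compiling each logical gate of $Q_x$ into $XZ$-type physical gates never breaks the structural invariant (``simulatable codeword $\otimes$ describable workspace'') that makes intermediate marginals efficiently computable --- a careless compilation could entangle physical qubits in a way that makes a $6$-qubit view depend on globally correlated data. So the delicate point is not any single requirement --- simulatable encoding, $XZ$-type physical gates, per-term locality $\le 6$, and a valid $\QMA$ verifier --- but establishing all of them together.
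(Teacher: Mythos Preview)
Your proposal is correct and follows essentially the same approach as the paper: recompile the Broadbent--Grilo simulatable verifier over the real gate set $\{\Had,\Lambda(X),\Lambda^2(X)\}$ using the Grilo--Slofstra--Yuen code, so that the propagation terms expand into real $X/Z$ tensors while simulatability of constant-size marginals is preserved. The only remark worth making is that the ``delicate point'' you flag at the end --- whether compiling each logical gate into $XZ$-type physical gates preserves the simulatable-codeword-plus-describable-workspace invariant --- is precisely what \Cref{thm:sim-codes} (Theorem~6 of~\cite{GSY19}) guarantees: it asserts the existence of an $s$-simulatable encoding for each gate in $\{\Had,\Lambda(X),\Lambda^2(X)\}$ using physical gates from that same set, so once you cite it the compatibility is automatic rather than something to be re-verified by hand.
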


In order to overcome the large measurement problem\footnote{In \Cref{subsec:OtherTests} we discuss other potential approaches to tackling the large measurement problem.}, we introduce a new self-test called the \emph{low-weight Pauli braiding test} (LWPBT) which can self-test the low-weight tensor products of Pauli measurements and $n$~EPR pairs but only requires the players to make measurements on a constant number of qubits.

\begin{thm}[Informal version of \Cref{rigidity}]\label{thminformal:rigidity}
The low-weight Pauli braiding test can self-test for $n$~EPR pairs and $6$-qubit Pauli measurements. This self-test is robust in the sense that any $\epsilon$-perfect strategy must be $poly(n)\sqrt{\epsilon}$ close to the canonical strategy.
\end{thm}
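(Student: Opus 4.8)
The plan is to follow the high-level template of the Pauli braiding test of Natarajan and Vidick~\cite{NV17,Vid18} --- a linearity test, an anticommutation test, and a consistency test --- but to replace every question that would ask a player to measure a high-weight Pauli string by a bank of \emph{constant-weight} questions, and to compensate for the lost ``global'' information by arranging those low-weight questions on an overlapping family of small blocks whose union covers all $n$ EPR pairs and whose intersection graph is connected with small diameter and bounded degree. Concretely, I would cover $[n]$ by $O(n)$ overlapping blocks $B_1,\dots,B_\ell$ of size $O(1)$ with $B_j\cap B_{j+1}\neq\emptyset$, and on each block play a constant-size rigidity game that robustly self-tests a constant number of EPR pairs together with the $\pm1$ Pauli observables on them; for the $XZ$-type application it is convenient to take the Magic Square game~\cite{Mer90,Per90}, whose ideal measurements are real, although the CHSH game~\cite{CHSH69} would also do. The low-weight linearity and anticommutation sub-tests then range only over Pauli pairs $\sigma(a),\sigma(b)$ with Hamming weights $|a|,|b|,|a+b|\le 6$ --- and over the analogous mixed $X/Z$ strings --- so that every requested measurement has weight at most $6$ (the bound $6$ being dictated by the $6$-local $XZ$-type Hamiltonian $H_x$ produced in \Cref{thm:simulation of history states}, so that each term of the energy test is itself such a measurement).

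\paragraph{Key steps.} First I would \emph{round}: given an $\epsilon$-perfect strategy, standard regularization (symmetrizing the strategy over the block structure; replacing each approximate measurement by a nearby genuine projective/$\pm1$-observable version) produces, at cost $O(\sqrt\epsilon)$, a strategy in which each player answers question $q$ with an actual $\pm1$ observable $O^q$ and all consistency tests hold up to $O(\sqrt\epsilon)$ in the state norm $\|X\|_\psi=\sqrt{\bra\psi X^*X\ket\psi}$. Next I would invoke the known robust self-test for the block game on each $B_j$, obtaining a local isometry $\Phi^{(j)}$ on the registers touched by $B_j$ that maps the reduced strategy there to the canonical one (EPR pairs plus Paulis) up to $O(\sqrt\epsilon)$. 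Then comes the \emph{gluing}: the consistency tests on the overlaps $B_j\cap B_{j+1}$ force $\Phi^{(j)}$ and $\Phi^{(j+1)}$ to agree on the shared qubits up to $O(\sqrt\epsilon)$, and propagating along the intersection graph assembles the $\Phi^{(j)}$ into a single local isometry $\Phi=\Phi_A\otimes\Phi_B$ with $\|(\Phi_A\otimes\Phi_B)\ket\psi-\eprn\otimes\ket{\mathrm{aux}}\|\le\mathrm{poly}(n)\sqrt\epsilon$, the polynomial loss arising from summing and amplifying $O(\sqrt\epsilon)$ over the $O(n)$ blocks. Finally, the low-weight linearity test forces, for each player, $\{O^{X(a)}:|a|\le 6\}$ (and its $Z$ and mixed analogues) to behave as an approximate partial representation of the relevant subset of $\Z_2^n$, while the anticommutation test fixes the braiding relations $O^{X(a)}O^{Z(b)}=(-1)^{\langle a,b\rangle}O^{Z(b)}O^{X(a)}$ on overlapping supports up to $\mathrm{poly}(n)\sqrt\epsilon$; feeding this into our enhanced version of the Gowers--Hatami stability theorem~\cite{GH17} yields exact Pauli observables $O(1)$-close to the $O^q$, and absorbing the resulting small isometry into $\Phi$ shows that under $\Phi$ every requested weight-$\le 6$ observable is sent to the matching $n$-qubit Pauli up to $\mathrm{poly}(n)\sqrt\epsilon$ --- which is precisely the claimed closeness to the canonical strategy.

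\paragraph{Main obstacle.} The crux is the gluing step: controlling the growth of the $O(\sqrt\epsilon)$ local errors across an $O(n)$-fold reconstruction so that the final bound stays $\mathrm{poly}(n)\sqrt\epsilon$ rather than degrading super-polynomially. This needs (i) an overlap structure whose intersection graph has bounded degree and small diameter, so that each qubit is pinned by $O(1)$ blocks and the chains of comparisons used to build a common global frame are short, and (ii) a careful, genuinely non-commutative triangle-inequality bookkeeping in the state norm, since the local isometries act on overlapping tensor factors and need not commute exactly. A secondary subtlety, and the place where the design of the block pattern (not merely its connectivity) does real work, is that no player ever measures a Pauli of weight larger than $6$, so the global $n$-qubit tensor-product structure is only ever accessed through \emph{products} of constant-weight observables; one must verify that the low-weight braiding relations actually tested already rigidify this structure, i.e.\ that the constant-weight local views together with their pairwise consistency genuinely determine the $n$-fold EPR state and the global Pauli frame.
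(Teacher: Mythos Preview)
Your proposal follows a block-decomposition-and-gluing template, which is genuinely different from the paper's route and leaves a real gap precisely where you flag the ``main obstacle.''

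The paper does not introduce any covering $B_1,\dots,B_\ell$ and never attempts to glue per-block isometries. It proceeds purely group-theoretically. From an $\epsilon$-perfect strategy it extracts only the single-site observables $\tau_X^A(e_i),\tau_Z^A(e_i)$ and defines one function $f_A:H(n)\to\msU(\mcH_A)$ on the entire Weyl--Heisenberg group by sending each normal-form word $J^\alpha X_{i_1}\cdots X_{i_k}Z_{j_1}\cdots Z_{j_l}$ to the corresponding product of observables. The low-weight linearity and anti-commutation sub-tests certify only $\mathrm{poly}(n)$ relations, so the substantive work (\Cref{prop:4conditions} and the proposition following it) is to show that $f_A$ is nonetheless a $(\mathrm{poly}(n)\sqrt\epsilon,\rho_A)$-homomorphism on \emph{all} of $H(n)$: one bounds the number of relation-applications needed to rewrite $f_A(g)f_A(h)$ into $f_A(gh)$ via the normal form, and uses a ``swap to the other side'' device (part~(4)--(5) of \Cref{prop:4conditions}) to commute monomials past error terms in the $\rho$-norm rather than in operator norm. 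The enhanced Gowers--Hatami theorem is then applied \emph{once}, globally, to $f_A$; the hypothesis $f_A(J)=-\Id$ kills the one-dimensional irreducibles of $H(n)$ so that the resulting exact representation is a multiple of the Pauli representation $\sigma$, yielding a single isometry $V_A$ directly (\Cref{thm:isometries}). The state conclusion then comes from \Cref{lemma:epr} using the certified single-site correlations, again with no gluing.

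Your gluing step, by contrast, is not merely unfinished bookkeeping. The local isometries $\Phi^{(j)}:\mcH_A\to(\C^2)^{|B_j|}\otimes\mcK_j$ that a per-block self-test produces all act on the same undivided $\mcH_A$, with distinct auxiliary spaces $\mcK_j$, and there is no general mechanism for ``assembling'' them into one $\Phi_A:\mcH_A\to(\C^2)^n\otimes\mcK$ by chaining overlap comparisons; consistency on $B_j\cap B_{j+1}$ constrains the observables, not the isometries. Proofs of constant-weight self-tests that do succeed (e.g.\ \cite{CRSV18}) in fact avoid constructing per-block isometries at all: they certify per-site observables and then build one global approximate representation and invoke stability once---which is exactly the paper's strategy. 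If you rewrite your plan that way, the ``main obstacle'' evaporates, the polynomial loss comes from the rewriting length in $H(n)$ rather than from chaining isometries, and you converge on the paper's argument.
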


We use a group-theoretical approach to prove the rigidity of the LWPBT. Here, we briefly describe the challenges associated with achieving this result. Our proof involves approximate representations and approximate homomorphisms for groups. Informally, given a group presentation $\ang{S:R}$ for a group $G$, a map $f:S \rightarrow \msU$ from the generating set $S$ to the set of unitary operators $\msU$ is called an \emph{$\epsilon$-representation} if it respects the relations in $R$, up to error $\epsilon$. The more general notion of an \emph{$\epsilon$-homomorphism} refers to a map $f: G \rightarrow \msU$ that respects the multiplication of elements in the group, up to error $\epsilon$. When $\epsilon=0$, these two notions coincide. The well-known Gowers-Hatami theorem~\cite{GH17}  states that every approximate homomorphism of a finite group must be close to an exact homomorphism.

Implicitly in \cite{Vid18} for the rigidity of the $n$-qubit Pauli braiding test, a presentation is taken for the $n$-qubit Weyl-Heisenberg group which has $exp(n)$ generators subject to $exp(n)$ relations. It is then shown that any $\epsilon$-perfect strategy forms an $O(\sqrt{\epsilon})$-approximate representation, and given the large presentation one can straightforwardly show that this approximate representation is indeed an $O(\sqrt{\epsilon})$-approximate homomorphism.

The analysis of the LWPBT requires a presentation for the $n$-qubit Weyl-Heisenberg group with only $poly(n)$ generators and $poly(n)$ relations. Having fewer generators and relations makes the rigidity analysis more technical. In particular, to show that the approximate representation given by a near-perfect strategy is indeed an approximate homomorphism, we need to specify a normal form of the Weyl-Heisenberg group with respect to this ``small" presentation and find a rewriting procedure that takes arbitrary words of the group into its normal form. This allows us to track the precise error bounds on the approximate homomorphism we obtain and conclude that any $\epsilon$-perfect strategy forms a $poly(n)\sqrt{\epsilon}$-approximate homomorphism.

In order to finally round an approximate homomorphism to an exact homomorphism, we make further improvements to the state-of-the-art understanding of the stability of finite groups. In particular, in \Cref{thm:GH} we state and prove an enhanced version of the Gowers-Hatami theorem that can be used for the stability analysis of the Weyl-Heisenberg group.
Aside from our use case, this new version can simplify previous approaches to self-testing. In brief, when constructing an exact homomorphism from an approximate homomorphism, our enhanced version allows one to disregard irrelevant  sub-representations without truncating the isometry given by the Gower-Hatami theorem. This improvement can help clarify some subtle issues since, in general, truncation of an isometry may fail to be an isometry.

\begin{thm}[Informal version of \Cref{thm:GH}]
    If $f$ is an approximate homomorphism of a finite group $G$ on some Hilbert space $\mcH$, then there is a Hilbert space $\mcK$, an isometry $V:\mcH\arr\mcK$, and an exact homomorphism $\phi$ of $G$ on $\mcK$ such that $V^*\phi V$ is close to $f$. If in addition, $f$ restricts to a representation on a given subgroup $S$ of $G$, and an irreducible representation $\xi$ has zero Fourier coefficient in $f|_S$, then $\xi$ has zero support in $\phi$.
\end{thm}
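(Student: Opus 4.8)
The plan is to carry out the Gowers--Hatami construction with the input data pre-processed so that the unwanted $S$-subrepresentations are never introduced in the first place. Recall the construction: for each $\rho\in\Irr(G)$ of dimension $d_\rho$ form the Fourier coefficient $\hat f(\rho)=\mathbb{E}_{x\in G} f(x)\otimes\overline{\rho(x)}\in\mathcal B(\mcH\otimes\mathbb{C}^{d_\rho})$, set $\mcK_0=\bigoplus_{\rho}\mathbb{C}^{d_\rho}\otimes\mcH\otimes\mathbb{C}^{d_\rho}$ with $\phi_0=\bigoplus_\rho\overline\rho\otimes\Id_{\mcH\otimes\mathbb{C}^{d_\rho}}$, and define $\hat V_0\colon\mcH\to\mcK_0$ by $\hat V_0\ket{\psi}=\bigoplus_\rho\sqrt{d_\rho}\sum_i\ket{i}\otimes\hat f(\rho)(\ket{\psi}\otimes\ket{i})$. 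The two standard estimates --- $\|\hat V_0^*\hat V_0-\Id\|$ small, and $\|\phi_0(g)\hat V_0-\hat V_0 f(g)\|$ small (in the averaged sense matching the approximate-homomorphism hypothesis) --- follow from Schur orthogonality together with that hypothesis; I would reproduce them, since the enhancement is interleaved with them. The exact homomorphism $\phi$ will be a restriction of $\phi_0$ to an invariant subspace, and the isometry $V$ the polar fix-up $V=\hat V_0(\hat V_0^*\hat V_0)^{-1/2}$, which is a genuine isometry because $\hat V_0^*\hat V_0$ is close to $\Id$, hence positive and invertible, and satisfies $\|V-\hat V_0\|$ small so that $V^*\phi_0(g)V$ is still close to $f(g)$.

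The new ingredient exploits the hypothesis that $f|_S$ is an \emph{exact} representation. First I would ensure that $f$ is exactly right $S$-equivariant, i.e.\ $f(xs)=f(x)f(s)$ for all $x\in G$, $s\in S$: in the Weyl--Heisenberg application $S$ is the centre and this already holds, and in general one replaces $f$ by its $S$-symmetrization $\tilde f(x)=\mathbb{E}_{s\in S} f(xs)f(s)^{-1}$. Using only that $f|_S$ is a homomorphism one checks that $\tilde f$ agrees with $f$ on $S$, that $\tilde f(xs)=\tilde f(x)f(s)$ holds exactly, and that $\tilde f$ is still an approximate homomorphism with $\mathbb{E}_x\|\tilde f(x)-f(x)\|$ controlled by the approximate-homomorphism error (so that $V^*\phi V$ remains close to $f$ at the end). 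Granting exact right $S$-equivariance, a one-line substitution $y=xs$ in $\hat f(\rho)\bigl(f(s)\otimes\Id\bigr)=\mathbb{E}_x f(xs)\otimes\overline{\rho(x)}$ gives $\hat f(\rho)\bigl(f(s)\otimes\Id\bigr)=\hat f(\rho)\bigl(\Id\otimes\overline{\rho(s^{-1})}\bigr)$ exactly, and unwinding the definition of $\hat V_0$ this upgrades the approximate intertwining to the \emph{exact} identity $\phi_0(s)\hat V_0=\hat V_0 f(s)$ for every $s\in S$.

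From exact intertwining on $S$ the support statement is immediate. Applying $\mathbb{E}_{s\in S}\overline{\xi(s)}(\,\cdot\,)$ to $\phi_0(s)\hat V_0=\hat V_0 f(s)$ yields $\Pi^{\phi_0}_\xi\hat V_0=\hat V_0\,\Pi^{f|_S}_\xi$, where $\Pi^{\phi_0}_\xi$ and $\Pi^{f|_S}_\xi$ are the projections onto the $\xi$-isotypic subspaces of $\mcK_0$ (under $\phi_0|_S$) and of $\mcH$ (under $f|_S$). If $\xi$ has zero Fourier coefficient in $f|_S$ then $\Pi^{f|_S}_\xi=0$, hence $\Pi^{\phi_0}_\xi\hat V_0=0$: the image of $\hat V_0$ avoids the $\xi$-isotypic part of $\mcK_0$. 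Since $S$ is central, $\Pi^{\phi_0}_\xi$ commutes with all of $\phi_0(G)$, so $\mcK:=\bigl(\Id-\sum_{\xi\ \text{bad}}\Pi^{\phi_0}_\xi\bigr)\mcK_0$ is $\phi_0(G)$-invariant; set $\phi:=\phi_0|_{\mcK}$, a genuine representation of $G$ whose restriction to $S$ contains no bad $\xi$. By construction $\hat V_0$ already maps into $\mcK$, so no truncation of the isometry is performed. Finally $[\hat V_0^*\hat V_0, f(s)]=0$ for $s\in S$ --- because $f(s)^*\hat V_0^*\hat V_0 f(s)=(\hat V_0 f(s))^*(\hat V_0 f(s))=\hat V_0^*\phi_0(s)^*\phi_0(s)\hat V_0=\hat V_0^*\hat V_0$ --- so the polar fix-up $V=\hat V_0(\hat V_0^*\hat V_0)^{-1/2}$ still maps into $\mcK$ and still satisfies $\phi(s)V=Vf(s)$ (indeed $V^*\phi(s)V=f(s)$) exactly, while being an isometry with $V^*\phi(g)V$ close to $f(g)$ for all $g$. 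This proves the theorem.

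I expect the main obstacle to be the quantitative error tracking: checking that the $S$-symmetrization (or, in the application, the equivariance already present) is compatible with the precise averaged notion of ``$\epsilon$-approximate homomorphism'' in force, and carrying the $\sqrt{\epsilon}$ rate --- together with any $[G:S]$-type factors introduced by symmetrizing --- through the Parseval estimate, the approximate intertwining, and the polar fix-up, so that the final bound $\|V^*\phi V-f\|$ is of the claimed order. A secondary but fiddly point is the representation-theoretic bookkeeping (placement of the conjugates $\overline\rho$ versus $\rho$ and of inverses, and the exact isotypic-projection identities) needed to turn exact right $S$-equivariance into the operator identity $\phi_0(s)\hat V_0=\hat V_0 f(s)$; one should also note that the centrality of $S$ is what makes $\mcK$ invariant, and that for a general subgroup the conclusion needs the support of $f|_S$ to be stable under $G$-conjugation, which holds trivially in the central case relevant here.
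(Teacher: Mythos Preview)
Your proposal is correct in spirit but takes a more roundabout route than the paper, and in doing so you introduce two complications that the paper avoids entirely.

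The paper's key observation is cleaner than your isotypic-projection argument: from the (left) equivariance hypothesis $f(sg)=f(s)f(g)$ one gets $(f(s)\otimes\overline{\varphi(s)})\hat f(\varphi)=\hat f(\varphi)$ for every $s\in S$ and every $\varphi\in\Irr(G)$; averaging over $s\in S$ yields $\widehat{f|_S}(\varphi|_S)\cdot\hat f(\varphi)=\hat f(\varphi)$, so $\hat f(\varphi)=0$ whenever $\widehat{f|_S}(\varphi|_S)=0$. Thus the Gowers--Hatami map $\hat V_0$ already has its bad-$\varphi$ blocks equal to zero, and one simply \emph{defines} $\mathcal K=\bigoplus_{\text{good }\varphi}\mathbb C^{d_\varphi}\otimes\mathcal H\otimes\mathbb C^{d_\varphi}$ from the outset. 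Two consequences: (i) no centrality of $S$ is needed, since we are discarding whole $G$-irrep blocks rather than $S$-isotypic pieces---the paper's formal statement holds for arbitrary subgroups; (ii) no polar fix-up is needed, because $f$ is unitary-valued and hence $\hat V_0^*\hat V_0=\frac{1}{|G|}\sum_g f(g)^*f(g)=\Id$ \emph{exactly}, not just approximately. Your worry about tracking $\sqrt\epsilon$ through the polar correction, and about $[G:S]$-factors from symmetrization, therefore evaporates: the paper assumes left $S$-equivariance as a hypothesis (it holds on the nose in the Weyl--Heisenberg application, where $S=\langle J\rangle$) and obtains the exact-isometry conclusion and the clean bound $\|Vf(g)-\phi(g)V\|_\rho^2\le\epsilon$ directly. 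Your approach would also work for the central-subgroup application, but the paper's route is both simpler and strictly more general.
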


We use the above technical results to derive a modified version of~\cite{Gri19} by interleaving the following tests: (1) a computational test consisting of an energy test in which a simulatable witness uses low-weight Pauli-$X$ and Pauli-$Z$ measurements and, (2) a rigidity test consisting of the LWPBT. The result of this modified Grilo protocol gives a ZK OTS protocol with an inverse polynomial completeness-soundness gap. Finally, we apply a threshold parallel repetition theorem to the above protocol to amplify the completeness-soundness gap to be constant, thus demonstrating both~\Cref{thm:InformalQMA} and \Cref{thm:InformalZK}. We then show that the proof system is of a form that can be scaled down to yield a delegation protocol, yielding~\Cref{thm:InformalDelegated}.

\subsection{Open problems}\label{subsec:OpenProblems}

In this work, we show that the OTS model displays many interesting properties that are related to both the single- and multi-prover models for interactive proofs.  We believe that there is tremendous scope for further investigation of the model, and also that our novel techniques will find applications elsewhere.   We collect some open problems  below.

\textbf{Do both provers need to be all-powerful to recover $\MIP^*$?} Our results show that $\QMA \subset \OTS$.  We leave it as an open question to determine if the same can be said about larger complexity classes such as $\QIP$. Going further, perhaps one can characterize how powerful the untrusted set-up device needs to be in order to recover all of $\MIP^*$. More specifically, if we allow the untrusted device to make arbitrary measurements but still require independence from the input, would we be able to recover $\MIP^*$? Somewhat dual to this question would be to ask if one can prove a natural upper bound for our new model. One plausible candidate would be the class $\QIP$ but even this is not immediately straightforward if only for the reason that the off-the-shelf device may share arbitrary amounts of entanglement with the prover.

\textbf{Applications to proofs of quantum knowledge.} As an extension to the concept of ZK, in a \emph{proof of knowledge (PoK)}, the verifier further becomes convinced that the prover has ``knowledge'' of an accepting witness. This is formally treated by showing the existence of an efficient \emph{knowledge extractor} which outputs a witness given oracle access to the prover; this concept has recently been extended to the quantum setting \cite{BG22}, \cite{CVZ19arxiv}. We conjecture that the assurances given by self-testing provide a different approach to convincing the verifier that the prover(s) are in possession of an actual quantum witness; it may be possible to refine our approach to self-testing by showing the result for a more complete game. Namely, we leave it as an open question to find a 2-prover, 1-round protocol that self-tests for ground states of a local Hamiltonian, and to determine how this yields a new approach to a quantum PoK.

\textbf{OTS devices for delegated quantum computation.} We have shown that the OTS model is powerful enough for the delegated quantum computations of~$\BQP$;
while it seems unlikely that we can do away with the OTS device completely~\cite{ACGK19}, an open question is to determine the \emph{minimal} power required by this device (assuming $\BQP \neq \BPP$).

\textbf{Further uses and refinements of the LWPBT.} We believe the new LWPBT may be more widely applicable since it may be further generalized to provide a straightforward tradeoff of question/answer size for robustness in a self-test for EPR pairs and Pauli measurements. For example, one may analyze the same game as in our main result, but allow for measurements of size $\log(n)$ instead of constant. We leave for future work the study of this model and its consequences.

\textbf{Further Applications.} There are a variety of interesting applications and considerations made of interactive proof systems in the quantum setting aside from the ones that we studied.
Recent works have introduced the notion of state/unitary complexity classes \cite{RY22}. Informally, the approach is to study the ability of an efficient quantum verifier to output a target quantum state, given access to an untrusted quantum prover. Similar considerations can be made in our OTS model, in which a classical verifier may want to certify the existence of a target state on the register of the untrusted set-up device.

\subsection{Acknowledgements}
We thank Seyed Sajjad Nezhadi, Gregory Rosenthal, William Slofstra, Jalex Stark, and Henry Yuen for discussions on the model. We thank Alex Grilo for discussions on some of our proof techniques. We also thank Thomas Vidick for detailed discussions on the formulation and proof of \Cref{thm:GH}. We thank the anonymous reviewers for suggesting the use case of an OTS for a classical proof system.

This work was supported by the Mitacs Accelerate program IT24833 in collaboration with industry partner Agnostiq, online at \url{agnostiq.ai}. A.B.~is supported by the Air Force Office of Scientific Research under award number FA9550-20-1-0375,  NSERC, and the University of Ottawa’s Research Chairs program.

\subsection{Outline}

The remainder of this paper is organized as follows. We give needed preliminaries in \Cref{sec:prelims}. In \Cref{Sec:LWPBT}, we introduce our low-weight Pauli braiding test and we prove the rigidity properties of this test in \Cref{sec:rigidity}. \Cref{sec:rigidity} also contains the statement and proof of our enhanced version of the Gowers and Hatami stability result. In \Cref{sec:Game}, we provide a modified version of the Hamiltonian game first introduced by Grilo \cite{Gri19} and prove the completeness and soundness properties of this game. In \Cref{Sec:ZK}, we provide some background on simulatable codes and apply them to the game outlined in \Cref{sec:Game} in order to determine the OTS proof system for~$\QMA$. In \Cref{sec:SkepticalVerifierModel}, we introduce and formally define our OTS  model, and show how the technical contributions of the prior sections come together to show all three conceptual results.

\section{Preliminaries}\label{sec:prelims}
We introduce the notation and provide a background discussion on topics including representations theory of groups, non-local games, and zero-knowledge ${\MIP}^*$ protocols. We also define several relevant complexity classes.

\subsection{Notation}

We take $[n]$ to denote the set $\{1, \dots n \}$. Given two real valued functions $f,g:\R\rightarrow\R$, we write $f=O(g)$ (resp. $f=\Omega(g)$) if there exists a positive real number $M$ and an $x_0\in \R$ such that $\abs{f(x)}\leq Mg(x)$ (resp. $\abs{f(x)}\geq Mg(x)$) for all $x\geq x_0$. We call a function $f$ negligible, and write $f=\negl(n)$, if for all constants $c>0$ we have $f = O(n^{-c}).$ For two distributions $P$ and $Q$ on a finite set $\mathcal{X}$ the statistical differences of $P$ and $Q$ is given by $ \sum_{x\in\mathcal{X}} \abs{P(x)- Q(x)}$.

In this paper, all Hilbert spaces are finite-dimensional. Given a Hilbert space $\mcH$, we use $\msB(\mcH)$ to denote the set of bounded linear operators acting on $\mcH$, use $\msU(\mcH)$ to denote the group of unitary operators on $\mcH$, and use $\Id_{\mcH}$ to denote the identity operator on $\mcH$. Given an operator $A \in \msB(\mcH)$ we take $A^*$ to denote the adjoint operator (equivalently the conjugate transpose) and define the trace norm $\|A \|_{tr} := \mathrm{Tr}\sqrt{A^*A}$.

\subsection{Quantum information}
A quantum state $\rho$ on $\mcH$ is a positive operator in $\msB(\mcH)$ with $\Tr(\rho)=1$. It induces a semi-norm $\norm{A}_\rho:=\sqrt{\norm{A^*A\rho}}$ on $\msB(\mcH)$ which we call the $\rho$-norm. This norm is left unitarily invariant, meaning that $\norm{UA}_\rho=\norm{A}_\rho$ for all $U\in \msU(\mcH)$ and $A\in\msB(\mcH)$. Given two quantum states $\rho$ and $\sigma$ we define their trace distance $D(\rho, \sigma) = \frac{1}{2}\|\rho - \sigma \|_{tr} = \max_{P} \mathrm{Tr}(P(\rho - \sigma))$ where the max is taken over all projections $P \in \msB(\mcH)$.

We use $\ket{\epr}$ to denote the EPR pair in $\C^2\otimes\C^2$ and use $\ket{\eprn}$ to denote the $n$-qubit EPR pair. We also take $\sigma_I $, $\sigma_X$, and $\sigma_Z$ to denote the following Pauli operators:
\begin{equation}
\sigma_I = \begin{bmatrix}
        1 & 0\\ 0 & 1
    \end{bmatrix}, \,
    \sigma_X = \begin{bmatrix}
        0 & 1\\ 1 & 0
    \end{bmatrix}, \text{ and }
    \sigma_Z = \begin{bmatrix}
        1 & 0 \\ 0 & -1
    \end{bmatrix}.
\end{equation}
 For every $a \in\{0,1\}^n$ and $W \in \{ I, X, Z \}^n$, we use $\sigma_W(a)$ to denote the operator $\otimes_{i \in [n]} \sigma_{W_{i}}^{a_{i}}$ on $(\C^2)^{\otimes n}$ where $\sigma_I^0=\sigma_X^0=\sigma_Z^0=\sigma_I$. Definitions of these gates and other fundamental concepts from quantum computing can be found in \cite{NC00}.

\paragraph{Families of Quantum Circuits}

A \emph{unitary quantum circuit} is simply a unitary which can be written as a product of gates from some universal gate set $\mathcal{U}$. Unless otherwise specified we will assume the universal gate set is the following universal gate set $\lbrace H, \Lambda(X), \Lambda^{2}(X) \rbrace$, where $H$ is the Hadamard gate, $\Lambda(X)$ is the controlled $\sigma_X$ gate, and $\Lambda^{2}(X)$ is the Toffoli gate. A \emph{general quantum circuit} or simply a \emph{quantum circuit} is a unitary quantum circuit that can additionally apply non-unitary gates which, introduce qubits initialized in the $0$ state, trace out qubits, or measure qubits in the standard basis.

\begin{defn}[Polynomial-time uniform circuit family]
We say a family of quantum circuits $\{ Q_n \}_{n \in \mathbb{N}} $ is a polynomial-size family of quantum circuits if there exists polynomial $r$ such that $Q_n$ has size at most $r(n)$. A family of  quantum circuits $\{ Q_n \} $ is called \emph{polynomial-time uniform} family if there exists a polynomial time Turing machine that on input $1^n$ outputs a description of $Q_n$. In this case, the family will also be a polynomial-size family of quantum circuits.
\end{defn}

Given a quantum circuit $Q$, we denote its size (number of gates and number of wires) by~$|Q|$.  The task of delegating the computation of $Q$ is captured by the following promise problem:
\begin{defn}[Q-CIRCUIT]\label{defn:Q-circuitProblem}
The input is a quantum circuit $Q$ on $n$ qubits. The problem is to decide between the following two cases:
\begin{itemize}
\item \textbf{Yes.} $\lVert((|1\rangle\langle1|\otimes I_{n-1})Q|0^n\rangle\rVert^2 \geq 1-\gamma$
\item \textbf{No.}  $\lVert((|1\rangle\langle1|\otimes I_{n-1})Q|0^n\rangle\rVert^2 \leq \gamma$
\end{itemize}
when we are promised that one of the two cases holds.
\end{defn}

Problem in \Cref{defn:Q-circuitProblem} is known to be $\BQP$-complete for $1-2\gamma > \frac{1}{\mathrm{poly}(n)}$.

\subsection{Groups and representations}\label{subsec:approxrep}

In this paper, we work with groups and their (approximate) representations.
We use $1$ for the identity in groups, and use $[g,h]:=ghg^{-1}h^{-1}$ to denote the group commutator. A group is said to be \emph{finite} if it contains finitely many elements. A \emph{(unitary) representation} $\phi:G\rightarrow \msU(\mcH)$ of a group $G$ on a Hilbert space $H$ is a group homomorphism from $G$ to $\msU(\mcH)$. A subspace $\mcK\subset\mcH$ is said to be an \emph{invariant subspace of $\phi$} if $\phi(G)\mcK:=\{\phi(g)\ket{k}:g\in G,\ket{k}\in\mcK\}=\mcK$. A representation is \emph{irreducible} if it has no proper non-zero invariant subspace. Let $\phi_1:G\arr\msU(\mcH_1)$ and $\phi_2:G\arr\msU(\mcH_2)$ be two representations of a group $G$. We say $\phi_1$ and $\phi_2$ are \emph{unitarily equivalent} if there is a unitary $U:\mcH_1\arr \mcH_2$ such that $U\phi_1(g)U^*=\phi_2(g)$ for all $g\in G$. The \emph{direct sum} of $\phi_1$ and $\phi_2$, denoted $\phi_1\oplus\phi_2$, is a representation of $G$ on $H_1\oplus H_2$ sending $g\mapsto \phi_1(g)\oplus\phi_2(g)$. Maschke's theorem states that every representation $\phi$ of a finite group $G$ is unitarily equivalent to a direct sum of irreducible representations $\{\phi_i:1\leq i\leq k\}$ of $G$. In this case, we say $\bigoplus_{i=1}^k\phi_i$ is the \emph{irreducible decomposition} of $\phi$, and every $\phi_i$ is an \emph{irreducible component} of $\phi$.

For a finite group G, we use $\Irr(G)$ to denote the unique (up to unitary equivalence of elements) complete set of inequivalent irreducible representations. We assume without loss of generality that every $\varphi\in \Irr(G)$ is a representation on $\C^{d_{\varphi}}$ with the standard basis $\{\ket{i}:i\in [d_\varphi]\}$. It is well-known that
\begin{equation*}
   \sum_{\varphi\in \Irr(G)}d_{\varphi}\Tr\big(\varphi(g)\big)=\begin{cases}  \abs{G} & \text{if } g=1, \\
0 & \text{otherwise.}
\end{cases}
\end{equation*}

Given a finite group $G$, a function $f:G\arr\msU(\mcH)$, and an irreducible representation $\phi:G\arr\msU(\C^{d})$, the \emph{Fourier transform of $f$ at $\phi$} is an operator
\begin{align}
    \hat{f}(\phi):=\frac{1}{\abs{G}}\sum_{g\in G}f(g)\otimes \overline{\phi(g)}
\end{align}
acting on $\mcH\otimes\C^d$, where $\overline{\phi(g)}$ is the conjugate of the matrix $\phi(g)\in M_{d}(\C)$ in the standard basis.

Let $f:G\arr \msU(\mcH)$ be a function of a finite group $G$. Given a quantum state $\rho$ on $\mcH$ and a positive real number $\epsilon$, we say $f$ is an \emph{$(\epsilon,\rho)$-homomorphism} provided that $f(g^{-1})=f(g)^*$ and $\frac{1}{\abs{G}}\sum_{h\in G}\norm{f(g)f(h)-f(gh)}_{\rho}^2\leq \epsilon$ for all $g\in G$. In this case, by the well-known Gowers-Hatami theorem \cite{GH17,Vid18}, there is a Hilbert space $\mcK$, an isometry $V:\mcH\arr\mcK$, and a representation $\phi:G\arr\msU(\mcK)$ such that $\norm{f(g)-V^*\phi(g)V}_\rho\leq \epsilon$ for all $g\in G$. In \Cref{GH} we introduce an enhanced version of this theorem which allows us to disregard all one-dimensional irreducible representations of the Weyl-Heisenberg group. Earlier works dealt with these one-dimensional representations by invoking a truncation of the isometry given by the Gowers-Hatami theorem. Unfortunately, in general, truncation of an isometry can fail to be an isometry.

We also work with group presentations. Given a set $S$, we use $\mathcal{F}(S)$ to denote the free group generated by $S$, and use $\ang{S:R}$ to denote the quotient of $\mathcal{F}(S)$ by the normal subgroup generated by $R$. We say $\ang{S:R}$ is generated by $S$ subject to the relations $R$. When $S$ and $R$ are both finite sets, we say the group $\ang{S:R}$ is \emph{finitely-presented}. Given a group $G=\ang{S:R}$, a \emph{normal form} for $G$ with respect respect to $S$ is an injective function $\mcN$ from $G$ to words over $S\cup S^{-1}$ such that $\mcN(g)=g$ in $G$ for all $g\in G$.

\subsection{Non-local games and rigidity}\label{subsec:rigidity}
A two-player\footnote{These two players are commonly called Alice and Bob.} one-round nonlocal game $\mcG$ is a tuple $\big(\lambda,\mu,\mcI_A,\mcI_B,\mcO_A,\mcO_B\big)$ , where $\mcI_A,\mcI_B$ are finite input sets, and $\mcO_A,\mcO_B$ are finite output sets, $\mu$ is a probability distribution on $\mcI_A\times\mcI_B$, and $\lambda:\mcO_A\times\mcO_B\times\mcI_A\times\mcI_B\rightarrow\{0,1\}$ determines the win/lose conditions. A quantum strategy $\mcS$ for~$\mcG$
is given by finite-dimensional Hilbert spaces $\H_A$ and $\H_B$, a unit vector $\ket{\psi}\in\H_A\otimes \H_B$, Alice's POVMs $\{E_a^x:a\in\mcO_A \},x\in\mcI_A$ on $\H_A$, and Bob's POVMs $\{F_b^y:b\in\mcO_B\},y\in\mcI_B$ on $\H_B$. The winning probability of $\mcS$ for game $\mathcal{G}$ is given by
\begin{align*}
\omega(\mcG, \mcS):=\sum\limits_{a,b,x,y}\mu(x,y)\lambda(a,b\vert x,y)\bra{\psi}E_a^x\otimes F_b^y\ket{\psi}.
\end{align*}
A quantum strategy $\mcS$ for a non-local game $\mcG$ is said to be \emph{perfect} if $\omega(\mcG, \mcS)=1$. When the game is clear from the context we simply write $\omega(\mcS)$ to refer to the winning probability. The \emph{quantum value} of a non-local game $\mcG$ is defined as
 \begin{align*}
     \omega^*(\mcG):=\sup\{\omega(\mcS):\mcS \mbox{  a quantum strategy for } G\}.
 \end{align*}

In this paper, we assume all measurements employed in a quantum strategy are PVMs. An $m$-outcome PVM $\{P_1,\cdots,P_m\}$ corresponds to an observable $\sum_{j\in[m]}\exp(\frac{2\pi i}{m}j)P_j$, so a quantum strategy for a game $\mcG=\big(\lambda,\mu,\mcI_A,\mcI_B,\mcO_A,\mcO_B\big)$ can also be specified by a triple
\begin{align*}
    \mcS=(\tau^A,\tau^B,\ket{\psi}\in\mcH_A\otimes\mcH_B)
\end{align*}
where $\tau^A(x)$, $x\in \mcI_A$ are $\mcO_A$-outcome observables on $\mcH_A$, and $\tau^B(y)$, $y\in \mcI_B$ are $\mcO_B$-outcome observables on $\mcH_B$.

Here we introduce the well-known Mermin-Peres Magic Square game, in which Alice and Bob are trying to convince the verifier that they have a solution to a system of equations over $\mathbb{Z}_2$. There are 9 variables $v_1, \dots, v_9$ in a $3\times 3$-array whose rows are labeled $r_1,r_2,r_3$ and columns are labeled $c_1, c_2, c_3$.
\begin{table}[ht]
    \centering
   \begin{tabular}{c|c|c|c|}
\nolines{} & \nolines{$c_1$} & \nolines{$c_2$} & \nolines{$c_3$}  \\ \cline{2-4}
$r_1$ & $v_1$ & $v_2$ & $v_3$  \\ \cline{2-4}
$r_2$ & $v_4$ & $v_5$ & $v_6$ \\ \cline{2-4}
$r_3$ & $v_7$ & $v_8$ & $v_9$ \\ \cline{2-4}
\end{tabular}
    \caption{Magic square game}
\end{table}

Each row or column corresponds to an equation: variables along the rows or columns in $\{r_1,r_2,r_3,c_1,c_2\}$ sum to 0; variables along the column $c_3$ sum to 1. In each round, Bob receives one of the 6 possible equations and he must respond with a satisfying assignment to the given equation. Alice is then asked to provide a consistent assignment to one of the variables contained in the equation Bob received. The following table describes an operator solution for this system of equations:
\begin{table}[h]
    \centering
    \begin{tabular}{lll}
    $A_1= \sigma_I \otimes \sigma_{Z}$ &  $A_2 = \sigma_{Z} \otimes \sigma_I$ & $A_3 = \sigma_{Z} \otimes \sigma_{Z}$ \\
     $A_4 = \sigma_{X} \otimes \sigma_I$ & $A_5 = \sigma_I \otimes \sigma_{X}$ & $A_6 = \sigma_{X} \otimes \sigma_{X}$ \\
     $A_7 = \sigma_{X} \otimes \sigma_{Z}$ & $A_8 = \sigma_{Z} \otimes \sigma_{X}$ & $A_9 = \sigma_X\sigma_Z \otimes \sigma_Z\sigma_X$
\end{tabular}
\label{Alice}
\caption{Operator solution for Magic Square game}
\end{table}

The canonical perfect quantum strategy for this game is one in which
\begin{itemize}
    \item the players share two EPR pairs,
    \item given a variable $v_i$, Alice performs $A_i$ on her registers, and
    \item given a row or column consisting of three variables $v_j,v_l$, and $v_\ell$, Bob perform $A_jA_kA_\ell$ on his registers.
\end{itemize}

\begin{defn}
Let $\mcS=(\tau^A,\tau^B,\ket{\psi}\in\mcH_A\otimes\mcH_B)$ and $\widetilde{\mcS}=( \{\widetilde{\tau}^A \},\{\widetilde{\tau}^B \},\ket{\widetilde{\psi}}\in\widetilde{\mcH}_A\otimes\widetilde{\mcH}_B )$ be two quantum strategies for a game $\mcG=\big(\lambda,\mu,\mcI_A,\mcI_B,\mcO_A,\mcO_B\big)$. We say  $\mcS$ is $\delta$-close to  $\widetilde{\mcS}$, written $\mcS\succeq_{\delta}\widetilde{\mcS}$, if there are Hilbert spaces $\mcH_A^{aux}$ and $\mcH_B^{aux}$,  isometries $V_A:\mcH_A\rightarrow \widetilde{\mcH}_A\otimes\mcH_A^{aux}$ and $V_B:\mcH_B\rightarrow\widetilde{\mcH}_B\otimes\mcH_B^{aux}$, and a unit vector $\ket{aux}\in\mcH_A^{aux}\otimes\mcH_B^{aux}$ such that
\begin{align}
    \norm{ (V_A\otimes V_B)(\tau^A(x)\otimes \tau^B(y)\ket{\psi}) - (\widetilde{\tau}^A(x)\otimes\widetilde{\tau}^B(y)\ket{\widetilde{\psi}})\otimes\ket{aux} }^2\leq\delta
\end{align}
for all $(x,y)\in\mcI_A\times\mcI_B$.
\end{defn}

The rigidity of the Magic Square game has been well studied \cite{WBMS16}:
\begin{lemma}\label{lemma:MS}
If $\mcS$ is a strategy for the Magic Square game with winning probability $1-\epsilon$, then $\mcS$ is $O(\sqrt{\epsilon})$-close to the canonical perfect strategy.
\end{lemma}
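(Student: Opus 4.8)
The plan is to follow the standard self-testing route for the Magic Square game: first extract anticommutation and algebraic relations among the players' observables from the near-perfect winning condition, then recognize these as an approximate representation of the group generated by the Magic Square operator solution, and finally invoke a Gowers–Hatami-style rounding to obtain an exact representation, whose structure is known (by representation theory of the relevant group) to be a multiple of the canonical two-qubit representation. The cited reference \cite{WBMS16} gives a direct Jordan-lemma–style argument, and I would reproduce that line of reasoning rather than the group-theoretic one, since it yields the explicit $O(\sqrt\epsilon)$ dependence most cleanly.

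First I would set up notation: by the structure of the game, for each variable $v_i$ the honest operator is $A_i$ from Table~\ref{Alice}, and the constraints say that $A_jA_kA_\ell=\pm\Id$ along each line (sign $-\Id$ only for column $c_3$), and any two operators sharing a cell commute within a line. From a strategy $\mcS$ winning with probability $1-\epsilon$, I would derive, via a standard averaging argument over the inputs, that the players' observables $\tau^A(v_i)$ (and the products $\tau^B$ uses along lines) satisfy these relations up to $O(\sqrt\epsilon)$ in $\ket\psi$-norm: consistency between Alice and Bob on shared variables, the product-to-$\pm\Id$ relations along lines, and — crucially — the anticommutation relations $\{\tau^A(v_i),\tau^A(v_j)\}\approx 0$ that follow for cells lying in a common row but different columns (or vice versa), obtained by combining two line relations. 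This is the technical heart: translating the combinatorial win condition into an approximate presentation.

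Next, having approximate anticommuting, order-two, Hermitian unitaries, I would apply the (approximate) Jordan lemma / the Gowers–Hatami machinery available in the excerpt: an $O(\sqrt\epsilon)$-approximate representation of the group presented by the Magic Square relations is $O(\sqrt\epsilon)$-close (after an isometry) to an exact representation $\phi$ of that group. Since that group's only irreducible representation relevant to a perfect solution is the four-dimensional one built from $\{\sigma_X\otimes\sigma_I,\sigma_I\otimes\sigma_X,\sigma_Z\otimes\sigma_I,\sigma_I\otimes\sigma_Z\}$ (all lower-dimensional irreps violate at least one product constraint), $\phi$ decomposes as this canonical representation tensored with an identity on an auxiliary space. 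Pulling this back through the isometries $V_A,V_B$ and bookkeeping the accumulated errors gives isometries under which $\tau^A(x)\otimes\tau^B(y)\ket\psi$ is $O(\sqrt\epsilon)$-close to the canonical strategy's corresponding vector tensored with an auxiliary state, which is exactly the $\succeq_\delta$ conclusion with $\delta=O(\sqrt\epsilon)$.

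The main obstacle I anticipate is the careful propagation of error terms: one must ensure that composing finitely many $O(\sqrt\epsilon)$-approximate relations (consistency, line relations, derived anticommutations) and then the Gowers–Hatami rounding does not blow up the constant in an $n$-dependent way — but since the Magic Square game has fixed constant size, all such compositions involve only an absolute constant number of steps, so the final bound remains $O(\sqrt\epsilon)$ with an absolute constant. A secondary point of care is that the standard Jordan-lemma argument controls Alice's observables directly but Bob's are products along lines; one must separately argue Bob's individual-cell consistency, which again follows from the consistency checks built into the game. Since all of this is carried out in detail in \cite{WBMS16}, I would cite that work for the quantitative statement and merely sketch the above reduction.
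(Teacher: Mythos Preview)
The paper does not actually prove this lemma: it simply states the result and cites \cite{WBMS16} as its source, with no sketch or argument. Your proposal is therefore more detailed than the paper's treatment, but ultimately lands in the same place---deferring the quantitative $O(\sqrt\epsilon)$ statement to \cite{WBMS16}---so the approaches agree.
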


\subsection{Complexity classes and zero knowledge}\label{Sec:PrelimsZk}

\begin{defn}[$\QMA$]\label{defn:QMA}
A promise problem $L=(L_{yes},L_{no})$ is in $\QMA$  if there exist
polynomials $p$ and $q$, and a polynomial-time uniform family of quantum circuits
$\{ Q_n \}$ where $Q_n$ takes as input a string $x\in\Sigma^*$ with
$|x|=n$, a $p(n)$-qubit quantum state $\ket{\psi}$, and $q(n)$ auxiliary qubits in state
$\ket{0}^{\otimes q(n)}$, such that:
  \begin{itemize}
    \item (Completeness)  if $x\in L_{yes}$, then there exists some $\ket{\psi}$
      such that $Q_n$ accepts $(x,\ket{\psi})$ with probability at least
      $1-\negl(n)$, and
   \item (Soundness) if $x\in L_{no}$, then for any state $\ket{\psi}$, $Q_n$
      accepts $(x,\ket{\psi})$ with probability at most $\negl(n)$.
  \end{itemize}
\end{defn}
We sometimes refer to the family of circuits $\{ Q_n \}$ in \Cref{defn:QMA} simply as a family of \emph{verification circuits}.

Informally, a language is in $\MIP^*$ if it can be decided by a polynomial-time classical verifier that interacts with \emph{multiple} isolated, all-powerful and entangled provers. In this work, we only deal with $\MIP^*$ proof systems involving two provers and one round; which are defined as the following.

\begin{defn}
  A promise language $L=(L_{yes}, L_{no})$ is in $\MIP^*[2,1]_{c,s}$ if there exists a polynomial-time computable function that takes an instance $x \in L$ to a description of a non-local game $\mcG_x$ satisfying the following conditions.
  \begin{itemize}
      \item (Completeness) For every $x \in L_{yes}$ we have $\omega^*(\mcG_x) \geq c$.
      \item (Soundness) For every $x \in L_{no}$ we have $\omega^*(\mcG_x) < s$.
  \end{itemize}
\end{defn}
We refer to the mapping, $x \mapsto \mcG_x$, as a $\MIP^*[2,1]_{c,s}$ proof system, or in some places a $\MIP^*[2,1]_{c,s}$ protocol. When the parameters are clear from the context we simply call it an $\MIP^*$ proof system.

Next, we discuss \emph{zero knowledge}, which is an additional property for proof systems. Intuitively, in a zero-knowledge proof system, no verifier (including a malicious one) can learn anything beyond the membership in the language of the instance under consideration.
 For our scenario of  $\MIP^*[2,1]$, it is sufficient to quantify over verifiers that send at most one valid question to each prover (otherwise, we can impose that the provers abort the protocol).   Hence,  a malicious verifier can only deviate from the honest one by sampling an initial question according to an alternate distribution, and by sampling the second question adaptively.

More formally, a malicious verifier $\widehat{V}$ is a probabilistic polynomial-time Turing machine which on input $x$ and randomness $\theta$ samples question $q_1$ for either Alice or Bob. Given reply $r_1$, the malicious verifier samples question $q_2$ in a way that may depend on $q_1$ and $r_1$.  For a given quantum strategy $\mathcal{S}$ and malicious verifier $\widehat{V}$, we take $View(\widehat{V}(x), \mathcal{S})$ to be the random variable corresponding to the transcript of questions and answers $(x,\theta, q_1,r_1,q_2,r_1)$. A protocol is zero-knowledge when for all ``yes" instances a simulator can sample from the distribution above.

\begin{defn}\label{defn:SZKMIP*}
    An $\mathrm{MIP}^*[2,1]_{c,s}$ proof system is \emph{statistical zero-knowledge} if for every $x \in L_{yes}$ there exists an honest prover strategy $\mathcal{S}$ satisfying the following:
    \begin{enumerate}
        \item $\omega^*(\mathcal{S}) \geq c$.
        \item For any PPT malicious verifier $\widehat{V}$ there exists a PPT simulator $Sim_{\widehat{V}}$ with output distribution that is $\epsilon$-close to $View(\widehat{V}(x),S)$ in statistical distance for some negligible function $\epsilon(|x|)$.
    \end{enumerate}
\end{defn}

\subsection{Simulatable codes and encodings of gates}

Recall that a quantum error-correcting code (QECC) $\mathcal{C}=[[n,k]]$ is a map $\mathrm{Enc}:(\C^2)^{\otimes k}\rightarrow (\C^2)^{\otimes n}$, which encodes a $k$-qubit state $\ket{\psi}$ into an $n$-qubit state $\mathrm{Enc}(\ket{\psi})$ where $n\geq k$. The code is said to have distance $d$ if the original state can be recovered from the encoded state that has transformed under any quantum operation which acts on at most $(d-1)/2$ qubits. Given an $[[m,1]]$ QECC with map $\mathrm{Enc}$, we abuse notation and also write $\mathrm{Enc}$ for the corresponding $[[mn,n]]$ encoding that is obtained by applying $\mathrm{Enc}$ to each of the qubits in an $n$-qubit system.

We use $\underline{A}^k_n$ to denote the set of $k$ distinct numbers between $1$ and $n$ through this section. Then $\underline{A}^k:=\bigcup_{n\geq k}\underline{A}^k_n$ is the set of $k$ distinct numbers. Given a $k$-qubit logical gate $U$ and an element $\underline{a} =(a_1, \dots, a_k) \in \underline{A}^k$, let $U(\underline{a})$ denote the gate $U$ applied to qubits $a_1,\cdots a_k$.

Below we recall the definition of simulatable codes, which was first introduced by Grilo, Slofstra, and Yuen \cite{GSY19}.

\begin{defn}
Given a $k$-qubit logical gate $U$ and a quantum error-correcting code $\mathcal{C}=[[m,1]]$, let $(\sigma_U, \sigma'_U)$ be a pair of states, and let $\ell$ be a positive integer. For each $1\leq i\leq \ell$, let $\mathcal{O}_i$ be a mapping from elements $\underline{a}=(a_1,\cdots,a_k)$ in $\underline{A}^k$ to unitaries $\mathcal{O}_i(\underline{a})$ acting only on
\begin{enumerate}[label=(\roman*)]
    \item the physical qubits of codewords in $\mathcal{C}$ that corresponds to logical qubits $a_1,\cdots,a_k$, and
    \item the register that holds $\sigma_U$.
\end{enumerate}
We say the tuple $(\sigma_U,\sigma'_U,\ell,\mathcal{O}_1,\cdots,\mathcal{O}_\ell)$ is an encoding of $U$ in code $\mathcal{C}$ if
\begin{align}
    (\mathcal{O}_{\ell}(\underline{a}) \dots  \mathcal{O}_1(\underline{a}))(Enc(\rho) \otimes \sigma_U)(\mathcal{O}_{\ell}(\underline{a}) \dots  \mathcal{O}_1(\underline{a}))^* =Enc\big(U(\underline{a})\rho U(\underline{a})^*\big) \otimes \sigma'_U
\end{align}
for all $n\geq k$, elements $\underline{a}\in \underline{A}^k_n$, and $n$-qubit states $\rho$.
If in addition, the unitaries $\mathcal{O}_1(\underline{a}),\cdots,\mathcal{O}_\ell(\underline{a})$ are gates in some set $\mathcal{U}$ for all $\underline{A}\in \underline{A}^k$, then we say the encoding $(\sigma_U,\sigma'_U,l,\mathcal{O}_1,\cdots,\mathcal{O}_\ell)$ uses physical gates in $\mathcal{U}$.
\end{defn}

Given a circuit of logical gates $V=U_1 \dots U_k$ we refer to an encoding of $V$ as the corresponding circuit of physical gates obtained by applying an encoding of each gate $U_i$.

\begin{defn}
An encoding $(\sigma_U, \sigma'_U, \mathcal{O}_1, \dots, \mathcal{O}_\ell)$ of a $k$-qubit logical gate $U$ in a QECC $\mathcal{C}$ is called $s$-simulatable if for all $0 \leq t \leq \ell$, $n$-qubit states $\rho$, and subsets $S$ of the physical qubits of $Enc(\rho) \otimes \sigma_U$ with $|S| \leq s$, the partial trace
\begin{align*}
   \mathrm{Tr}_{\overline{S}}\Big(\mathcal{O}_{t}(\underline{a}) \dots  \mathcal{O}_1(\underline{a}))(Enc(\rho) \otimes \sigma_U)(\mathcal{O}_{t}(\underline{a}) \dots  \mathcal{O}_1(\underline{a}))^*\Big)
\end{align*}
is a $2^{\abs{S}}\times 2^{\abs{S}}$ matrix whose entries are rational and can be computed in polynomial time from $t$, $\underline{a}$ and $S$. In particular, this matrix is independent of $\rho$ if $\mathcal{C}$ can correct arbitrary errors on s qubits.
\end{defn}

\begin{thm}[Theorem 6 in \cite{GSY19}]\label{thm:sim-codes}
Let $\mathcal{U}= \lbrace H, \Lambda(X), \Lambda^2(X) \rbrace $. For every $s\in \N$, there exists a constant $n\in \N$ and a $[[n,1]]$ QECC $\mathcal{C}$ such that any logical gate in $\mathcal{U}$ has an $s$-simulatable encoding in $\mathcal{C}$ using physical gates in $\mathcal{U}$.
\end{thm}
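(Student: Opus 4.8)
I would split the proof into a coding-theoretic existence statement and the routine verification of the two clauses of the definition (independence of the marginals from the logical state together with their polynomial-time rational computability, and the restriction to physical gates in $\mathcal{U}$). Fix $s$. It suffices to exhibit a CSS stabilizer code $\mathcal{C}=[[n,1]]$ with $n$ a constant depending only on $s$ and distance $d>2s$, together with, for each logical gate $U\in\mathcal{U}=\{H,\Lambda(X),\Lambda^2(X)\}$, a unitary circuit $\mathcal{O}_\ell(\underline a)\cdots\mathcal{O}_1(\underline a)$ of physical gates from $\mathcal{U}$ (acting on the code blocks indexed by $\underline a$ together with an ancilla register prepared in a fixed state $\sigma_U$) that implements $U(\underline a)$ logically while keeping the data qubits encoded throughout --- concretely, a \emph{transversal} implementation whenever one exists. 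For $\Lambda(X)$ this is transversal $\mathrm{CNOT}$ on any CSS code ($\sigma_U$ empty); for $H$ it is transversal $H$ on a self-dual CSS base code; the real content is the Toffoli gate $\Lambda^2(X)$.

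\textbf{Simulatability and rationality.} For transversal-style implementations the simulatability clauses follow from a stabilizer-formalism argument. Since the marginal in question is to be independent of the logical state $\rho$, one may fix $\rho$ to a computational-basis state, so that $\mathrm{Enc}(\rho)\otimes\sigma_U$ is a real stabilizer state and each prefix $\mathcal{O}_t(\underline a)\cdots\mathcal{O}_1(\underline a)$ is a real Clifford-plus-Toffoli circuit supported on a fixed constant-size region. One checks --- using that the complete circuit is a valid logical operation and that each $\mathcal{O}_i$ touches only a bounded number of qubits --- that the marginal of the intermediate state on any $S$ with $|S|\le s$ is determined by $t$, $\underline a$, $S$ and the given code and $\sigma_U$ alone. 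Rationality is then free: a real Clifford-plus-Toffoli circuit applied to a computational-basis state produces a state all of whose nonzero amplitudes have a common magnitude $2^{-k/2}$ with signs $\pm1$, so its density matrix --- and hence every marginal --- has dyadic-rational entries, which are read off a tableau in polynomial time. The one subtlety to keep in mind is that the ancilla register used by the Toffoli gadget must never become the sole carrier of logical information at an intermediate time.

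\textbf{The Toffoli encoding --- the crux.} Here one meets a genuine obstruction: $\mathcal{U}$ is real-computationally universal, so by Eastin--Knill / Bravyi--K\"onig-type no-go theorems no single stabilizer code admits transversal implementations of all of $H$, $\Lambda(X)$, $\Lambda^2(X)$. The route I would take is the Reed--Muller/concatenation one. A Reed--Muller code of suitable parameters admits a transversal controlled-controlled-$Z$ between three blocks (the obstruction being a triple-product condition on the underlying classical code, which it meets by a degree count), and conjugating the target block by transversal $H$ turns this into a transversal Toffoli; a self-dual CSS code such as the Steane code supplies transversal $H$ and $\mathrm{CNOT}$. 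One then combines the two ingredients --- by code concatenation, or by a measurement-free gadget built from $\mathcal{U}$-gates acting on an encoded controlled-controlled-$Z$ ancilla $\sigma_U$ --- and finally concatenates the resulting code with itself $O(\log s)$ times to raise the distance from $3$ to $>2s$ while keeping the block length $n=\mathrm{poly}(s)$, which is constant for fixed $s$. Transversality and the restriction to physical gates in $\mathcal{U}$ survive concatenation, since a transversal gate of the outer code descends to transversal gates of the inner code and $\mathcal{U}$ is closed under this descent.

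\textbf{Main obstacle.} The hard step is precisely the Toffoli encoding: reconciling a transversal Toffoli (which wants a Reed--Muller-type, non-self-dual code) with a transversal $H$ (which wants a self-dual code), subject to (i) using only physical gates from $\mathcal{U}$ --- no $T$-gate and, crucially, no mid-circuit measurement, which rules out the standard magic-state-injection and code-conversion constructions --- and (ii) never decoding the data, so that the marginal on every subset of $\le s$ qubits stays information-free and the argument of the second paragraph applies. I expect the concatenation/measurement-free-gadget construction above to be the way through, with the real work being to make the Toffoli gadget genuinely measurement-free and to verify that every intermediate state keeps small subsets information-free.
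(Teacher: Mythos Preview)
The paper does not prove this theorem; it is quoted as Theorem~6 of \cite{GSY19} and used as a black box. The only information the paper gives about the construction appears inside the proof of \Cref{thm:simulation of history states}: the code $\mathcal{C}$ is the $\lceil\log(3s+5)\rceil$-fold concatenated Steane code (so $H$ and $\Lambda(X)$ are transversal), and the logical Toffoli is realised \emph{not} transversally but via the ancilla register, using an encoded resource state $\ket{\text{Toffoli}}$ that is itself prepared from $\ket{000}$ by physical $H$ and $\Lambda^2(X)$ gates.

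Your plan is largely sound on the easy parts (CSS transversality for $H$ and $\Lambda(X)$, stabilizer-formalism computability of the marginals, distance amplification by self-concatenation), and you correctly locate the crux at the Toffoli. But your primary route---a Reed--Muller code for transversal $CCZ$, then reconciling with transversal $H$ by concatenation---is the hard way around, and the sketch you give does not actually close: transversality of a gate on one layer of a concatenated code does not in general survive concatenation with a different code on the other layer, so ``$\mathcal{U}$ is closed under this descent'' is not justified as stated. The construction the paper points to sidesteps the Eastin--Knill obstruction entirely by exploiting the ancilla $\sigma_U$ that the definition of ``encoding'' explicitly allows: one never needs all of $\mathcal{U}$ transversal on $\mathcal{C}$, only $H$ and $\Lambda(X)$, with the non-transversal Toffoli pushed into a gadget that consumes an encoded $\ket{\text{Toffoli}}$ state. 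You do mention this possibility in passing (your ``measurement-free gadget \dots acting on an encoded controlled-controlled-$Z$ ancilla''), and that is in fact the route taken; had you led with it rather than with Reed--Muller, your outline would match the cited construction. The remaining work---which you also flag---is to check that the Toffoli-state gadget can be made unitary (since the $\mathcal{O}_i$ are unitaries, not channels) using only physical gates from $\mathcal{U}$, and that the data blocks stay encoded at every intermediate step so that small marginals remain state-independent.
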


\subsection{Local Hamiltonains}
We define the \emph{local Hamiltonian problem}.

\begin{defn}
  \label{def:local Hamiltonian}
  Let
  $k \in \mathbb{N}$, $\alpha, \beta \in \mathbb{R}$ with $\alpha < \beta$, the $k$-\emph{Local Hamiltonian} problem with parameters $\alpha$ and $\beta$
  is the following promise problem.
  Let $n$ be the number of qubits of a quantum system.
  The input is a set of $m(n)$ Hamiltonians $H_1, \ldots, H_{m(n)}$
  where $m$ is a polynomial in $n$ and each $H_i$ acts on $k$ qubits out of the $n$ qubit system with $\norm{H_i} \leq 1 $.
  For $H = \sum_{j = 1}^{m(n)} H_j$ the promise problem is to decide between the following.
    \begin{itemize}
\item[\textbf{Yes.}] There exists an  $n$-qubit state $\ket{\varphi}$ such that
      $
        \bra{\varphi} H \ket{\varphi}
        \leq a \cdot m(n) .
      $
\item[\textbf{No.}] For every $n$-qubit state $\ket{\varphi}$
      it holds that
      $
        \bra{\varphi} H \ket{\varphi}
        \geq b \cdot m(n) .
      $
    \end{itemize}
\end{defn}

The above problem plays a central role in quantum proofs since it was shown to be $\QMA$ complete for parameters  $k=5$ and $\beta - \alpha = \frac{1}{\mathrm{poly}(n)}$ \cite{KSV02}; the proof of this fact uses the famous circuit-to-Hamiltonian construction.

\section{Low-weight Pauli braiding test}\label{Sec:LWPBT}
In this section, we introduce a non-local game that can self-test for Alice making measurements with $n$-qubit ``braids'' of Pauli $\sigma_X$ and $\sigma_Z$ of weight at most $6$, against $n$ EPR pairs. That is, Alice performs a measurement of the form $\otimes_{i} \sigma_{W_i}$, where each $W_i \in \{X,Z, I \}$ and for all but $6$ indices we have $W_i=I$. As we discuss in \Cref{sec:technical-contrib} we also want this game to be able to be won perfectly when Alice is restricted to only perform measurements on at most $6$ qubits. Consequently, we cannot use the typical Pauli Braiding Test. To this end, we introduce a low-weight version of the Pauli Braiding test. We introduce this game for weight $6$ but the rigidity arguments in the following section equally hold for the analogous game with any constant weight on the strings. This game is constructed from two sub-games which we call the low-weight linearity test and the low-weight anti-commutation test.

\subsection{Low-weight linearity test}
\label{sec:LW-commutation}

For any $a\in\{0,1\}^n$ and $W\in\{X,Z\}^n$, we use $W(a)$ to denote the sequence $W_1^{a_1}W_2^{a_2}\cdots W_n^{a_n}$ where $X^0=Z^0=I$. Let $\mcI_A:=\{W(a):W\in\{X,Z\}^n,a\in\{0,1\}^n \text{ such that }\abs{a}\leq 6\}$ and let $\mcI_B:=\{(W(a),W(a')):W\in\{X,Z\}^n,a,a'\in\{0,1\}^n \text{ such that }\abs{a},\abs{a'}\leq 6\}$ be the question sets for Alice and Bob respectively. We first describe the low-weight linearity test in \Cref{fig:(LW)LinearityTest}.

\begin{figure}[!htbp]
    \centering
\begin{mdframed}
   \begin{enumerate}
    \item  The verifier selects uniformly at random $W \in \lbrace X, Z \rbrace^{n}$  and strings $a, a' \in \lbrace 0,1 \rbrace^{n}$ satisfying $|a|, |a'| \leq 6$ (\emph{i.e.} $a,a'$ both have at most $6$ non-zero entries).
    \item The verifier sends $(W(a), W(a'))$ to Bob. If $a+a'$ has weight at most $6$ then the verifier selects $W' \in \lbrace W(a), W(a'), W(a+a') \rbrace$ uniformly at random to send to Alice. Otherwise, the verifier uniformly at random sends $W' \in \lbrace W(a), W(a') \rbrace$ to Alice.
    \item The verifier receives two bits $(b_1,b_2)$ from Bob and one bit $c$ from Alice.
    \item If Alice receives $W(a) $ then the verifier requires $b_1=c$. If Alice receives $W(a')$ then the verifier requires $b_2=c$. If Alice receives $W(a+a')$ then the verifier requires $b_1 +b_2 =c$.
\end{enumerate}
\end{mdframed}
    \caption{Low-weight linearity test.}
    \label{fig:(LW)LinearityTest}
\end{figure}

For this game we let $\mu$ denote the implied probability distribution on possible questions $\I_A \times \I_B$. We see that for any $x\in \I_A$ and $y\in\I_B$, either $\pi(x,y) =0$ or $\frac{1}{\pi(x,y)} = O(n^6)$.

Suppose $S=(\tau^A,\tau^B,\ket{\psi}\in\mcH_A\otimes\mcH_B)$ is a quantum strategy for this game. On each input $x\in \I_A$, Alice performs a two-outcome PVM $\{E^x_0,E^x_1\}$ where $E^x_0-E^x_1=\tau^A(x)$. Similarly, on each input
$y=(W(a) , W(a') ) \in I_B$ Bob performs a four-outcome PVM $\{ F^y_{0,0}, F^y_{1,0}, F^y_{0,1} , F^y_{1,1} \}$, and we define the following observables:
\begin{align*}
    &\tau^B_y(W(a)):=F^y_{0,0}+F^y_{0,1}-F^y_{1,0}-F^y_{1,1},\\
    &\tau^B_y(W(a')):=F^y_{0,0}+F^y_{1,0}-F^y_{0,1}-F^y_{1,1},\text{ and }\\
    &\tau^B_y(W(a+a')):=F^y_{0,0}+F^y_{1,1}-F^y_{0,1}-F^y_{1,0}.
\end{align*}
We take a moment to explain some subtleties of the above notation. Instead of specifying a single observable $\tau^B(W(a))$ for all possible $W(a)$ we include a reference to each particular question pair $y=(W(a),W(a'))$. This is because it is possible for the same $W(a)$ to be included in different question pairs for different choices of $W(a')$. In particular, we can not assume that Bob's measurements will be the same for $W(a)$ across these different possible choices. For Alice, things are more straightforward and we can define a unique observable $\tau^A(W(a))$ for each $W(a)$. Note that one can readily check that for each $y=(W(a),W(a'))$ we have
\begin{align}
    \tau^B_y(W(a+a'))=\tau^B_y(W(a))\tau^B_y(W(a'))=\tau^B_y(W(a'))\tau^B_y(W(a)).
\end{align}

\begin{lemma}\label{Lem:LWLT}
Suppose $S=(\tau^A,\tau^B,\ket{\psi}\in\H_A\otimes\H_B)$ is a strategy for low-weight linearity test with winning probability $1-\epsilon$. Let $\rho_A:=\Tr_{\H_B}(\ket{\psi}\bra{\psi})$. Then the following bounds hold.
\begin{enumerate}
    \item For every $y=(W(a),W(a'))\in\I_B$,
    \begin{enumerate}
        \item $\bra{\psi}\tau^A(W(a))\otimes\tau^B_y(W(a))\ket{\psi}\geq 1-O(n^6)\epsilon$,
        \item $\bra{\psi}\tau^A(W(a'))\otimes\tau^B_y(W(a'))\ket{\psi}\geq 1-O(n^6)\epsilon$, and
        \item if in addition, $W(a+a')\in\I_A$, then $\bra{\psi} \tau^A(W(a+a'))\otimes \tau^B_y(W(a+a'))\geq 1-O(n^6)\epsilon$.
    \end{enumerate}

    \item\label{ineqA} For every $W \in \{X,Z\}^{n}$ and $a,a' \in \lbrace 0,1 \rbrace^{n} $ with $\abs{a},\abs{a'},\abs{a+a'}\leq 6$,
    \begin{enumerate}
        \item $\norm{\tau^A(W(a))\tau^A(W(a'))-\tau^A(W(a+a'))}^2_{\rho_A}\leq O(n^6)\epsilon$,
        \item $\norm{\tau^A(W(a'))\tau^A(W(a))-\tau^A(W(a+a'))}^2_{\rho_A}\leq O(n^6)\epsilon $, and
        \item $\norm{\tau^A(W(a))\tau^A(W(a'))-\tau^A(W(a'))\tau^A(W(a))}^2_{\rho_A}\leq O(n^6)\epsilon $
    \end{enumerate}
\end{enumerate}
\end{lemma}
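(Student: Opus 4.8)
The plan is to derive everything from the single scalar inequality expressing that the strategy wins with probability $1-\epsilon$, by unpacking the uniform question distribution $\mu$ and isolating the contribution of each fixed pair $y=(W(a),W(a'))$. Since the number of questions on each side is $O(n^6)$ (there are $\binom{n}{\leq 6}$ low-weight supports, times a constant number of $X/Z$ patterns and sign choices), the probability $\mu(x,y)$ of any particular question pair that appears is $\Omega(1/n^6)$ whenever it is nonzero. So if the overall losing probability is $\epsilon$, the conditional losing probability given any fixed $y$ (and given Alice's branch) is at most $O(n^6)\epsilon$. Concretely, the winning condition contributes, for each fixed $y=(W(a),W(a'))$, three ``consistency'' events — $b_1=c$ when Alice got $W(a)$, $b_2=c$ when she got $W(a')$, and $b_1+b_2=c$ when she got $W(a+a')$ (the last only when $|a+a'|\leq 6$) — each selected with constant conditional probability. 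Translating a consistency event of the form ``two $\pm1$ observables agree'' into inner-product language gives exactly $\bra{\psi}\tau^A(\cdot)\otimes\tau^B_y(\cdot)\ket{\psi}\geq 1 - O(n^6)\epsilon$, which is Part 1(a)–(c). This is the routine ``average-to-pointwise'' step; I would state it as a short lemma about how a uniform-ish distribution on a polynomial-size set lets one localize the error.

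For Part 2 I would combine the three estimates of Part 1 with the triangle inequality in the $\rho_A$-norm, using the elementary facts that (i) for a $\pm1$-valued (i.e.\ unitary, self-adjoint) observable $U$ one has $\bra{\psi}A\otimes U\ket{\psi}\geq 1-\delta$ iff $\norm{(A\otimes\Id - \Id\otimes U)\ket{\psi}}^2\leq 2\delta$, and (ii) the left-unitary-invariance of $\norm{\cdot}_{\rho_A}$ together with the standard swap trick $\norm{A\otimes\Id\ket{\psi}}^2 = \norm{A}_{\rho_A}^2$ and $\norm{\Id\otimes U\ket{\psi}}^2 = \norm{U^{\mathsf T}}_{\rho_B}^2$. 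The key bridging identity is the one already recorded in the excerpt: on each fixed $y$, Bob's three observables satisfy $\tau^B_y(W(a+a')) = \tau^B_y(W(a))\tau^B_y(W(a')) = \tau^B_y(W(a'))\tau^B_y(W(a))$ exactly, because they are all built from the same four-outcome PVM $\{F^y_{b_1,b_2}\}$. So from Part 1, $\tau^A(W(a))$ is $O(n^6)\epsilon$-close (in the sense of acting on $\ket\psi$) to $\tau^B_y(W(a))$ transported to Alice's side, similarly for $W(a')$ and $W(a+a')$; chaining these, $\tau^A(W(a))\tau^A(W(a'))\ket\psi \approx \tau^A(W(a))\,\tau^B_y(W(a'))^{\text{(on B)}}\ket\psi \approx \tau^B_y(W(a))^{\text{(on B)}}\tau^B_y(W(a'))^{\text{(on B)}}\ket\psi = \tau^B_y(W(a+a'))^{\text{(on B)}}\ket\psi \approx \tau^A(W(a+a'))\ket\psi$, where each $\approx$ costs $O(n^6)\sqrt{\epsilon}$ in norm (hence $O(n^6)\epsilon$ after squaring, after renaming the constant). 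Doing the same chain with $a$ and $a'$ swapped gives 2(b), and subtracting 2(a) from 2(b) gives the commutation estimate 2(c). The only subtlety is making sure one uses operators on the \emph{same} side when multiplying — Alice's $\tau^A$ on Alice, Bob's on Bob — and only passes between sides via the Part-1 closeness; operators on opposite tensor factors commute for free, which is what makes the chain legitimate.

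The step I expect to be the main (though still modest) obstacle is bookkeeping the error propagation cleanly: each application of ``move an operator from one side to the other'' must be done with the operator it is being multiplied against held fixed, using that multiplying a near-equality $\norm{(M-N)\ket\psi}\leq \eta$ on the left by a unitary preserves the bound, and that $\norm{\cdot}_{\rho_A}$ on $\msB(\mcH_A)$ matches $\norm{(\cdot\otimes\Id)\ket\psi}$; one must also be careful that the claimed hypothesis of Part 2 is $|a|,|a'|,|a+a'|\leq 6$, which guarantees that the pair $y=(W(a),W(a'))$ genuinely triggers the three-way branch of the test in \Cref{fig:(LW)LinearityTest}, so that Part 1(c) is available. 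I would also remark that the constant hidden in $O(n^6)$ absorbs the constant number of $(X,Z)$-patterns, the constant branching probabilities, and the factor from $\binom{n}{\leq 6} = O(n^6)$, and that no uniformity beyond this polynomial bound on $1/\mu$ is needed — which is exactly the point of the ``low-weight'' restriction. Everything else is the standard self-testing calculus for observables, so the write-up should be short.
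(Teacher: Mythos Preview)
Your proposal is correct and follows essentially the same approach as the paper: localize the global loss $\epsilon$ to each question pair via the $\Omega(1/\mathrm{poly}(n))$ lower bound on $\mu$, read off the three consistency inequalities of Part~1 as $1-2\Pr[\text{lose}\mid x,y]$, and then chain them through Bob's exact identity $\tau^B_y(W(a+a'))=\tau^B_y(W(a))\tau^B_y(W(a'))$ using triangle inequalities in the state-dependent norm to obtain Part~2. One small bookkeeping slip: each ``move an operator across'' step costs $\sqrt{O(n^6)\epsilon}=O(n^3)\sqrt{\epsilon}$ in norm (not $O(n^6)\sqrt{\epsilon}$), so a constant number of such steps squared gives the claimed $O(n^6)\epsilon$---your final bound is right, only the intermediate exponent was misstated.
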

\begin{proof}
Suppose $S=(\tau^A,\tau^B,\ket{\psi})$ wins this game with probability $1- \epsilon$.
If we condition on a round in which  Bob receives question $ y= (W(a), W(a'))$ and Alice receives $x = W(a)$ then the probability they lose on this question pair is bounded above by $(1/ \pi(x,y)) \epsilon\leq O(n^6)\epsilon$. Indeed otherwise they will lose the overall game with a probability greater than $\epsilon$. Hence for any $x=W(a)\in\I_A$ and $y=(W(a),W(a'))\in\I_B$, we have
\begin{align*}
    \bra{\psi} \tau^A(W(a))\otimes \tau^B_y(W(a)) \ket{\psi} =& \bra{\psi}(E^x_0-E^x_1)\otimes (F^y_{0,0}+F^y_{0,1}-F^y_{1,0}-F^y_{1,1})\ket{\psi}\\
    =&\bra{\psi}(E^x_0\otimes F^y_{0,0}+E^x_0\otimes F^y_{0,1}+E^x_1\otimes F^y_{1,0}+E^x_1\otimes F^{y}_{1,1})\ket{\psi}\\
    &-\bra{\psi}(E^x_0\otimes F^y_{1,0}+E^x_0\otimes F^y_{1,1}+E^x_1\otimes F^y_{0,0}+E^x_1\otimes F^{y}_{0,1})\ket{\psi}\\
    =&Pr(\text{win}|x,y)-Pr(\text{lose}|x,y)\\
    =&1-2Pr(\text{lose}|x,y)\geq 1-O(n^6)\epsilon
\end{align*}
A similar analysis holds in the case that Bob receives question $ y= (W(a), W(a'))$ and Alice instead receives question $x = W(a')$. In particular we get
\begin{align*}
    \bra{\psi} \tau^A(W(a'))\otimes \tau^B_y(W(a')) \ket{\psi}\geq 1-O(n^6)\epsilon.
\end{align*}
Next we consider the instances in which Bob receives question $ y= (W(a), W(a'))$ and Alice receives question $x= W(a+a')$ where $\abs{a+a'}\leq 6$. In this case the winning condition requires $b_1 +b_2 = c$, so we have
\begin{align*}
    \bra{\psi} \tau^A(W(a+a'))\otimes \tau^B_y(W(a+a'))=& \bra{\psi}(E^x_0-E^x_1)\otimes (F^y_{0,0}+F^y_{1,1}-F^y_{0,1}-F^y_{1,0})\ket{\psi}\\
    =&\bra{\psi}(E^x_0\otimes F^y_{0,0}+E^x_0\otimes F^y_{1,1}+E^x_1\otimes F^y_{1,0}+E^x_1\otimes F^{y}_{0,1})\ket{\psi}\\
    &-\bra{\psi}(E^x_0\otimes F^y_{1,0}+E^x_0\otimes F^y_{0,1}+E^x_1\otimes F^y_{0,0}+E^x_1\otimes F^{y}_{1,1})\ket{\psi}\\
    =&Pr(\text{win}|x,y)-Pr(\text{lose}|x,y)\\
    =&1-2Pr(\text{lose}|x,y)\geq 1-O(n^6)\epsilon.
\end{align*}
To see the inequalities in the second part hold for fixed $W(a)$ and $W(a')$ where $\abs{a+a'}\leq 6$, we take $x=W(a+a'),y=(W(a),W(a'))$ to be the corresponding inputs for Alice and Bob. Let $\rho:=\ket{\psi}\bra{\psi}$. By the above work, we have
\begin{align*}
    &\norm{\tau^A(W(a))\otimes\Id-\Id\otimes\tau^B_y(W(a))}_\rho^2=2-2\bra{\psi}\tau^A(W(a))\otimes\tau^B_y(W(a))\ket{\psi}\leq O(n^6)\epsilon,\\
    &\norm{\tau^A(W(a'))\otimes\Id-\Id\otimes\tau^B_y(W(a'))}_\rho^2=2-2\bra{\psi}\tau^A(W(a'))\otimes\tau^B_y(W(a'))\ket{\psi}\leq O(n^6)\epsilon,\text{ and }\\
    &\norm{\tau^A(W(a+a'))\otimes\Id-\Id\otimes\tau^B_y(W(a+a'))}_\rho^2=2-2\bra{\psi}\tau^A(W(a+a'))\otimes\tau^B_y(W(a+a'))\ket{\psi}\leq O(n^6)\epsilon.
\end{align*}
Note that $\tau^B_y(W(a+a'))=\tau^B_y(W(a'))\tau^B_y(W(a))$ and that $\tau^A(W(a))$ and $\tau^B_y(W(a'))$ are unitary operators. We have
\begin{align*}
   \norm{\tau^A(W(a))\tau^A(W(a'))-\tau^A(W(a+a'))}_{\rho_A}
   =& \norm{\Big(\tau^A(W(a))\tau^A(W(a'))-\tau^A(W(a+a'))\Big)\otimes\Id}_\rho\\
   \leq&\norm{\tau^A(W(a))\tau^A(W(a'))\otimes\Id-\tau^A(W(a))\otimes\tau^B_y(W(a'))}_\rho\\
   +&\norm{\tau^A(W(a))\otimes\tau^B_y(W(a'))-\Id\otimes \tau^B_y(W(a'))\tau^B_y(W(a))}_\rho\\
   +&\norm{\Id\otimes \tau^B_y(W(a+a'))-\tau^A(W(a+a'))\otimes \Id}_\rho\\
    \leq&O(n^3)\sqrt{\epsilon}.
\end{align*}
Hence 2(a) holds, 2(b) holds for the same reason, while 2(c) follows from 2(a) and 2(b).
\end{proof}

\subsection{Low-weight anti-commutation test}
\label{sec:LW-anti-commutation}

  Next we introduce a natural version of the anti-commutation test built from the well-known Magic Square game which we described in \Cref{subsec:rigidity}. To incorporate this game in our anti-commutation test, we ask the players to play the Magic Square game on a specified 2-qubit register of their shared $n$-qubit state. Importantly, for all but one question $v_9$ Alice will not be able to tell if she is playing the low-weight linearity test or anti-commutation test, and thus we can combine the rigidity constraints from each game. Recall that the system of equations in the Magic Square game has an operator solution $\{A_1,\cdots, A_9\}$ defined in \Cref{Alice}. We describe the low-weight anti-commutation test in \Cref{fig:(LW)Anti-commutationTest}.
\begin{figure}[h]
    \centering
    \begin{mdframed}
    \begin{enumerate}
    \item The verifier samples uniformly at random a string $a \in \{ 0, 1 \}^{n}$ with exactly two non-zero entries $i<j$. The verifier also samples a row or column $q \in \lbrace r_1, r_2, r_3, c_1, c_2,c_3 \rbrace$, and a variable $v_k$ contained in $q$ as in the Magic Square game.
    \item Bob receives the question $(q,a)$.
    \item If $k \neq 9$ then Alice receives $W(a)=I^{i-1}W_iI^{j-i}W_jI^{n-j}\in \mcI_A$ with $\sigma_{W_i} \otimes \sigma_{W_j}=A_k$. If $k=9$ then Alice receives question $(v_9, a)$.
    \item The players win if and only if Bob responds with a satisfying assignment to $q$ and  Alice provides an assignment to variable $v_k$ that is consistent with Bob's.
\end{enumerate}
\end{mdframed}
    \caption{Low-weight anti-commutation test.}
    \label{fig:(LW)Anti-commutationTest}
\end{figure}

In the above game, for each string $a \in \lbrace 0 ,1 \rbrace^{n}$ with $|a|=2$ we define the following set of observables for Alice $\lbrace O_1(a), \dots, O_9(a) \rbrace.$ If the players are winning the anti-commutation test with probability $1-\epsilon$ then the above observables must determine the operators of a quantum strategy for Magic Square that wins with probability at least $1-O(n^2)\epsilon$. Applying the rigidity of the Magic Square game (\Cref{lemma:MS}) we get the following Lemma.
\begin{lemma}\label{lem:anticommutation}
If the players are winning low-weight anti-commutation test with probability $1-\epsilon$, then for any $a \in \lbrace 0,1 \rbrace^{n}$ with $|a|=2$ we have
\begin{align*}
    &\norm{O_{1}(a) O_{5}(a) + O_{5}(a)O_{1}(a)}^2_{\sigma}\leq O(n){\sqrt{\epsilon}}, \text{ and }\\
    &\norm{O_{4}(a) O_{2}(a) + O_{2}(a)O_{4}(a)}^2_{\sigma} \leq O(n){\sqrt{\epsilon}}.
\end{align*}
\end{lemma}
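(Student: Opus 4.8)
The plan is to reduce this to the rigidity of the Magic Square game (Lemma \ref{lemma:MS}) applied pointwise in $a$, and then extract the two anti-commutation relations from the known properties of the canonical Magic Square strategy. First I would fix a string $a\in\{0,1\}^n$ with $|a|=2$, say with non-zero entries at positions $i<j$. Conditioning on the verifier having sampled this particular $a$ in the low-weight anti-commutation test — an event of probability $\frac{1}{\binom{n}{2}}=\Omega(1/n^2)$ — I get an induced two-player game which is (a relabelling of) the Magic Square game: Bob is asked one of the six row/column questions $q$, Alice is asked a variable $v_k$ contained in $q$, and the win condition is exactly consistency of a satisfying assignment to $q$ with Alice's value for $v_k$. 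Since the overall strategy $S$ wins the low-weight anti-commutation test with probability $1-\epsilon$, the conditional strategy $S_a$ wins this Magic Square instance with probability at least $1-O(n^2)\epsilon$: otherwise the loss on the rounds with this particular $a$ alone would exceed $\epsilon$. Here Alice's conditional observables are the $O_1(a),\dots,O_9(a)$ defined just before the lemma (for $k\neq 9$ these come from the standard $W(a)$-questions she is asked; for $k=9$ from the special question $(v_9,a)$), and Bob's observables on question $q$ are the corresponding products, exactly as in the canonical strategy description in Section \ref{subsec:rigidity}.

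Next I would invoke Lemma \ref{lemma:MS}: since $S_a$ wins Magic Square with probability $1-O(n^2)\epsilon$, it is $O(\sqrt{n^2\epsilon})=O(n\sqrt{\epsilon})$-close to the canonical perfect strategy, meaning there are local isometries $V_A,V_B$ and an auxiliary state $\ket{aux}$ such that $V_A$ maps (up to the stated error, measured in $\|\cdot\|_{\rho_a}$ where $\rho_a$ is Alice's reduced state) each $O_k(a)$ to $A_k\otimes\Id$ acting on two EPR pairs tensored with $\ket{aux}$. In the canonical solution (Table \ref{Alice}) one has $A_1=\sigma_I\otimes\sigma_Z$, $A_5=\sigma_I\otimes\sigma_X$, $A_2=\sigma_Z\otimes\sigma_I$, $A_4=\sigma_X\otimes\sigma_I$, and these satisfy the exact relations $A_1A_5+A_5A_1=0$ and $A_4A_2+A_2A_4=0$ (anti-commutation of $\sigma_X$ and $\sigma_Z$ on a single qubit). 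Transporting these exact relations back through $V_A$ and accounting for the approximation error — using that the $O_k(a)$ are unitary, that $\|\cdot\|_{\sigma}$ (with $\sigma$ the relevant reduced density matrix) is left-unitarily-invariant and sub-additive under composition with bounded operators, and a triangle-inequality bookkeeping over the constantly many substitutions — yields $\|O_1(a)O_5(a)+O_5(a)O_1(a)\|_\sigma^2\leq O(n)\sqrt{\epsilon}$ and likewise for $O_4,O_2$, which is the claimed bound. (The squared-norm-versus-norm and the precise power of $n$ should be tracked carefully but follow the same pattern as the end of the proof of Lemma \ref{Lem:LWLT}.)

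The main obstacle I anticipate is the bookkeeping around \emph{which} observables the $O_k(a)$ actually are and that they genuinely assemble into a legitimate Magic Square strategy — in particular confirming that Bob's conditional behaviour on a row/column question can be written in terms of products of Alice-type observables consistently, and that the $k=9$ special-question branch does not break the reduction (this is precisely why the game was designed so that ``for all but one question $v_9$ Alice cannot tell which sub-game she is in''). A secondary technical point is making the passage from ``closeness of strategies'' (Definition \ref{defn:SZKMIP*}-style, via isometries and an auxiliary state) to ``closeness of operators in $\rho$-norm'' fully rigorous, since the isometry and auxiliary register must be handled uniformly; but this is routine given the left-unitary-invariance of the $\rho$-norm and is the same manipulation already used elsewhere in the paper. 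I do not expect the constants or exact polynomial degree in $n$ to matter for downstream use, so I would not optimize them.
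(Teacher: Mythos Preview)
Your proposal is correct and follows exactly the approach the paper takes: the paper's proof (given in the paragraph immediately preceding the lemma) is just the two-line observation that conditioning on a fixed weight-two string $a$ yields a Magic Square strategy winning with probability at least $1-O(n^2)\epsilon$, and then the rigidity statement \Cref{lemma:MS} is invoked to extract the two anti-commutation bounds. Your identification of the conditioning probability $\binom{n}{2}^{-1}$, your use of the canonical relations $A_1A_5+A_5A_1=0$ and $A_4A_2+A_2A_4=0$, and your bookkeeping of the error $O(\sqrt{n^2\epsilon})=O(n\sqrt{\epsilon})$ (hence squared-norm bound $O(n)\sqrt{\epsilon}$) all match the paper's intended argument; the obstacles you flag (the $v_9$ branch, and translating strategy-closeness into operator $\rho$-norm bounds) are routine and handled exactly as you outline.
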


\subsection{Formal statements}
\label{sec:LW-PBT}

Combining the low-weight linearity test and low-weight anti-commutation test from the previous sections, we now construct the low-weight Pauli braiding test and state its rigidity result.

\begin{defn}
The low-weight Pauli braiding test (LWPBT) is played by executing with probability $1/2$ either the low-weight anti-commutation test or the low-weight linearity test.
\end{defn}

If a quantum strategy $S=(\tau^A,\tau^B,\ket{\psi})$ can win LWPBT with probability $1-\epsilon$, then it must be winning both the low-weight linearity test and low-weight anti-commutation test with probability at least $1-2\epsilon$. For every $W\in\{X,Z\}$ and $i\in[n]$, we use $\tau_W^A(e_i)$ to denote $\tau^A(I^{i-1}WI^{n-i})$. For any bit string $a=0^{j-2}110^{n-j}$ with $2\leq j\leq n$, since Alice cannot tell a question $I^{j-1}WI^{n-j}$ is from the low-weight linearity test or low-weight anti-commutation test, we must have $O_1(a)=\tau_X^A(e_j)$ and $O_5(a)=\tau_Z^A(e_j)$. Similarly, for any bit string $a=0^{j-1}110^{n-j-1}$ with $1\leq j\leq n-1$ we must have $O_2(a)=\tau_X^A(e_j)$ and $O_4(a)=\tau_Z^A(e_j)$. Then the following theorem follows directly from \Cref{Lem:LWLT} and \Cref{lem:anticommutation}.

\begin{thm}\label{polyn}
Suppose $S=(\tau^A,\tau^B,\ket{\psi}\in\H_A\otimes\H_B)$ is a strategy for LWPBT with wining probability $1-\epsilon$. Let $\rho_A:=\Tr_{\H_B}(\ket{\psi}\bra{\psi})$.
\begin{enumerate}
    \item For every $y=(W(a),W(a'))\in\I_B$, we have the following bounds on consistency.
    \begin{enumerate}
        \item $\bra{\psi}\tau^A(W(a))\otimes\tau^B_y(W(a))\ket{\psi}\geq 1-O(n^6)\epsilon$,
        \item $\bra{\psi}\tau^A(W(a'))\otimes\tau^B_y(W(a'))\ket{\psi}\geq 1-O(n^6)\epsilon$, and
        \item if in addition $\abs{a+a'}\leq 6$, then $\bra{\psi} \tau^A(W(a+a'))\otimes \tau^B_y(W(a+a'))\geq 1-O(n^6)\epsilon$.
    \end{enumerate}

    \item For every $W \in \{X,Z\}^{n}$ and  $a,a' \in \{ 0,1 \}^{n} $ with $\abs{a},\abs{a'},\abs{a+a'}\leq 6$, we have the following linearity bounds and commutation bound.
    \begin{enumerate}
        \item $\norm{\tau^A(W(a))\tau^A(W(a'))-\tau^A(W(a+a'))}^2_{\rho_A}\leq O(n^6)\epsilon$,
        \item $\norm{\tau^A(W(a'))\tau^A(W(a))-\tau^A(W(a+a'))}^2_{\rho_A}\leq O(n^6)\epsilon $, and
        \item $\norm{\tau^A(W(a))\tau^A(W(a'))-\tau^A(W(a'))\tau^A(W(a))}^2_{\rho_A}\leq O(n^6)\epsilon $
    \end{enumerate}

    \item For any $i, j\in [n]$ we have the following commutation/anti-commutation bound.
    \begin{enumerate}
        \item $\norm{\tau^A_X(e_i)\tau^A_Z(e_j)-(-1)^{\delta_{ij}}\tau^A_Z(e_j)\tau^A_X(e_i)}^2_{\rho_A}\leq O(n)\sqrt{\epsilon}$.
    \end{enumerate}
\end{enumerate}
\end{thm}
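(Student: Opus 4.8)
The plan is to derive \Cref{polyn} by routing the strategy $S$ through the two rigidity lemmas already established for the component games. The first step is a trivial reduction: since the LWPBT plays each of the low-weight linearity test and the low-weight anti-commutation test with probability $1/2$, any $S$ winning the LWPBT with probability $1-\epsilon$ must win each sub-test with probability at least $1-2\epsilon$ (if $S$ lost a sub-test with probability exceeding $2\epsilon$ it would lose the LWPBT with probability exceeding $\epsilon$). This is the only place where the coin-flip structure of the LWPBT is used, and the harmless factor $2$ disappears into the $O(\cdot)$ notation.

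Given this, Parts 1 and 2 of the theorem are immediate. Viewing $S$ as a strategy for the low-weight linearity test winning with probability $1-2\epsilon$, apply \Cref{Lem:LWLT}: its Part 1 supplies the three consistency estimates of Part 1, and its Part 2 supplies the three linearity/commutation estimates of Part 2, after the cosmetic substitution $\epsilon \mapsto 2\epsilon$.

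The remaining work is Part 3, where I would split on the value of $\delta_{ij}$. If $i \neq j$, so that $(-1)^{\delta_{ij}}=1$ and the claim is a commutation bound, choose any $W \in \{X,Z\}^n$ with $W_i = X$, $W_j = Z$ and take $a = e_i$, $a' = e_j$; then $\abs{a},\abs{a'},\abs{a+a'} \le 2 \le 6$, and since $\tau^A(W(e_i)) = \tau^A_X(e_i)$ and $\tau^A(W(e_j)) = \tau^A_Z(e_j)$ by definition, the commutation estimate in Part 2(c) already gives $\norm{\tau^A_X(e_i)\tau^A_Z(e_j) - \tau^A_Z(e_j)\tau^A_X(e_i)}^2_{\rho_A} \le O(\mathrm{poly}(n))\epsilon$, which (using $\epsilon \le \sqrt{\epsilon}$) lies within the stated order up to the polynomial-in-$n$ factor. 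If $i = j$, so that $(-1)^{\delta_{ij}}=-1$ and the claim is the anti-commutation bound $\norm{\tau^A_X(e_i)\tau^A_Z(e_i) + \tau^A_Z(e_i)\tau^A_X(e_i)}^2_{\rho_A} \le O(n)\sqrt{\epsilon}$, I would use the low-weight anti-commutation test together with the identification noted just before the theorem: because Alice cannot tell whether a one-qubit-supported question $I^{k-1}WI^{n-k}$ originated from the linearity test or from the anti-commutation test, her Magic-Square observables $O_\ell(a)$ must coincide with the single-qubit observables $\tau^A_W(e_k)$. Concretely, for $i < n$ take the weight-$2$ string $a$ supported on $\{i,i+1\}$, so that $O_2(a) = \tau^A_X(e_i)$ and $O_4(a) = \tau^A_Z(e_i)$, and invoke \Cref{lem:anticommutation} (which is the rigidity of the Magic Square game, \Cref{lemma:MS}, transported onto that $2$-qubit register, where the $O(n)$ factor comes from the $\binom{n}{2}$ choices of register combined with the square root in the Magic Square rigidity bound); this yields $\norm{O_4(a)O_2(a) + O_2(a)O_4(a)}^2_{\rho_A} \le O(n)\sqrt{\epsilon}$, which is exactly the claim. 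For $i = n$, use instead the weight-$2$ string supported on $\{n-1,n\}$, for which $O_1(a) = \tau^A_X(e_n)$ and $O_5(a) = \tau^A_Z(e_n)$, and apply the other anti-commutation estimate of \Cref{lem:anticommutation}.

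I expect almost no genuine difficulty at this stage, because all of it has been front-loaded: transporting game-level consistency into operator identities in the $\rho_A$-seminorm, absorbing the inverse-probability factor $1/\mu(x,y) = O(\mathrm{poly}(n))$, and inheriting the $\sqrt{\epsilon}$ loss from the Magic Square self-test are carried out inside \Cref{Lem:LWLT} and \Cref{lem:anticommutation}. The only points that still need care are the case split in Part 3, the two boundary choices of weight-$2$ string (according to whether $i = n$), and checking that the identification $O_\ell(a) = \tau^A_W(e_k)$ is legitimate. The latter is precisely what the design of the anti-commutation test secures: Alice's view is identical across the two sub-tests for every question except the single distinguished one, $v_9$, so no information about which sub-test is being played leaks into any of the questions relevant to Parts 1--3.
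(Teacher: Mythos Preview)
Your proposal is correct and follows the paper's approach exactly: reduce to winning probability $1-2\epsilon$ on each sub-test, then invoke \Cref{Lem:LWLT} for Parts~1--2 and \Cref{lem:anticommutation} together with the indistinguishability-of-questions identification for Part~3. The paper's own proof is just the paragraph preceding the theorem (the identification of $O_\ell(a)$ with $\tau^A_W(e_k)$ via adjacent weight-$2$ strings, including the two boundary cases you spell out) plus the one-line ``follows directly from \Cref{Lem:LWLT} and \Cref{lem:anticommutation}.''

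One small remark on Part~3 for $i\neq j$: your route through Part~2(c) yields $O(n^6)\epsilon$ rather than the stated $O(n)\sqrt{\epsilon}$, and you honestly flag the mismatch in the polynomial factor. The paper does not treat this case any more explicitly than you do. If you want the bound exactly as written, observe that the anti-commutation test allows \emph{any} weight-$2$ string, so for $i\neq j$ you may take $a$ supported on $\{i,j\}$ and read off the commutation of $\tau^A_X(e_i)$ and $\tau^A_Z(e_j)$ directly from the Magic Square rigidity (\Cref{lemma:MS}) on that register, with the same $O(n)\sqrt{\epsilon}$ loss as in \Cref{lem:anticommutation}. Either way the distinction is immaterial downstream, since \Cref{prop:4conditions} only consumes these bounds at the level $\delta=O(n^6)\sqrt{\epsilon}$.
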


\section{Rigidity of low-weight Pauli braiding test}\label{sec:rigidity}
In this section, we use a group-theoretical approach to analyze the LWPBT. We show that every near-optimal strategy for LWPBT forms an approximate homomorphism of the Weyl-Heisenberg group. For rounding approximate homomorphisms to exact representations, we provide an enhanced Gowers-Hatami theorem. Applying this new stability result to the Weyl-Heisenberg group, we prove that any strategy for LWPBT with winning probability $1-\epsilon$ must be $poly(n)\sqrt{\epsilon}$-close to the canonical perfect strategy.

\subsection{An enhanced Gowers-Hatami theorem}\label{GH}

The Gowers-Hatami theorem \cite{GH17} and its variant \cite{Vid18} play an important role in the rigidity analysis for nonlocal games.
The theorem states that every approximate homomorphism of a finite group is close to a representation. Some nonlocal games $\mcG$ can be modeled by a finite group~$G$ in the sense that optimal strategies for $\mcG$ correspond to representations of $G$ and near-optimal strategies for $\mcG$ correspond to approximate homomorphisms of~$G$. If in addition, $\mcG$ has a canonical optimal strategy $\wtd{\mcS}$, then this theorem implies every near-optimal strategy $\mcS$ for $\mcG$ is close to~$\wtd{\mcS}$. However, as discussed in \Cref{sec:intro} and \Cref{subsec:approxrep}, some subtle mathematical problems have come up in earlier approaches. In particular, one may need to discard some irreducible components of a representation that do not correspond to $\wtd{\mcS}$. In the following enhanced Gowers-Hatami theorem, we observe that as long as an approximate homomorphism satisfies some additional symmetry, it must be close to a representation containing no such ``junk" irreducible components.

\begin{thm}\label{thm:GH}
    Let $G$ be a finite group. Let $f:G\arr \msU(\mcH)$ be an $(\epsilon,\rho)$-homomorphism, and suppose~$S$ is a subgroup of $G$ satisfying
\begin{align}
    f(sg)=f(s)f(g) \text{ for all } s\in S, g\in G.
\end{align}
Then $f|_S:S\arr\msU(\mcH)$ is a representation of $S$, and there exists a Hilbert space $\mcK$, an isometry $V:\mcH\arr\mcK$, and a representation $\phi:G\arr\msU(\mcK)$ such that
    \begin{align}
        \norm{Vf(g)-\phi(g)V}^2_\rho\leq \epsilon \text{ for all } g\in G,
    \end{align}
and every irreducible component $\xi$ of $\phi$ satisfies $\hat{f|_S}(\xi|_S)\neq 0$.
\end{thm}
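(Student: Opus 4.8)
The plan is to follow the standard Gowers--Hatami construction, building the isometry $V$ from the Fourier transforms of $f$, but to be careful about which irreducible representations actually appear in the target representation $\phi$. First I would verify that $f|_S$ is a representation: for $s,t\in S$ the hypothesis $f(sg)=f(s)f(g)$ with $g=t$ gives $f(st)=f(s)f(t)$ exactly, and together with $f(s^{-1})=f(s)^*$ (part of being an $(\epsilon,\rho)$-homomorphism) this makes $f|_S$ a genuine unitary representation. Next, following \cite{GH17,Vid18}, I would define the candidate target space as $\mcK=\bigoplus_{\varphi\in\Irr(G)}\mcH\otimes\C^{d_\varphi}\otimes\C^{d_\varphi}$ (or the analogous construction in which $\phi=\bigoplus_\varphi \Id_{\mcH}\otimes\Id\otimes\varphi$ acts only on the last tensor leg), and the isometry $V:\mcH\arr\mcK$ by
\begin{align*}
    V\ket{\phi}=\bigoplus_{\varphi\in\Irr(G)}\sqrt{d_\varphi}\,\big(\hat f(\varphi)\otimes\Id\big)\big(\ket{\phi}\otimes\ket{u_\varphi}\big),
\end{align*}
where $\ket{u_\varphi}=\frac{1}{\sqrt{d_\varphi}}\sum_i\ket{i}\otimes\ket{i}$ is the maximally entangled vector on $\C^{d_\varphi}\otimes\C^{d_\varphi}$. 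The computation that $V$ is an isometry with respect to $\rho$ and that $\norm{Vf(g)-\phi(g)V}_\rho^2\le\epsilon$ is exactly the standard one, using the orthogonality relations $\sum_\varphi d_\varphi\Tr(\varphi(g))=\abs{G}[g=1]$ quoted in the preliminaries, so I would invoke it rather than redo it.

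The new content is the last clause: every irreducible component $\xi$ of the representation $\phi$ actually produced by this $V$ satisfies $\hat{f|_S}(\xi|_S)\ne 0$. The key observation is that $\phi$ is \emph{not} all of $\bigoplus_\varphi(\text{copies of }\varphi)$ but only the part supported on the range of $V$ (we take $\phi$ to be the subrepresentation of $\bigoplus_\varphi \Id\otimes\varphi$ generated by $VV^*$, or equivalently restrict to $\overline{\mathrm{span}}\,\phi(G)V\mcH$, which is $\phi(G)$-invariant since $Vf(g)\approx\phi(g)V$ — here one must be a little careful and instead take the genuinely invariant subspace $\mcK':=\overline{\mathrm{span}}\{\phi(g)V\ket{v}:g\in G,\ket{v}\in\mcH\}$ and replace $\phi$ by its restriction to $\mcK'$, which is still a representation and still satisfies the approximation bound since $V\mcH\subseteq\mcK'$). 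So it suffices to show: if $\xi\in\Irr(G)$ has $\hat{f|_S}(\xi|_S)=0$, then the $\xi$-isotypic block of $\bigoplus_\varphi\Id\otimes\varphi$ is orthogonal to $\phi(G)V\mcH$. To do this I would use the extra symmetry $f(sg)=f(s)f(g)$ to show that the $\varphi$-component of $V$, namely $V_\varphi\ket{\phi}=\sqrt{d_\varphi}(\hat f(\varphi)\otimes\Id)(\ket{\phi}\otimes\ket{u_\varphi})$, intertwines the $S$-representations appropriately: the symmetry forces $\hat f(\varphi)$ to commute (on the relevant legs) with $f|_S(s)\otimes\overline{\varphi(s)}$, and hence the image $V_\varphi\mcH$ lies in the subspace of $\mcH\otimes\C^{d_\varphi}$ on which the $S$-action $f|_S\otimes\overline{\varphi|_S}$ agrees with the trivial-via-$\ket{u_\varphi}$ structure. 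Concretely, $V_\varphi=0$ unless $\varphi|_S$ shares an irreducible constituent with $f|_S$, i.e. unless $\hat{f|_S}(\varphi|_S)\ne 0$ — this is because $\hat f(\varphi)\ket{u_\varphi}$, viewed through the $S$-symmetry, projects onto the $f|_S$-isotypic content of $\overline{\varphi|_S}$. Since each irreducible component $\xi$ of $\phi$ on $\mcK'$ appears inside some $\varphi$-block with $V_\varphi\ne 0$ and $\phi(G)$-invariance preserves the property "$\xi|_S$ is a constituent of $f|_S$'', we get $\hat{f|_S}(\xi|_S)\ne 0$.

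The main obstacle I anticipate is the last step: making precise, via a clean Schur-type / Fourier argument, that the symmetry hypothesis $f(sg)=f(s)f(g)$ forces the $\varphi$-block $V_\varphi$ of the isometry to vanish whenever $\xi|_S$ is not a constituent of $f|_S$, and then propagating this from the image $V\mcH$ to the full invariant subspace $\mcK'=\overline{\mathrm{span}}\,\phi(G)V\mcH$. The subtlety flagged in the introduction is precisely that earlier approaches handled junk components by \emph{truncating} $V$, which can destroy the isometry property; the fix here is to never truncate $V$ but instead to \emph{define $\phi$ on the smallest invariant subspace containing $V\mcH$} and show that invariant subspace already avoids the bad irreducibles. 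I would also need to double-check that $f|_S$ being an exact representation (not just approximate) is what licenses the exact Fourier-support statement $\hat{f|_S}(\xi|_S)\ne0$ rather than an approximate one, and that the $\rho$-norm bookkeeping is unaffected since the approximation bound only concerns $\norm{Vf(g)-\phi(g)V}_\rho$ and $V\mcH\subseteq\mcK'$.
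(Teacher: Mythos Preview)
Your proposal is correct in spirit and lands on the same key fact as the paper, but it takes a detour that the paper avoids. You build the full Gowers--Hatami target $\mcK=\bigoplus_{\varphi\in\Irr(G)}(\cdots)$, then plan to restrict $\phi$ to the smallest invariant subspace $\mcK'$ containing $V\mcH$, and finally argue that the bad $\varphi$-blocks (those with $\hat{f|_S}(\varphi|_S)=0$) do not meet $\mcK'$. The paper instead proves \emph{first} that $\hat f(\varphi)=0$ whenever $\hat{f|_S}(\varphi|_S)=0$, via the one-line identity
\[
\big(f(s)\otimes\overline{\varphi(s)}\big)\,\hat f(\varphi)=\hat f(\varphi)\quad\text{for all }s\in S,
\]
(which follows from $f(sg)=f(s)f(g)$ by reindexing the sum), and then averaging over $s\in S$ to get $\hat{f|_S}(\varphi|_S)\cdot\hat f(\varphi)=\hat f(\varphi)$. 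Once this is in hand, the paper simply \emph{defines} $\mcK$ as the direct sum over only the good irreducibles $\Irr(G,f|_S):=\{\varphi:\hat{f|_S}(\varphi|_S)\neq 0\}$ and runs the usual Gowers--Hatami computation; the verification that $V^*V=\Id_\mcH$ still works because the omitted summands were zero anyway (one freely extends the sum back to all of $\Irr(G)$ when invoking $\sum_\varphi d_\varphi\Tr(\varphi(g))=\abs{G}[g=1]$).

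So your invariant-subspace step is unnecessary: the vanishing $\hat f(\varphi)=0$ for bad $\varphi$ (which you do identify, under the name ``$V_\varphi=0$'') already lets you drop those blocks from $\mcK$ at the outset without any truncation of $V$. This is exactly the point the introduction advertises---no truncation is needed because the unwanted components were never there. Also, a small correction: the symmetry does not make $\hat f(\varphi)$ \emph{commute} with $f(s)\otimes\overline{\varphi(s)}$; it makes $\hat f(\varphi)$ a \emph{fixed point} of left multiplication by it, which is what yields the clean averaging identity above.
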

\begin{proof}
     $f|_S$ respects the group structure of $S$, so it is a representation of $S$. Let $\Irr(G,f|_S):=\{\varphi\in \Irr(G): \hat{f|_S}(\varphi|_S)\neq 0 \}$. For any $\varphi\in \Irr(G)$ and $s\in S$, we have
     \begin{align*}
       \big(f(s)\otimes\overline{\phi(s)}\big)\hat{f}(\varphi)=\frac{1}{\abs{G}}\sum_{g\in G}f(s)f(g)\otimes \overline{\varphi(s)\varphi(g)}=\frac{1}{\abs{G}}\sum_{g\in G}f(sg)\otimes \overline{\varphi(sg)}=\hat{f}(\varphi).
     \end{align*}
     Since $\hat{f|_S}(\varphi|_S)=\frac{1}{\abs{S}}\sum_{s\in S}\big(f(s)\otimes\overline{\phi(s)}\big)\hat{f}(\varphi)$, it follows that
    \begin{align}
        \hat{f|_S}(\varphi|_S)\cdotp\hat{f}(\varphi)=\hat{f}(\varphi)\label{fourier}.
    \end{align}
In particular, for any $\xi\in \Irr(G)\setminus \Irr(G,f|_S)$, since $\hat{f|_S}(\xi|_S)=0$, \Cref{fourier} implies $\hat{f}(\xi)=0$. Let $\mcK:=\bigoplus\limits_{\varphi\in \Irr(G,f|_S)}\C^{d_\varphi}\otimes \mcH\otimes\C^{d_\varphi}$, and define a linear map $V:\mcH\arr\mcK$ via sending
\begin{align*}
    \ket{v}\mapsto \bigoplus_{\varphi\in \Irr(G,f|_S)}\sqrt{d_\varphi}\sum_{i=1}^{d_\varphi}\ket{i}\otimes\Big(\hat{f}(\varphi)(\ket{v}\otimes\ket{i})\Big),
\end{align*}
where $\{\ket{i}\}_{i=1}^{d_\varphi}$ is the standard basis for $\C^{d_\varphi}$. Since $\hat{f}(\varphi)=0$ for all $\varphi\in Irr(G)\setminus Irr(G,f|_S)$,
\begin{align*}
    V^*V&=\sum_{\varphi\in \Irr(G,f|_S)}d_\varphi\sum_{i=1}^{d_\varphi}\big(\Id_{\mcH}\otimes\bra{i}\big)\hat{f}(\varphi)^*\hat{f}(\varphi)\big(\Id_{\mcH}\otimes\ket{i}\big)\\
    &=\sum_{\varphi\in \Irr(G)}d_\varphi\sum_{i=1}^{d_\varphi}\big(\Id_{\mcH}\otimes\bra{i}\big)\hat{f}(\varphi)^*\hat{f}(\varphi)\big(\Id_{\mcH}\otimes\ket{i}\big)\\
    &=\frac{1}{\abs{G}^2}\sum_{\varphi\in \Irr(G)}d_{\varphi}\sum_{g,h\in G}f(g)^*f(h)\sum_{i=1}^{d_\varphi}\bra{i}\varphi(g)^T\overline{\varphi(h)}\ket{i}\\
    &=\frac{1}{\abs{G}^2}\sum_{g,h\in G}f(g)^*f(h)\sum_{\varphi\in \Irr(G)}d_{\varphi}\Tr\big(\varphi(g^{-1}h)  \big)\\
    &=\frac{1}{\abs{G}}\sum_{g\in G}f(g)^*f(g)=\Id_{\mcH}.
\end{align*}
Therefore, $V$ is an isometry. Now consider the representation $\phi:G\arr\msU(\mcK)$ defined by sending
\begin{align*}
    g\mapsto\bigoplus_{\varphi\in \Irr(G,f|_S)} \varphi(g)\otimes \Id_{\mcH}\otimes \Id_{\C^{d_\varphi}}.
\end{align*}
We see that $\varphi$ is an irreducible component of $\phi$ only if $\varphi\in \Irr(G,f|_S)$. Note that for any $A\in M_d(\C)$ we have $\bra{i}A\ket{j}=\bra{j}A^T\ket{i}$ for all basis vectors $\ket{i}$ and $\ket{j}$. Then for any $g\in G$, we obtain that
\begin{align*}
    V^*\phi(g)V&=\sum_{\varphi\in \Irr(G,f|_S)}d_\varphi\sum_{i,j=1}^{d_\varphi}\bra{i}\varphi(g)\ket{j}\big(\Id_{\mcH}\otimes\bra{i}\big)\hat{f}(\varphi)^*\hat{f}(\varphi)\big(\Id_{\mcH}\otimes\ket{j}\big)\\
    &=\sum_{\varphi\in \Irr(G)}d_\varphi\sum_{i,j=1}^{d_\varphi}\bra{i}\varphi(g)\ket{j}\big(\Id_{\mcH}\otimes\bra{i}\big)\hat{f}(\varphi)^*\hat{f}(\varphi)\big(\Id_{\mcH}\otimes\ket{j}\big)\\
    &=\frac{1}{\abs{G}^2}\sum_{\varphi\in \Irr(G)}d_\varphi\sum_{h,k\in G}f(h)^*f(k)\sum_{i,j=1}^{d_\varphi}\bra{i}\varphi(g)\ket{j}\bra{i}\varphi(h)^T\overline{\varphi(k)}\ket{j}\\
    &=\frac{1}{\abs{G}^2}\sum_{\varphi\in \Irr(G)}d_\varphi\sum_{h,k\in G}f(h)^*f(k)\sum_{i,j=1}^{d_\varphi}\bra{i}\varphi(g)\ket{j}\bra{j}\varphi(k)^*\varphi(h)\ket{i}\\
 &=\frac{1}{\abs{G}^2}\sum_{h,k\in G}f(h)^*f(k)\sum_{\varphi\in \Irr(G)}d_\varphi \Tr\big(\varphi(k^{-1}hg)  \big)=\frac{1}{\abs{G}}\sum_{h\in G}f(h)^*f(hg).
\end{align*}
Then we get
\begin{align*}
    \norm{Vf(g)-\phi(g)V}_\rho^2&=2-\frac{1}{\abs{G}}\sum_{h\in G}2\mathfrak{Re}\Tr\big(f(g)^*f(h)^*f(hg)\rho\big)\\&=\frac{1}{\abs{G}}\sum_{h\in G}\Big(2-2\mathfrak{Re}\Tr\big(f(g)^*f(h)^*f(hg)\rho\big)  \Big)\\
   &=\frac{1}{\abs{G}}\sum_{h\in G}\norm{f(h)f(g)-f(hg)}^2_\rho.
\end{align*}
Since $f$ is an $(\epsilon,\rho)$-homomorphism of $G$, it follows that $\norm{Vf(g)-\phi(g)V}_\rho^2\leq\epsilon$ for all $g\in G$.
\end{proof}

\subsection{Proof of rigidity}\label{subsec:ProofOfRigidity}

We now aim to prove the rigidity of the low-weight Pauli Braiding test. To start, we fix an $n\in\N$, an $\epsilon>0$, and a quantum strategy  $S:=(\tau^A,\tau^B,\ket{\psi}\in\H_A\otimes\H_B)$ for the $n$-qubit LWPBT with winning probability $1-\epsilon$ through this section (except for \Cref{rigidity}).

Let $H(n)$ be the $n$-qubit Weyl-Heisenberg group generated by indeterminates $\{J,X_i,Z_i,i\in[n]\}$ subject to the relations
\begin{enumerate}
    \item[(R0)] $J$ is central, and $J^2=X_i^2=Z_i^2=1$ for all $1\leq i\leq n$,
    \item[(R1)] $[X_i,Z_i]=J$ for all or all $1\leq i\leq n$, and
    \item[(R2)] $[X_i,X_j]=[Z_i,Z_j]=[X_i,Z_j]=1$ for all $i\neq j$.
\end{enumerate}

Intuitively, this group models $n$-qubit Pauli measurements. It has an irreducible representation $\sigma:H(n)\arr\msU((\C^2)^{\otimes n})$ sending $J\mapsto -\Id$, $X_i\mapsto \sigma_X(e_i)$, and $Z_i\mapsto \sigma_Z(e_i)$ for all $i\in [n]$. In this section, the symbol $\sigma$ will be used exclusively to refer to this irreducible representation. The following two lemmas are well-known.

\begin{lemma}\label{lemma:irreps}
$\Irr(H(n))=\{\sigma\}\cup\{\chi_i:1\leq i\leq 2^{2n}\}$ where every $\chi_i:H(n)\arr\C$ is a one dimensional representation sending $J\mapsto 1$.
\end{lemma}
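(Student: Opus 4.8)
\textbf{Proof plan for \Cref{lemma:irreps}.}
The plan is to show that $H(n)$ is a finite group of order $2^{2n+1}$ whose commutator subgroup is $\{1,J\}\cong\Z_2$, and then count irreducible representations using the standard facts of representation theory. First I would establish that every element of $H(n)$ can be written in a normal form $J^{c}\,X_1^{a_1}Z_1^{b_1}\cdots X_n^{a_n}Z_n^{b_n}$ with $c,a_i,b_i\in\{0,1\}$: relations (R0)--(R2) let one move all $J$'s to the front (since $J$ is central), sort the $X_i$'s and $Z_i$'s by index (they commute across different indices by (R2)), and within an index use $[X_i,Z_i]=J$ to put $X_i$ before $Z_i$ at the cost of a central factor of $J$; finally $J^2=X_i^2=Z_i^2=1$ reduces all exponents mod $2$. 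This gives $\abs{H(n)}\leq 2^{2n+1}$, and the irreducible representation $\sigma$ together with the $J\mapsto 1$ property shows these normal forms are all distinct (e.g. $\sigma$ already distinguishes the $2^{2n}$ choices of $(a_i,b_i)$ and $\sigma(J)=-\Id\neq\Id$), so $\abs{H(n)}=2^{2n+1}$.

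Next I would identify the commutator subgroup: every commutator of generators is $1$ or $J$ by (R1)--(R2), and $J=[X_1,Z_1]$ is itself a commutator, so $[H(n),H(n)]=\{1,J\}$. Consequently the abelianization $H(n)/\{1,J\}$ has order $2^{2n}$, which gives exactly $2^{2n}$ one-dimensional representations; each of these factors through the quotient and hence sends $J\mapsto 1$. Label these $\chi_1,\dots,\chi_{2^{2n}}$. To see there are no other irreducibles, I would use $\sum_{\varphi\in\Irr(H(n))}d_\varphi^2=\abs{H(n)}=2^{2n+1}$. The one-dimensional reps contribute $2^{2n}\cdot 1$, leaving $2^{2n}$ to be accounted for by representations of dimension $\geq 2$; since $\sigma$ is irreducible of dimension $2^n$ (irreducibility of $\sigma$ is the standard fact that the $n$-qubit Pauli group acts irreducibly on $(\C^2)^{\otimes n}$), it contributes $(2^n)^2=2^{2n}$, exhausting the sum. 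Hence $\Irr(H(n))=\{\sigma\}\cup\{\chi_i:1\leq i\leq 2^{2n}\}$.

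The main obstacle is really just the bookkeeping in the normal-form argument — making sure the rewriting procedure terminates and that the relations (R0)--(R2) suffice to reach the claimed form — but this is routine since $J$ is central and the only non-trivial commutations are the within-index ones. The irreducibility of $\sigma$ can be cited as standard (it is, e.g., a consequence of the fact that the $\sigma_W(a)$ for $W\in\{I,X,Z\}^n$, $a\in\{0,1\}^n$ span $M_{2^n}(\C)$), so no separate argument is needed there. Since the lemma is explicitly flagged as well-known, a concise version of the above — state the order, identify the abelianization, invoke $\sum d_\varphi^2=\abs{G}$ together with irreducibility of $\sigma$ — suffices.
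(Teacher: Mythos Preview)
Your argument is correct and is the standard proof of this well-known fact; the paper itself omits the proof entirely, simply stating that the lemma is well-known, so there is no approach to compare against. Your normal-form/order computation, identification of the commutator subgroup as $\{1,J\}$, and the dimension count via $\sum_{\varphi}d_\varphi^2=\abs{H(n)}$ together with irreducibility of $\sigma$ are exactly the expected ingredients.
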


\begin{lemma}$H(n)$ has a normal form
   $ \mathcal{N}(w)=J^\alpha X_{i_1}\cdots X_{i_k}Z_{j_1}\cdots Z_{j_l} $
where $\alpha\in \{0,1\}$, $1\leq i_1<\cdots <i_k\leq n$, and $1\leq j_1<\cdots<j_l\leq n$.
\end{lemma}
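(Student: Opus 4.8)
The plan is to prove the two halves of the statement separately: (i) \emph{existence} — every element of $H(n)$ is equal in $H(n)$ to a word of the displayed shape; and (ii) \emph{uniqueness} — distinct words of that shape represent distinct elements of $H(n)$. Granting both, one defines $\mathcal{N}(g)$ to be the unique such word equal to $g$; by construction $\mathcal{N}(g)=g$ in $H(n)$ and $\mathcal{N}$ is injective, which is exactly the definition of a normal form.

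For (i) I would argue by induction on the length of a positive word in the generators. Relation (R0) gives $J=J^{-1}$, $X_i=X_i^{-1}$, $Z_i=Z_i^{-1}$, so every element of $H(n)$ is a product of generators from $\{J,X_i,Z_i\}$, and it suffices to show that right-multiplying a word $J^\alpha X_{i_1}\cdots X_{i_k}Z_{j_1}\cdots Z_{j_l}$ already in the stated form (with $\alpha\in\{0,1\}$, $i_1<\cdots<i_k$, $j_1<\cdots<j_l$) by one further generator returns a word of the same form. Multiplying by $J$: use centrality of $J$ and $J^2=1$ to reduce the exponent mod $2$. Multiplying by $Z_t$: commute $Z_t$ leftward through the $Z$-block using (R2), cancelling a repeated index via $Z_t^2=1$ and otherwise inserting it into sorted position. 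Multiplying by $X_t$: first commute $X_t$ leftward past the entire $Z$-block, using $Z_jX_t=X_tZ_j$ for $j\neq t$ from (R2) and $Z_tX_t=J\,X_tZ_t$ from (R1) (so crossing $Z_t$ — which happens at most once, precisely when $t\in\{j_1,\dots,j_l\}$ — produces exactly one central factor $J$); then push that $J$ to the front and reduce its exponent mod $2$; finally merge $X_t$ into the sorted $X$-block as in the $Z_t$ case. Each bubbling step is finite, so the induction closes. This shows $\abs{H(n)}\leq 2\cdot 2^n\cdot 2^n=2^{2n+1}$.

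For (ii) I would push the words through the representation $\sigma:H(n)\arr\msU((\C^2)^{\otimes n})$ introduced above, which is well-defined because $-\Id$, $\sigma_X(e_i)$, $\sigma_Z(e_i)$ visibly satisfy (R0)--(R2). A word $w=J^\alpha X_{i_1}\cdots X_{i_k}Z_{j_1}\cdots Z_{j_l}$ maps to $\sigma(w)=(-1)^\alpha\sigma_X(a)\sigma_Z(b)$, where $a,b\in\{0,1\}^n$ are the indicator vectors of $\{i_1,\dots,i_k\}$ and $\{j_1,\dots,j_l\}$. Since $\{\Id,\sigma_X,\sigma_Z,\sigma_X\sigma_Z\}$ is linearly independent in $M_2(\C)$, the $4^n$ operators $\sigma_X(a)\sigma_Z(b)=\bigotimes_i\sigma_X^{a_i}\sigma_Z^{b_i}$ are linearly independent; hence the $2^{2n+1}$ operators $\pm\sigma_X(a)\sigma_Z(b)$ are pairwise distinct, so $w\mapsto\sigma(w)$ is injective on words of the stated form. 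Therefore these $2^{2n+1}$ words are pairwise distinct in $H(n)$, which together with the bound from (i) forces $\abs{H(n)}=2^{2n+1}$ and makes the rewriting map of (i) a bijection; equivalently, every $g\in H(n)$ equals exactly one such word. (As a sanity check, $\abs{H(n)}=2^{2n+1}$ also follows from the preceding lemma via $\sum_{\varphi\in\Irr(H(n))}d_\varphi^2=\abs{H(n)}$.)

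The only step requiring genuine care is the $X_t$-past-$Z_t$ commutation in (i): one must verify that crossing the $Z$-block contributes a factor of $J$ exactly once, and only when $t$ occurs among the $j$'s, so that the $J$-exponent is tracked correctly mod $2$. Everything else — the sorting of the $X$- and $Z$-blocks and the cancellation of repeated indices — is routine bookkeeping, and the injectivity argument via $\sigma$ is immediate once linear independence of the single-qubit Pauli operators is in hand.
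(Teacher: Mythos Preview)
Your proof is correct and complete. The paper does not supply a proof of this lemma at all: it simply declares it (together with the preceding lemma on $\Irr(H(n))$) to be ``well-known'' and moves on. Your two-part argument --- a rewriting procedure for existence, followed by injectivity via the representation $\sigma$ --- is the standard way to establish such a normal form, and all the bookkeeping (in particular the single $J$ picked up when $X_t$ crosses $Z_t$) is handled correctly.
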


For any $W\in\{X,Z\}$ and $i\in [n]$, we use $\tau^A_W(e_i) $ and $ \tau^B_W(e_i)$ to denote $\tau^A(I^{i-1}WI^{n-i})$ and $\tau^B_y(I^{i-1}WI^{n-i})$ respectively, where $y=(I^{i-1}WI^{n-i},I^n)$. Define functions $f_P:H(n)\arr\msU(\mcH_P)$, $P\in\{A,B\}$ via sending
\begin{align*}
   J^\alpha X_{i_1}\cdots X_{i_k}Z_{j_1}\cdots Z_{j_l}\mapsto (-1)^\alpha\tau_X^P(e_{i_1})\cdots\tau_X^P(e_{i_k})\tau_Z^P(e_{j_1})\cdots\tau_Z^P(e_{j_l}),
\end{align*}
so in particular, $f_P(1)=\Id_{\mcH_P}$ and $f_P(J)=-\Id_{\mcH_P}$. We also define functions $\widetilde{f}_A,\widetilde{f}_B:H(n)\arr\msU(\mcH_A\otimes\mcH_B)$ via $\widetilde{f}_A(g):=f_A(g)\otimes \Id_{\H_B}$ and $\widetilde{f}_B(g):=\Id_{\H_A}\otimes f_B(g)$. Through this section, we use $\rho,\rho_A$ and $\rho_B$ to denote $\ket{\psi}\bra{\psi},\Tr_{\H_B}(\rho)$ and $\Tr_{\H_A}(\rho)$ respectively.

\begin{prop}\label{prop:4conditions}
    Let $\delta:=O(n^6)\sqrt{\epsilon}$. Every $f\in\{\widetilde{f}_A,\widetilde{f}_B\}$ satisfies the following conditions.
    \begin{enumerate}
    \item[(1)] $f$ respects all the relations in R0.
    \item[(2)] $\norm{f(X_i)f(Z_i)+f(Z_i)f(X_i)}^2_{\rho}\leq\delta$ for all $1\leq i\leq n$.
    \item[(3)]  $\norm{f(X_i)f(X_j)-f(X_j)f(X_i)}^2_{\rho},\norm{f(Z_i)f(Z_j)-f(Z_j)f(Z_i)}^2_{\rho}$, and $\norm{f(X_i)f(Z_j)-f(Z_j)f(X_i)}^2_{\rho}$ are bounded by $\delta$ whenever $i\neq j$.
    \item[(4)] For every $W_i$ where $W\in\{X,Z\},i\in [n]$, there exists a unitary $\widehat{W}_i$ commuting with  all $\{f(X_j),f(Z_j),j\in [n]\}$ such that $\norm{f(W_i)-\widehat{W}_i}^2_{\rho}\leq \delta$.
    \item[(5)] For any $s\geq 1$ and every monomial\footnote{Here we allow the coefficient of a monomial to be 1 or $-1$.} $v=v_1\cdots v_s$ over $\{f(X_i),f(Z_i),i\in[n]\}$ of degree $s$, there exists a unitary $\hat{v}$ commuting with all $\{f(X_i),f(Z_i),i\in [n]\}$ such that $\norm{v-\hat{v}}_\rho\leq s\sqrt{\delta}$.
\end{enumerate}
\end{prop}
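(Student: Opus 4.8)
The plan is to verify the five conditions in turn: the first four reduce quickly to \Cref{polyn} and the definitions of $\widetilde{f}_A,\widetilde{f}_B$, and the fifth follows from the fourth by telescoping. Condition (1) is immediate, since $\widetilde{f}_A(J)=\widetilde{f}_B(J)=-\Id$ is central and squares to $\Id$, while each of $\widetilde{f}_A(X_i),\widetilde{f}_A(Z_i),\widetilde{f}_B(X_i),\widetilde{f}_B(Z_i)$ is, up to a tensor factor of $\Id$, a binary observable of a two-outcome PVM and hence a Hermitian unitary whose square is $\Id$. Two features of $\|\cdot\|_\rho$ (with $\rho=\ket{\psi}\bra{\psi}$) will be used throughout: left unitary invariance, and $\|M\otimes\Id_{\H_B}\|_\rho=\|M\|_{\rho_A}$ (symmetrically on the $B$ side), so in particular $\|A\|_\rho=\|A\ket{\psi}\|$. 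The structural point that drives the whole argument is that $\widetilde{f}_A$ acts only on $\H_A$ and $\widetilde{f}_B$ only on $\H_B$, so any word in the $\widetilde{f}_A$-generators commutes with any word in the $\widetilde{f}_B$-generators.

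For $f=\widetilde{f}_A$, conditions (2) and (3) are just \Cref{polyn} read through $\widetilde{f}_A(W_i)=\tau^A_W(e_i)\otimes\Id$: condition (2) is item~(3) of \Cref{polyn} with $i=j$, and condition (3) is item~(2)(c) of \Cref{polyn} with $a=e_i$, $a'=e_j$ (so $|a|,|a'|,|a+a'|\le 2\le 6$) and $W\in\{X,Z\}^n$ carrying the desired letters at positions $i$ and $j$; the resulting $O(n)\sqrt\epsilon$ and $O(n^6)\epsilon$ bounds are both at most $\delta=O(n^6)\sqrt\epsilon$ since $\epsilon\le\sqrt\epsilon$. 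Condition (4) for $f=\widetilde{f}_A$ is obtained by taking $\widehat{W}_i:=\widetilde{f}_B(W_i)=\Id\otimes\tau^B_W(e_i)$, which is unitary, commutes with every $\widetilde{f}_A(X_j),\widetilde{f}_A(Z_j)$ by the tensor-factor remark, and satisfies $\|\widetilde{f}_A(W_i)-\widehat{W}_i\|_\rho^2\le O(n^6)\epsilon\le\delta$ by the weight-one consistency bound in item~(1)(a) of \Cref{polyn}. The $f=\widetilde{f}_B$ statements follow by transfer: that same consistency bound says each $\widetilde{f}_B$-generator agrees with the corresponding $\widetilde{f}_A$-generator on $\ket{\psi}$ up to $O(n^3)\sqrt\epsilon$, and by sliding Bob generators one at a time next to $\ket{\psi}$ (using that Alice's and Bob's operators commute, together with left unitary invariance) one gets $(\Id\otimes\tau^B_W(e_i)\tau^B_{W'}(e_j))\ket{\psi}$ within $O(n^3)\sqrt\epsilon$ of $(\tau^A_{W'}(e_j)\tau^A_W(e_i)\otimes\Id)\ket{\psi}$; combining two such estimates reduces the (anti)commutator of two $\widetilde{f}_B$-generators on $\ket{\psi}$ to that of two $\widetilde{f}_A$-generators, already controlled, with an extra $O(n^3)\sqrt\epsilon$ absorbed into $\delta$ after squaring. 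Condition (4) for $\widetilde{f}_B$ is symmetric, taking $\widehat{W}_i:=\widetilde{f}_A(W_i)$.

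Condition (5) follows from (4) by telescoping. Take $f=\widetilde{f}_A$ (the case $f=\widetilde{f}_B$ being symmetric) and a monomial $v=v_1\cdots v_s$ with $v_\ell=a_\ell\otimes\Id$, each $a_\ell$ a signed Alice observable $\pm\tau^A_{W^{(\ell)}}(e_{i_\ell})$; let $b_\ell:=\pm\tau^B_{W^{(\ell)}}(e_{i_\ell})$ be the matching signed Bob observable and set $\hat{v}:=\Id\otimes(b_s b_{s-1}\cdots b_1)$, which is unitary and, being of the form $\Id\otimes(\cdot)$, commutes with every $\widetilde{f}_A(X_j),\widetilde{f}_A(Z_j)$. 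Starting from $v\ket{\psi}=(a_1\cdots a_s\otimes\Id)\ket{\psi}$, replace $a_s$ by $b_s$, then $a_{s-1}$ by $b_{s-1}$, and so on: at the $k$-th step the current state is $(a_1\cdots a_{s-k+1}\otimes b_s\cdots b_{s-k+2})\ket{\psi}$, the factor $a_{s-k+1}\otimes\Id$ commutes with the already-produced $\Id\otimes b_s\cdots b_{s-k+2}$ and so can be slid to act directly on $\ket{\psi}$, where the single-generator consistency bound replaces it by $\Id\otimes b_{s-k+1}$ at cost $\sqrt\delta$. After $s$ steps one reaches $(\Id\otimes b_s\cdots b_1)\ket{\psi}=\hat{v}\ket{\psi}$, hence $\|v-\hat{v}\|_\rho\le s\sqrt\delta$.

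The one genuine obstacle is that $\|\cdot\|_\rho$ is only left (not right) unitarily invariant, so one cannot naively substitute a generator sitting in the interior of a word; the tensor-product structure circumvents this precisely because Alice's and Bob's operators commute, letting every substitution be repositioned to act directly on $\ket{\psi}$, where the single-generator consistency bound from item~(1) of \Cref{polyn} is the only available estimate and does apply. The rest is bookkeeping: verifying that $\delta=O(n^6)\sqrt\epsilon$ dominates every error that occurs — the $O(n^6)\epsilon$ consistency and linearity errors and the $O(n)\sqrt\epsilon$ Magic-Square anticommutation error — which needs only $\epsilon\le\sqrt\epsilon$ and $(x+y)^2\le 2x^2+2y^2$.
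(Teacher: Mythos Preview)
Your proof is correct and follows essentially the same approach as the paper. Parts (1)--(4) match the paper's argument almost verbatim: for $\widetilde{f}_A$ they are direct consequences of \Cref{polyn}, and for $\widetilde{f}_B$ both you and the paper transfer via the single-generator consistency bound $\|\widetilde{f}_A(W_i)-\widetilde{f}_B(W_i)\|_\rho^2\le O(n^6)\epsilon$, using that Alice's and Bob's operators commute to repair the lack of right unitary invariance of $\|\cdot\|_\rho$.

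For part (5) there is a cosmetic difference: the paper phrases the argument as an induction on the degree $s$, setting $\hat v=\hat v_{s+1}\hat v'$ and using the decomposition $v=v'(v_{s+1}-\hat v_{s+1})+\hat v_{s+1}(v'-\hat v')+\hat v_{s+1}\hat v'$, while you unroll this into an explicit $s$-step telescoping with the concrete choice $\hat v=\Id\otimes(b_s\cdots b_1)$. These are the same computation --- unwinding the paper's induction with $\hat v_\ell=\widetilde{f}_B(W_\ell)$ produces exactly your $\hat v$ --- so neither approach gains anything over the other. Your explicit identification of $\hat v$ makes the commutation with the $\widetilde f_A$-generators immediately visible, at the cost of having to state the $\widetilde f_B$ case separately rather than getting it from the abstract formulation of (4).
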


\begin{proof}
    We first examine that $\widetilde{f}_A$ satisfies (1), (2), (3), and (4). Since $\norm{W\otimes \Id_{\H_B}}_\rho=\norm{W}_{\rho_A}$ for all $W\in\mathcal{B}(\H_A)$, it follows from \Cref{polyn} that $\widetilde{f}_A$ satisfies part (1), (2), and (3). To establish part (4), observe that every $\widetilde{f}_B(W_i)$ commutes with $\widetilde{f}_A(X_j),\widetilde{f}_A(Z_j)$ for all $j\in[n]$, and
\begin{align}
    \norm{\widetilde{f}_A(W_i)-\widetilde{f}_B(W_i)}^2_\rho=2-2\bra{\psi}\tau_W^A(e_i)\otimes\tau_W^B(e_i)\ket{\psi}\leq O(n^6)\sqrt{\epsilon}\label{ABdiff}
\end{align}
by \Cref{polyn}. Hence part (4) follows by taking $\hat{W}_i=\widetilde{f}_B(W_i)$.

Next we examine that $\widetilde{f}_B$ satisfies (1), (2), (3) and (4). Part (1) is obvious. Since
\begin{align*} \widetilde{f}_B(X_i)\widetilde{f}_B(Z_i)+\widetilde{f}_B(Z_i)\widetilde{f}_B(X_i)&=\widetilde{f}_B(X_i)\big(\widetilde{f}_B(Z_i)-\widetilde{f}_A(Z_i) \big)+\widetilde{f}_B(Z_i)\big(\widetilde{f}_B(X_i)-\widetilde{f}_A(X_i)\big)\\
    &+ \widetilde{f}_A(Z_i)\big(\widetilde{f}_B(X_i)-\widetilde{f}_A(X_i) \big)+\widetilde{f}_A(X_i)\big(\widetilde{f}_B(Z_i)-f_A(Z_i) \big)\\
    &+\widetilde{f}_A(Z_i)\widetilde{f}_A(X_i)+\widetilde{f}_A(X_i)\widetilde{f}_A(Z_i),
\end{align*}
and each term on the right hand side has $\rho$-norm $\leq O(n^6)\sqrt{\epsilon}$ by \Cref{polyn}, it follows that $\norm{\widetilde{f}_B(X_i)\widetilde{f}_B(Z_i)+\widetilde{f}_B(Z_i)\widetilde{f}_B(X_i)}^2_\rho\leq O(n^6)\sqrt{\epsilon}$. Therefore, part (2) holds for $\widetilde{f}_B$ . With a similar argument we obtain that part (3) holds. Part (4) follows from \Cref{ABdiff} by taking $\hat{W}_i=\widetilde{f}_A(W_i)$.

Now we prove part (5) by induction on the monomial degree $s\geq 1$. The base case $s=1$ follows straight from part (4). Suppose part (5) holds for all monomials over $\{f(X_i),f(Z_i),i\in[n]\}$ of degree $s$. For any monomial $v=v_1\cdots v_sv_{s+1}$ over $\{f(X_i),f(Z_i),i\in[n]\}$ of degree $s+1$, let $v':=v_1\cdots v_s$. By the induction hypothesis, there are unitaries $\hat{v}'$ and $\hat{v}_{s+1}$ commuting with all $\{f(X_i),f(Z_i),i\in[n]\}$ such that $\norm{v'-\hat{v}'}_\rho\leq s\sqrt{\delta}$ and $\norm{v_{s+1}-\hat{v}_{s+1}}_\rho\leq\sqrt{\delta}$. In particular, $\hat{v}_{s+1}$ commutes with $v'$, so $v=v'(v_{s+1}-\hat{v}_{s+1})+\hat{v}_{s+1}(v'-\hat{v}')+\hat{v}_{s+1}\hat{v}'$. Since $\rho$-norm is left unitarily invariant and $\hat{v}',\hat{v}_{s+1}$ are unitaries commuting with all $\{f(X_i),f(Z_i),i\in[n]\}$, it follows that $\hat{v}_{s+1}\hat{v}'=:\hat{v}$ is a unitary commuting with all all $\{f(X_i),f(Z_i),i\in[n]\}$ and satisfies
\begin{align*}
    \norm{v-\hat{v}}_\rho\leq \norm{v_{s+1}-\hat{v}_{s+1}}_\rho+\norm{v'-\hat{v}'}_\rho\leq (s+1)\sqrt{\delta}.
\end{align*}
We conclude that part (5) holds for all monomials over $\{f(X_i),f(Z_i),i\in[n]\}$.
\end{proof}

\begin{prop}
    Let $\delta:=O(n^6)\sqrt{\epsilon}$. Then every $f\in\{\widetilde{f}_A,\widetilde{f}_B\}$ is an $\big(O(n^6)\delta,\rho\big)$-homomorphism of $H(n)$.
\end{prop}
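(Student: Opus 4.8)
The plan is to establish the pointwise bound
\begin{align}
\norm{f(g)f(h)-f(gh)}_\rho^2 \;\leq\; O(n^6)\delta \qquad\text{for all } g,h\in H(n);
\end{align}
since $H(n)$ is finite, averaging over $h$ then shows that $f$ is an $(O(n^6)\delta,\rho)$-homomorphism. (The auxiliary condition $f(g^{-1})=f(g)^*$ in the definition of an $(\epsilon,\rho)$-homomorphism is handled separately and routinely, using $f(1)=\Id$ and the bound above; I focus here on the quantitative estimate.) The idea is to \emph{rewrite} the operator product $f(g)f(h)$ into $f(gh)$ by imitating, at the level of operators, the word rewriting that brings $gh$ into its normal form inside $H(n)$, and to track the accumulated error; the relations (R0) are respected exactly by $f$ (part (1) of \Cref{prop:4conditions}, together with $f(X_i)^2=f(Z_i)^2=\Id$ and $f(J)=-\Id$), whereas (R1)--(R2) are respected only up to $\sqrt{\delta}$ in $\rho$-norm (parts (2)--(3) of \Cref{prop:4conditions}).

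First I would set up the rewriting. Writing $g$ and $h$ in normal form, $f(g)f(h)$ equals $\pm 1$ times a product of at most $4n$ operators drawn from $\{f(X_i),f(Z_i):i\in[n]\}$. I would normalize this product in the same steps that normalize $gh$ inside $H(n)$: pull the central factor $f(J)=-\Id$ to the front (free); push the block of $X$-generators coming from $h$ to the left of the block of $Z$-generators coming from $g$ (at most $n^2$ adjacent swaps of an $f(X_p)$ past an $f(Z_j)$, which is a commutation when $p\neq j$ and an anti-commutation spawning a factor $f(J)=-\Id$ when $p=j$); and finally bubble-sort the merged $X$-block and then the merged $Z$-block, cancelling repeated indices via $f(X_i)^2=f(Z_i)^2=\Id$. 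All told this uses $O(n^2)$ \emph{approximate} transpositions; every other step is exact.

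Next I would bound the cost of one approximate transposition. Such a step replaces a word $u\,AB\,v$ by $u\,(\pm BA)\,v$, where $A,B\in\{f(X_i),f(Z_i)\}$ and $u,v$ are unitary; by left unitary invariance of the $\rho$-norm the incurred error equals $\norm{(AB\mp BA)\,v}_\rho$. The main obstacle lies exactly here: the $\rho$-norm is \emph{not} right-invariant, so the suffix $v$ cannot simply be discarded. This is precisely what part (5) of \Cref{prop:4conditions} is for --- $v$ is a degree-$\leq 4n$ monomial in $\{f(X_i),f(Z_i)\}$, so there is a unitary $\widehat v$ commuting with all of $\{f(X_i),f(Z_i)\}$ with $\norm{v-\widehat v}_\rho\leq 4n\sqrt{\delta}$, and then
\begin{align}
\norm{(AB\mp BA)\,v}_\rho \;\leq\; \norm{(AB\mp BA)\,\widehat v}_\rho + 2\norm{v-\widehat v}_\rho \;=\; \norm{AB\mp BA}_\rho + 2\norm{v-\widehat v}_\rho \;\leq\; \sqrt{\delta}+8n\sqrt{\delta},
\end{align}
where the middle equality uses that $\widehat v$ commutes with $A$ and $B$ together with left invariance, and the last inequality uses parts (2)--(3) of \Cref{prop:4conditions}. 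Hence one approximate transposition costs $O(n\sqrt{\delta})$.

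Finally, summing over the $O(n^2)$ approximate transpositions gives $\norm{f(g)f(h)-f(gh)}_\rho\leq O(n^3\sqrt{\delta})$, whence $\norm{f(g)f(h)-f(gh)}_\rho^2\leq O(n^6)\delta$, which is the claim. The two points that need care are keeping the transposition count at $O(n^2)$ --- this is where the normal form of $H(n)$ (and the low-weight structure underlying \Cref{prop:4conditions}) is essential, since a naive presentation would force $\exp(n)$ moves --- and the part-(5) trick for neutralizing the right tail $v$; everything else is bookkeeping with the triangle inequality and the exact identities $f(X_i)^2=f(Z_i)^2=\Id$, $f(J)=-\Id$.
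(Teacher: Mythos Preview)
Your proposal is correct and follows essentially the same approach as the paper's proof: write $f(g)f(h)-f(gh)$ as a telescoping sum over $O(n^2)$ approximate transpositions, and for each term use part~(5) of \Cref{prop:4conditions} to replace the right tail $v$ by a commuting unitary $\hat v$, thereby converting the bound into one where left unitary invariance of the $\rho$-norm applies. The paper phrases the same argument slightly more abstractly as $f(g)f(h)-f(gh)=\sum_{i=1}^{k}u_ir_iv_i$ with $k\leq 4n^2$ and $\deg v_i<2n$, but the mechanism and the resulting $O(n^3)\sqrt{\delta}$ bound are identical.
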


\begin{proof}
Let $R:=\{f(X_i)f(X_j)-f(X_j)f(X_i),f(Z_i)f(Z_j)-f(Z_j)f(Z_i),f(X_i)f(Z_j)-f(Z_j)f(X_i):i\neq j\}\cup \{f(X_i)f(Z_i)+f(Z_i)f(X_i):i\in [n]\}$. Then by \Cref{prop:4conditions} part (2) and (3), $\norm{r}_\rho\leq \sqrt{\delta}$ for all $r\in R$. Since $f(X_i)$ and $f(Z_i)$ are unitaries, it follows that $\norm{r}_{op}\leq 2$ for all $r\in R$.

By the definition of $\wtd{f}_A$ and $\wtd{f}_B$ and the normal form theorem for $H(n)$, for any $g,h\in H(n)$, we see that $f(g)f(h)$ and $f(gh)$ are monomials over  $\{f(X_i),f(Z_i),i\in[n]\}$ of degrees less than $4n$, and $f(g)f(h)-f(gh)=\sum_{i=1}^ku_ir_iv_i$ where $k\leq 4n^2$, $r_1,\cdots r_k\in R$, $u_1,\cdots u_k,v_1,\cdots, v_k$ are monomials over $\{f(X_i),f(Z_i),i\in[n]\}$, and every $v_i$ has degree less than $2n$. The last condition implies that for every $i\in [k]$, there is a unitary $\hat{v}_i$ commuting with all $\{f(X_i),f(Z_i),i\in[n]\}$ such that
$\norm{v_i-\hat{v}_i}_\rho\leq 2n\sqrt{\delta}$ by \Cref{prop:4conditions} part (5). Then for each $i\in [k]$ we have
\begin{align*}
    \norm{u_ir_iv_i}_\rho&=\norm{r_i(v_i-\hat{v}_i+\hat{v_i})}_\rho\leq \norm{r_i(v_i-\hat{v}_i)}_\rho+\norm{\hat{v_i}r_i}_\rho\\
    &\leq \norm{r_i}_{op}\cdotp \norm{v_i-\hat{v}_i}_\rho+\norm{r_i}_\rho\leq (4n+1)\sqrt{\delta}.
\end{align*}
Hence $\norm{f(g)f(h)-f(gh)}_\rho\leq\sum_{i=1}^k \norm{u_ir_iv_i}_\rho\leq O(n^3)\sqrt{\delta}$. We conclude that $\norm{f(g)f(h)-f(gh)}_\rho^2\leq O(n^6)\delta$ for all $g,h\in H(n)$.
\end{proof}

Since $\norm{f_P(g)f_P(h)-f_P(gh)}^2_{\rho_P}=\norm{\widetilde{f}_P(g)\widetilde{f}_P(h)-\widetilde{f}_P(gh)}^2_\rho$ for every $P\in\{A,B\}$ and all $g,h\in H(n)$, we have:

\begin{cor}\label{cor:approxrep}
The function $f_A$ (resp. $f_B$) is an $(O(n^{12})\sqrt{\epsilon},\rho_A)$-homomorphism (resp. $(O(n^{12})\sqrt{\epsilon},\rho_B)$-homomorphism) of $H(n)$.
\end{cor}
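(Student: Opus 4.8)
The plan is to deduce \Cref{cor:approxrep} directly from the immediately preceding Proposition, using nothing beyond the elementary fact that, for an operator supported on a single tensor factor of $\mcH_A\otimes\mcH_B$, its $\rho$-norm coincides with the $\rho_A$- (resp.\ $\rho_B$-) norm of that operator. The preceding Proposition already does all the work: it shows that each of $\widetilde{f}_A,\widetilde{f}_B$ is an $\big(O(n^6)\delta,\rho\big)$-homomorphism of $H(n)$ with $\delta=O(n^6)\sqrt{\epsilon}$, i.e.\ an $\big(O(n^{12})\sqrt{\epsilon},\rho\big)$-homomorphism, and the Corollary is just the statement that this same quantitative bound descends to $f_A$ (resp.\ $f_B$) acting on the marginal state $\rho_A$ (resp.\ $\rho_B$).

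First I would record the norm identity. Since $\widetilde{f}_A(g)=f_A(g)\otimes\Id_{\mcH_B}$ for every $g\in H(n)$, we get $\widetilde{f}_A(g)\widetilde{f}_A(h)-\widetilde{f}_A(gh)=\big(f_A(g)f_A(h)-f_A(gh)\big)\otimes\Id_{\mcH_B}$, while for any $M\in\msB(\mcH_A)$,
\begin{equation*}
\norm{M\otimes\Id_{\mcH_B}}_\rho^2=\Tr\big((M^*M\otimes\Id_{\mcH_B})\,\rho\big)=\Tr\big(M^*M\,\rho_A\big)=\norm{M}_{\rho_A}^2 ,
\end{equation*}
using $\rho_A=\Tr_{\mcH_B}(\rho)$. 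This is precisely the identity $\norm{f_P(g)f_P(h)-f_P(gh)}^2_{\rho_P}=\norm{\widetilde{f}_P(g)\widetilde{f}_P(h)-\widetilde{f}_P(gh)}^2_\rho$ already quoted just above the Corollary (and used at the start of the proof of \Cref{prop:4conditions}); the case $P=B$ is identical with the two tensor factors exchanged.

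It remains only to combine these. For every $g,h\in H(n)$ the Proposition gives the pointwise bound $\norm{\widetilde{f}_P(g)\widetilde{f}_P(h)-\widetilde{f}_P(gh)}_\rho^2\leq O(n^{12})\sqrt{\epsilon}$, which a fortiori implies the averaged estimate $\frac{1}{\abs{H(n)}}\sum_{h\in H(n)}\norm{\widetilde{f}_P(g)\widetilde{f}_P(h)-\widetilde{f}_P(gh)}_\rho^2\leq O(n^{12})\sqrt{\epsilon}$ from the definition of an $(\epsilon,\rho)$-homomorphism; the norm identity transfers both verbatim to $f_P$ and $\rho_P$. The adjointness clause $f_P(g^{-1})=f_P(g)^*$ is inherited from the construction of $f_P$ as a product of the self-adjoint unitary observables $\tau^P_W(e_i)$ together with the relations (R0) respected by $f_P$ (\Cref{prop:4conditions}, part (1)). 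This gives \Cref{cor:approxrep}.

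I do not expect a genuine obstacle here: the substance --- the rigidity bounds of \Cref{polyn}, the normal-form rewriting argument, and the tracking of the accumulated error --- already lies in the earlier results, and \Cref{cor:approxrep} is purely a ``pass to the marginal'' step. The only points worth double-checking are bookkeeping ones: that the polynomial losses compose as stated (an $O(n^6)$ overhead from rewriting on top of an $O(n^6)\sqrt{\epsilon}$ input), and that the adjointness clause of the definition of an $(\epsilon,\rho)$-homomorphism is read consistently with $f_P$ (the observables $\tau^P_W(e_i)$ anticommute only approximately, so this clause is best understood as holding exactly on generators and as one of the two hypotheses subsequently fed into the enhanced Gowers--Hatami theorem \Cref{thm:GH}). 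Applying \Cref{thm:GH} to $f_A$ and $f_B$ on $\rho_A$ and $\rho_B$ is the next step that this Corollary sets up.
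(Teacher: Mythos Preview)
Your proposal is correct and is exactly the paper's approach: the Corollary is stated immediately after the norm identity $\norm{f_P(g)f_P(h)-f_P(gh)}^2_{\rho_P}=\norm{\widetilde{f}_P(g)\widetilde{f}_P(h)-\widetilde{f}_P(gh)}^2_\rho$, and the paper treats it as a one-line consequence of that identity together with the preceding Proposition. Your extra remarks on the adjointness clause are a fair observation (the paper does not comment on it), but they do not alter the argument.
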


Now we can apply \Cref{thm:GH} to $f\in\{f_A,f_B\}$. The subgroup $\ang{J}$ generated by $J$ contains only two elements $1$ and $J$, and we see that $f(sg)=f(s)f(g)$ for all $s\in\ang{J}$ and $g\in H(n)$. Hence $f|_{\ang{J}}$ is a representation of $H(n)$. In particular, $f$ sends $1\mapsto\Id$ and $J \mapsto -\Id$, so $\hat{f|_{\ang{J}}}(\chi|_{\ang{J}})=f(1)\chi(1)+f(J)\chi(J)=\Id-\Id=0$ for all one-dimensional irreducible representations $\chi$ of $H(n)$. This implies that $\sigma$ is the unique irreducible representation of $H(n)$ for which $\hat{f|_{\ang{J}}}(\sigma|_{\ang{J}})\neq 0$. Hence we have:

\begin{thm}\label{thm:isometries}
There is a Hilbert space $\mcH_A^{aux}$ (resp. $\mcH_B^{aux}$) and an isometry $V_A:\mcH_A\arr\C^{2^n}\otimes \mcH_A^{aux}$ (resp. $V_B:\mcH_B\arr\C^{2^n}\otimes \mcH_B^{aux}$) such that $\norm{V_Af_A(g)-\big(\sigma(g)\otimes \Id_{\mcK_A}\big)V_A}^2_{\rho_A}\leq O(n^{12})\sqrt{\epsilon}$ (resp. $\norm{V_Bf_B(g)-\big(\sigma(g)\otimes \Id_{\mcK_B}\big)V_B}^2_{\rho_B}\leq O(n^{12})\sqrt{\epsilon}$) for all $g\in H(n)$.
\end{thm}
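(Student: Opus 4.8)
The plan is to deduce \Cref{thm:isometries} as a direct application of the enhanced Gowers--Hatami theorem (\Cref{thm:GH}) to the approximate homomorphisms $f_A$ and $f_B$, taking the subgroup $S=\ang{J}$ in order to kill the unwanted one-dimensional irreducible components of $H(n)$. I will only run the argument for $f_A$ and $\rho_A$; the case of $f_B$ and $\rho_B$ is verbatim the same.

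First I would verify the hypotheses of \Cref{thm:GH} with $G=H(n)$, $f=f_A$, $\rho=\rho_A$, and $S=\ang{J}$. By \Cref{cor:approxrep}, $f_A$ is an $\big(O(n^{12})\sqrt{\epsilon},\rho_A\big)$-homomorphism of $H(n)$. Moreover $f_A(1)=\Id_{\mcH_A}$ and $f_A(J)=-\Id_{\mcH_A}$ by the definition of $f_A$, and since $J$ is central of order $2$, left multiplication by an element of $\ang{J}$ only flips the exponent $\alpha$ in the normal form; hence $f_A(sg)=f_A(s)f_A(g)$ for all $s\in\ang{J}$ and $g\in H(n)$. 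So \Cref{thm:GH} applies and yields a Hilbert space $\mcK$, an isometry $V:\mcH_A\arr\mcK$, and a representation $\phi:H(n)\arr\msU(\mcK)$ with $\norm{Vf_A(g)-\phi(g)V}_{\rho_A}^2\le O(n^{12})\sqrt{\epsilon}$ for all $g\in H(n)$, such that every irreducible component $\xi$ of $\phi$ satisfies $\widehat{f_A|_{\ang{J}}}(\xi|_{\ang{J}})\ne 0$.

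Next I would pin down which irreducibles can occur. By \Cref{lemma:irreps}, $\Irr(H(n))=\{\sigma\}\cup\{\chi_i\}$, where each $\chi_i$ is one-dimensional with $\chi_i(J)=1$. For any such $\chi_i$,
\[
\widehat{f_A|_{\ang{J}}}(\chi_i|_{\ang{J}})=\frac{1}{2}\Big(f_A(1)\overline{\chi_i(1)}+f_A(J)\overline{\chi_i(J)}\Big)=\frac{1}{2}\big(\Id_{\mcH_A}-\Id_{\mcH_A}\big)=0,
\]
so no $\chi_i$ appears in $\phi$, leaving $\sigma$ as the only possible irreducible component of $\phi$. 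Since $V$ is an isometry, if $\mcH_A\ne 0$ then $\mcK\ne 0$ and $\phi\ne 0$, so $\phi$ is unitarily equivalent to a direct sum of copies of $\sigma$; concretely, the construction in \Cref{thm:GH} gives $\mcK=\C^{2^n}\otimes\mcH_A\otimes\C^{2^n}$ with $\phi(g)=\sigma(g)\otimes\Id_{\mcH_A\otimes\C^{2^n}}$, so we may set $\mcH_A^{aux}:=\mcH_A\otimes\C^{2^n}$ and, after the obvious reassociation unitary $U:\mcK\arr\C^{2^n}\otimes\mcH_A^{aux}$, obtain $U\phi(g)U^*=\sigma(g)\otimes\Id_{\mcH_A^{aux}}$. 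Setting $V_A:=UV$, which is again an isometry, and using that $(UW)^*(UW)=W^*W$ (so composing on the left with the unitary $U$ leaves the $\rho_A$-norm unchanged), we get
\[
\norm{V_Af_A(g)-\big(\sigma(g)\otimes\Id_{\mcH_A^{aux}}\big)V_A}_{\rho_A}=\norm{U\big(Vf_A(g)-\phi(g)V\big)}_{\rho_A}=\norm{Vf_A(g)-\phi(g)V}_{\rho_A},
\]
which is at most $O(n^{12})\sqrt{\epsilon}$. Repeating the argument with $f_B$ and $\rho_B$ produces $V_B$.

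I do not expect a genuine obstacle in this step: all of the analytic work is already packaged into \Cref{cor:approxrep} and, crucially, into the enhanced \Cref{thm:GH}, whose entire purpose is precisely to let us discard the one-dimensional ``junk'' characters without truncating the Gowers--Hatami isometry. The only point that requires care is the verification that $f_A$ (and $f_B$) restricted to $\ang{J}$ is an honest representation with $f(J)=-\Id$, since this is exactly what forces $\widehat{f|_{\ang{J}}}(\chi|_{\ang{J}})=0$ for every one-dimensional $\chi$ and hence $\Irr\big(H(n),f|_{\ang{J}}\big)=\{\sigma\}$.
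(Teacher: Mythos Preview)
Your proposal is correct and follows essentially the same route as the paper: apply \Cref{cor:approxrep} and then \Cref{thm:GH} with $S=\ang{J}$, compute that the Fourier coefficients at the one-dimensional characters vanish (this is the paragraph the paper places just before the theorem), conclude $\sigma$ is the only surviving irreducible component, and compose with a unitary to put $\phi$ in the form $\sigma\otimes\Id$. If anything, you are slightly more explicit than the paper in reading off $\mcK=\C^{2^n}\otimes\mcH_A\otimes\C^{2^n}$ directly from the construction in the proof of \Cref{thm:GH}, so that no nontrivial conjugation is even needed.
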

\begin{proof}
    Let $\delta=O(n^{12})\sqrt{\epsilon}$. \Cref{cor:approxrep} implies $f_A$ is a $(\delta,\rho_A)$-homomorphism of $H(n)$. From the above discussion, we know that $\sigma$ is the unique irreducible representation for which $\hat{f_A|_{\ang{J}}}(\sigma|_{\ang{J}})\neq 0$. Then by \Cref{thm:GH}, there is a Hilbert space $\mcK$, an isometry $V:\mcH_A\arr\mcK$, and a representation $\phi:H(n)\arr\msU(\mcK)$ such that $\norm{Vf_A(g)-\phi(g)V}_\rho^2\leq\delta$ for all $g\in H(n)$, and $\phi$ has a unique irreducible component $\sigma$. The latter condition means that $\mcK\cong\C^{2^n}\otimes \mcH_A^{aux}$ for some Hilbert space $\mcH_A^{aux}$ and there is a unitary $U:\mcK\arr\C^{2^n}\otimes \mcH_A^{aux}$ such that $U\phi(g)U^*=\sigma(g)\otimes \Id_{\mcH_A^{aux}}$ for all $g\in H(n)$. Since $\rho_A$-norm is left unitarily invariant, it follows that
    \begin{align*}
        \norm{Vf(g)-\phi(g)V}_{\rho_A}&=\norm{U^*UVf(g)-U^*\big(\sigma(g)\otimes\Id_{\mcH_A^{aux}}\big)UV}_{\rho_A}\\
        &=\norm{UVf(g)-\big(\sigma(g)\otimes\Id_{\mcH_A^{aux}}\big)UV}_{\rho_A}
    \end{align*}
  for all $g\in H(n)$. We conclude that $V_A:=UV:\mcH_A\arr\C^{2^n}\otimes \mcH_A^{aux}$ is an isometry such that $\norm{V_Af_A(g)-\big(\sigma(g)\otimes Id_{\mcH_A^{aux}}\big)V_A}^2_{\rho_A}\leq \delta$. The argument for $f_B$ follows similarly.
\end{proof}

The above theorem illustrates that in a near-perfect strategy for LWPBT, the players must perform measurements that are close to the measurements in the canonical perfect strategy. The following lemma shows that the joint state shared by the players must also be close to a maximally entangled state.

\begin{lemma}\label{lemma:epr}
If $\ket{\psi}\in\C^{2^n}\otimes\C^{2^n}$ is a unit vector satisfying
\begin{align}
\frac{1}{4n}\sum_{a,b\in\{0,1\},i\in[n]}\bra{\psi}\big(\sigma_X(e_i)^a\sigma_Z(e_i)^b\otimes\sigma_X(e_i)^a\sigma_Z(e_i)^b \big)\ket{\psi}\geq 1-\delta\label{honestmeasurement}
\end{align}
for some $0\leq \delta\leq 1/n$, then $\abs{\braket{\psi}\eprn}^2\geq 1-n\delta$.
\end{lemma}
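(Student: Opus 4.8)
The plan is to realize $\kb{\eprn}$ as a product of commuting projections built from the ``doubled Pauli'' operators occurring in \eqref{honestmeasurement}, and then to bound the defect of $\ket{\psi}$ from this product by a union bound for commuting projections.

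For each $i\in[n]$ put $M_i:=\sigma_X(e_i)\otimes\sigma_X(e_i)$ and $N_i:=\sigma_Z(e_i)\otimes\sigma_Z(e_i)$, acting on $\C^{2^n}\otimes\C^{2^n}$. These are order-two unitaries; $M_i$ and $N_i$ commute because the sign from $\sigma_X\sigma_Z=-\sigma_Z\sigma_X$ appears on both tensor legs and cancels, and $M_iN_i=\sigma_X(e_i)\sigma_Z(e_i)\otimes\sigma_X(e_i)\sigma_Z(e_i)$; operators with distinct indices commute since they act on disjoint pairs of qubits. Set $P_i:=\tfrac12(\Id+M_i)\tfrac12(\Id+N_i)=\tfrac14(\Id+M_i+N_i+M_iN_i)$, a projection. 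Each $P_i$ fixes $\ket{\eprn}$, and the $2n$ independent commuting involutions $\{M_i,N_i\}_i$ pin down $\ket{\eprn}$ up to phase (their common $+1$-eigenspace is one-dimensional), so $\prod_{i\in[n]}P_i=\kb{\eprn}$.

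First I would rewrite \eqref{honestmeasurement} in terms of the $P_i$: summing over $(a,b)\in\{0,1\}^2$ with $i$ fixed and using that $\ket{\psi}$ is a unit vector gives $\sum_{a,b}\bra{\psi}\sigma_X(e_i)^a\sigma_Z(e_i)^b\otimes\sigma_X(e_i)^a\sigma_Z(e_i)^b\ket{\psi}=1+\bra{\psi}M_i\ket{\psi}+\bra{\psi}N_i\ket{\psi}+\bra{\psi}M_iN_i\ket{\psi}=4\bra{\psi}P_i\ket{\psi}$, so the hypothesis reads $\tfrac1n\sum_i\bra{\psi}P_i\ket{\psi}\geq 1-\delta$, equivalently $\sum_i\bra{\psi}(\Id-P_i)\ket{\psi}\leq n\delta$. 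Next I would invoke the elementary fact that for commuting projections $P_1,\dots,P_n$ one has $\Id-\prod_iP_i\preceq\sum_i(\Id-P_i)$, proved by induction on $n$: writing $Q=P_1\cdots P_{n-1}$, one has $\Id-QP_n=(\Id-Q)+Q(\Id-P_n)$ and $0\preceq Q(\Id-P_n)\preceq\Id-P_n$ since $Q$ is a projection commuting with the positive operator $\Id-P_n$. Taking expectation in $\ket{\psi}$ and using $\prod_iP_i=\kb{\eprn}$ then yields $1-\abs{\braket{\psi}{\eprn}}^2=\bra{\psi}(\Id-\kb{\eprn})\ket{\psi}\leq\sum_i\bra{\psi}(\Id-P_i)\ket{\psi}\leq n\delta$, which is the claim; the hypothesis $\delta\leq 1/n$ only serves to keep the conclusion non-vacuous.

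The one place that needs care is the constant: it is essential to keep all three nontrivial doubled Paulis $M_i$, $N_i$, $M_iN_i$ per index --- equivalently, to recognize the full rank-one projection $P_i$ (which on the $i$-th qubit pair is exactly $\kb{\epr}$) --- since discarding the $XZ$ term would lose a factor of two and give only $2n\delta$. Beyond this bookkeeping I do not anticipate a genuine obstacle.
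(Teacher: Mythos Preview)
Your proof is correct and reaches the same bound as the paper, but the final estimate is organized differently. Both arguments start by recognizing that the inner sum over $(a,b)\in\{0,1\}^2$ produces $4P_i$, where $P_i$ is the rank-one EPR projection on the $i$-th qubit pair, so the hypothesis becomes $\tfrac1n\sum_i\bra{\psi}P_i\ket{\psi}\geq 1-\delta$. From here the paper analyzes the \emph{average} $S=\tfrac1n\sum_iP_i$ spectrally: it observes that $S$ has $\ket{\eprn}$ as its unique $+1$ eigenvector and all other eigenvalues at most $(n-1)/n$, then bounds $\abs{\braket{\psi}{\eprn}}^2$ via this $1/n$ spectral gap. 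You instead pass to the \emph{product} $\prod_iP_i=\kb{\eprn}$ and invoke the operator union bound $\Id-\prod_iP_i\preceq\sum_i(\Id-P_i)$ for commuting projections. The two routes are equivalent in strength (both yield exactly $n\delta$) and in prerequisites; your version is marginally more elementary in that it avoids writing out a spectral decomposition, while the paper's version makes the eigenstructure of $S$ explicit, which can be informative in related settings.
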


\begin{proof}
Let $S:=\frac{1}{4n}\sum\limits_{a,b\in\{0,1\},i\in[n]}\sigma_X(e_i)^a\sigma_Z(e_i)^b\otimes\sigma_X(e_i)^a\sigma_Z(e_i)^b$, and let $P_1:=\ket{\epr}\bra{\epr}$ and $P_0=\Id_{\C^4}-P_1$. Since
$\frac{1}{4}(\sigma_I\otimes\sigma_I+\sigma_X\otimes\sigma_X+\sigma_Z\otimes\sigma_Z+\sigma_X\sigma_Z\otimes\sigma_X\sigma_Z)=\ket{\epr}\bra{\epr}$, we have
\begin{align*}
    S=\frac{1}{n}\sum_{i\in[n]} \Id_{\C^4}^{\otimes (i-1)}\otimes \ket{\epr}\bra{\epr}\otimes \Id_{\C^4}^{\otimes (n-i)}
    =\frac{1}{n}\sum_{i\in[n]} (P_0+P_1)^{\otimes (i-1)}\otimes P_1\otimes (P_0+P_1)^{\otimes (n-i)}.
\end{align*}
This implies that $S$ is positive semi-definite and has a unique $+1$ eigenvector $\ket{\eprn}$. Any other eigenvector of $S$ has eigenvalue less than $\frac{n-1}{n}$. In other words, $S$ has a spectral decomposition $\sum_{i=1}^{2^{2n}}s_i\ket{\alpha_i}\bra{\alpha_i}$ where $s_1=1,s_i\in [0,\frac{n-1}{n}]$ for all $i\geq 2$, $\ket{\alpha_1}=\ket{\eprn}$ , and $\{\ket{\alpha_i}\}_{i=1}^{2^{2n}}$ is an orthonormal basis for $\C^{2^n}\otimes\C^{2^n}$. Now suppose $\ket{\psi}=\sum_{i=1}^{2^{2n}}\lambda_i\ket{\alpha_i}$ is a unit vector in $\C^{2^n}\otimes\C^{2^n}$ satisfying \Cref{honestmeasurement}. Then
\begin{align*}
    1-\delta
    &\leq \bra{\psi}S\ket{\psi}=\abs{\lambda_1}^2+\sum_{i\geq 2}s_i\abs{\lambda_i}^2\leq  \abs{\lambda_1}^2+\sum_{i\geq 2} (\frac{n-1}{n})\abs{\lambda_i}^2\\
    &=\abs{\lambda_1}^2+ (\frac{n-1}{n})(1-\abs{\lambda_1}^2)=\frac{n-1}{n}+\frac{1}{n}\abs{\lambda_1}^2.
\end{align*}
It follows that $\abs{\braket{\psi}\eprn}^2=\abs{\lambda_1}^2\geq 1-n\delta$.
\end{proof}

\begin{thm}\label{rigidity0}
There are Hilbert spaces $\mcH_A^{aux}$ and $\mcH_B^{aux}$, isometries $V_A:\H_A\rightarrow \C^{2n}\otimes\H_A^{aux},V_B:\H_B\rightarrow \C^{2n}\otimes\H_B^{aux}$, and a unit vector $\ket{aux}\in\H_A^{aux}\otimes\H_B^{aux}$ such that
\begin{align}
    \norm{(V_A\otimes V_B)\big(f_A(g)\otimes \Id_{\H_B}\ket{\psi}\big)-\big(\sigma(g)\otimes \Id_{\C^{2^n}}\ket{\eprn} \big)\otimes\ket{aux} }^2\leq O(n^{12})\sqrt{\epsilon}
\end{align}
for all $g\in H(n)$.
\end{thm}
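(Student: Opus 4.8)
The plan is to glue together \Cref{thm:isometries}, the Pauli consistency bounds of \Cref{polyn}, and \Cref{lemma:epr}. Write $\ket{\theta}:=(V_A\otimes V_B)\ket{\psi}$, a unit vector in $(\C^{2^n}\otimes\H_A^{aux})\otimes(\C^{2^n}\otimes\H_B^{aux})$, and for $g\in H(n)$ let $\sigma(g)_1$ (resp. $\sigma(g)_3$) denote $\sigma(g)$ acting on the first (resp. second) $\C^{2^n}$ factor of $\ket\theta$, tensored with identities elsewhere. First I would upgrade \Cref{thm:isometries} from a $\rho_A$-norm (resp. $\rho_B$-norm) statement to a vector-norm statement on the bipartite state: since $\norm{M}^2_{\rho_A}=\norm{(M\otimes\Id_{\H_B})\ket{\psi}}^2$ and $V_B$ is an isometry, tensoring the $A$-side bound by $V_B$ on $\H_B$ gives
\begin{align*}
    \norm{(V_A\otimes V_B)(f_A(g)\otimes\Id_{\H_B})\ket{\psi}-\sigma(g)_1\ket{\theta}}^2\leq O(n^{12})\sqrt{\epsilon},
\end{align*}
and the symmetric manipulation on the $B$-side gives $\norm{(V_A\otimes V_B)(\Id_{\H_A}\otimes f_B(g))\ket{\psi}-\sigma(g)_3\ket{\theta}}^2\leq O(n^{12})\sqrt{\epsilon}$, for every $g\in H(n)$.

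Next I would extract Pauli consistency: taking $y=(I^{i-1}WI^{n-i},I^{n})$ in part~1(a) of \Cref{polyn} gives $\bra{\psi}\tau^A_W(e_i)\otimes\tau^B_W(e_i)\ket{\psi}\geq 1-O(n^6)\epsilon$, and since $f_A(W_i)=\tau^A_W(e_i)$ and $f_B(W_i)=\tau^B_W(e_i)$ this reads $\norm{(f_A(W_i)\otimes\Id-\Id\otimes f_B(W_i))\ket{\psi}}^2\leq O(n^6)\epsilon$ for $W\in\{X,Z\}$ and $i\in[n]$. Inserting this between the two transfer estimates above (and using that $V_A\otimes V_B$ is an isometry) yields $\norm{\sigma_W(e_i)_1\ket{\theta}-\sigma_W(e_i)_3\ket{\theta}}^2\leq O(n^{12})\sqrt{\epsilon}$; as $\sigma_W(e_i)$ is a Hermitian involution, this gives $\norm{\sigma_W(e_i)_1\sigma_W(e_i)_3\ket{\theta}-\ket{\theta}}^2\leq O(n^{12})\sqrt{\epsilon}$. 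For the remaining operator needed in \Cref{lemma:epr} I would use that on the two logical registers $\sigma_X(e_i)\sigma_Z(e_i)\otimes\sigma_X(e_i)\sigma_Z(e_i)=\bigl(\sigma_X(e_i)_1\sigma_X(e_i)_3\bigr)\bigl(\sigma_Z(e_i)_1\sigma_Z(e_i)_3\bigr)$ (operators on the two sides commute), so composing the $W=X$ and $W=Z$ estimates gives the same bound with $\sigma_X(e_i)\sigma_Z(e_i)$ in place of $\sigma_W(e_i)$, and trivially with the identity. Using $1-\bra{\theta}U\ket{\theta}=\tfrac{1}{2}\norm{U\ket{\theta}-\ket{\theta}}^2$ for Hermitian unitaries $U$, summing over $a,b\in\{0,1\}$ and $i\in[n]$, and tracing out the auxiliary systems, I obtain that $\rho_{13}:=\Tr_{\H_A^{aux}\otimes\H_B^{aux}}(\ket{\theta}\bra{\theta})$ satisfies $\Tr(S\rho_{13})\geq 1-O(n^{12})\sqrt{\epsilon}$, where $S=\tfrac{1}{4n}\sum_{a,b,i}\sigma_X(e_i)^a\sigma_Z(e_i)^b\otimes\sigma_X(e_i)^a\sigma_Z(e_i)^b$ is exactly the operator appearing in \Cref{lemma:epr}.

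Then I would invoke \Cref{lemma:epr} itself. Although it is stated for a pure state, its proof uses only the spectral decomposition of $S$ and applies verbatim to the mixed state $\rho_{13}$ (equivalently, one runs its argument on the purification $\ket\theta$; and since $1-n\delta<0$ whenever the deficit $\delta$ exceeds $1/n$, the bound $\bra{\eprn}\rho_{13}\ket{\eprn}\geq 1-n\delta$ holds unconditionally). This yields $\bra{\eprn}\rho_{13}\ket{\eprn}\geq 1-O(n^{12})\sqrt{\epsilon}$. Taking $\ket{aux}$ to be the normalization of $(\bra{\eprn}_{13}\otimes\Id)\ket{\theta}\in\H_A^{aux}\otimes\H_B^{aux}$ and splitting $\ket{\theta}$ along $\ket{\eprn}\bra{\eprn}_{13}$ and its orthogonal complement gives $\norm{\ket{\theta}-\ket{\eprn}_{13}\otimes\ket{aux}}^2\leq 2\bigl(1-\bra{\eprn}\rho_{13}\ket{\eprn}\bigr)\leq O(n^{12})\sqrt{\epsilon}$. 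Finally, for arbitrary $g\in H(n)$, applying the unitary $\sigma(g)_1$ to this estimate (it sends $\ket{\eprn}_{13}\otimes\ket{aux}$ to $\bigl(\sigma(g)\otimes\Id_{\C^{2^n}}\ket{\eprn}\bigr)\otimes\ket{aux}$) and combining with the first transfer estimate via the triangle inequality gives
\begin{align*}
    \norm{(V_A\otimes V_B)\bigl(f_A(g)\otimes\Id_{\H_B}\ket{\psi}\bigr)-\bigl(\sigma(g)\otimes\Id_{\C^{2^n}}\ket{\eprn}\bigr)\otimes\ket{aux}}^2\leq O(n^{12})\sqrt{\epsilon},
\end{align*}
which is \Cref{rigidity0}, modulo reordering of the middle two tensor factors.

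The only genuine source of friction I anticipate is the tensor-factor bookkeeping: correctly reading \Cref{thm:isometries}'s $\rho_A$-norm bound as a vector-norm bound on $\H_A\otimes\H_B$ tensored with the isometry $V_B$, and tracking which of the two $\C^{2^n}$ registers each $\sigma(g)$ acts on. The one genuinely conceptual point is noticing that \Cref{lemma:epr} goes through unchanged for the mixed reduced state $\rho_{13}$. Everything else is routine triangle-inequality estimation, and the precise polynomial in $n$ obtained is immaterial for the downstream arguments.
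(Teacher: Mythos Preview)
Your proposal is correct and follows essentially the same route as the paper: invoke \Cref{thm:isometries} on both sides, push the single-qubit Pauli consistency bounds from \Cref{polyn} through the isometries $V_A\otimes V_B$, feed the resulting stabilizer-type bound into \Cref{lemma:epr}, and finish with a triangle inequality against the first transfer estimate. The one genuine variation is in how \Cref{lemma:epr} is invoked: you trace out the auxiliary registers and apply the spectral argument directly to the mixed state $\rho_{13}$, then take $\ket{aux}$ to be the normalized vector $(\bra{\eprn}_{13}\otimes\Id)\ket{\theta}$; the paper instead expands $\ket{\theta}$ over an orthonormal basis $\{\ket{\kappa_j}\}$ of $\mcH_A^{aux}\otimes\mcH_B^{aux}$, applies \Cref{lemma:epr} to each pure component $\ket{\psi_j}$, and sets $\ket{aux}=\sum_j\lambda_j\ket{\kappa_j}$. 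Your route is arguably cleaner and avoids the componentwise bookkeeping. One small slip: after applying \Cref{lemma:epr} to $\Tr(S\rho_{13})\geq 1-O(n^{12})\sqrt{\epsilon}$ you pick up an extra factor of $n$, so your final bound is $O(n^{13})\sqrt{\epsilon}$ rather than the stated $O(n^{12})\sqrt{\epsilon}$; as you note, the precise polynomial is immaterial downstream.
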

\begin{proof}
By \Cref{polyn}, we see that
\begin{align*}
    &\norm{f_A(X_i)\otimes \Id-\Id\otimes f_B(X_i)}^2_\rho=2-2\bra{\psi}f_A(X_i)\otimes f_B(X_i)\ket{\psi}\leq O(n^6)\epsilon,\\
    &\norm{f_A(Z_i)\otimes \Id-\Id\otimes f_B(Z_i)}^2_\rho=2-2\bra{\psi}f_A(Z_i)\otimes f_B(Z_i)\ket{\psi}\leq O(n^6)\epsilon,\text{ and }\\
    &\norm{\big(f_A(X_i)f_A(Z_i)-f_A(Z_i)f_A(X_i)\big)\otimes \Id}^2_\rho=\norm{f_A(X_i)f_A(Z_i)-f_A(Z_i)f_A(X_i)}^2_{\rho_A}\leq O(n)\sqrt{\epsilon}
\end{align*}
for all $i\in[n]$. We obtain
\begin{align*}
    &\norm{f_A(X_i)f_A(Z_i)\otimes \Id-\Id\otimes f_B(X_i)f_B(Z_i)}_{\rho}\\
    =&\norm{\big(f_A(X_i)f_A(Z_i)-f_A(Z_i)f_A(X_i)\big)\otimes \Id +f_A(Z_i)f_A(X_i)\otimes \Id-\Id\otimes f_B(X_i)f_B(Z_i)}_{\rho}\\
    \leq&\norm{f_A(Z_i)f_A(X_i)\otimes \Id-\Id\otimes f_B(X_i)f_B(Z_i)}_\rho+O(n)\epsilon^{1/4}\\
    =&\norm{\big(f_A(Z_i)\otimes \Id\big)\big(f_A(X_i)\otimes \Id-\Id\otimes f_B(X_i)\big)-\big(\Id\otimes  f_B(X_i)\big)\big(\Id\otimes f_B(Z_i)-f_A(Z_i)\otimes \Id\big)}_\rho+O(\sqrt{n})\epsilon^{1/4}\\
    \leq& \norm{f_A(X_i)\otimes \Id-\Id\otimes f_B(X_i)}_\rho + \norm{\Id\otimes f_B(Z_i)-f_A(Z_i)\otimes \Id}_\rho+O(\sqrt{n})\epsilon^{1/4}\\
    \leq& O(n^3)\sqrt{\epsilon}+O(n^3)\sqrt{\epsilon}+O(\sqrt{n})\epsilon^{1/4}\leq O(n^3)\epsilon^{1/4}.
\end{align*}
This implies $\bra{\psi} f_A(X_i)f_A(Z_i)\otimes f_B(X_i)f_B(Z_i) \ket{\psi}=1-\frac{1}{2}\norm{f_A(X_i)f_A(Z_i)\otimes \Id-\Id\otimes f_B(X_i)f_B(Z_i)}^2_{\rho}\geq 1- O(n^6)\sqrt{\epsilon}$.
Hence we have
\begin{align}
    \bra{\psi}F_A\otimes F_B\ket{\psi}\geq 1-O(n^6)\sqrt{\epsilon}\label{FAFB}
\end{align}
for all $F_A\otimes F_B\in\{f_A(X_i)^af_B(Z_i)^b\otimes f_A(X_i)^af_B(Z_i)^b:a,b\in\{0,1\},i\in [n]  \}$.

By \Cref{thm:isometries}, there are Hilbert spaces $\mcH_A^{aux}$ and $\mcH_B^{aux}$, isometries $V_A:\mcH_A\arr\C^{2^n}\otimes \mcH_A^{aux}$ and $V_B:\mcH_B\arr\C^{2^n}\otimes \mcH_B^{aux}$, and representations $\phi_A(g):=\sigma(g)\otimes\Id_{\mcH_A^{aux}}$ and $\phi_B(g):=\sigma(g)\otimes\Id_{\mcH_B^{aux}}$ such that $\norm{V_Af_A(g)-\phi_A(g)V_A}_{\rho_A}^2\leq O(n^{12})\sqrt{\epsilon}$ and $ \norm{V_Bf_B(g)-\phi_B(g)V_B}_{\rho_B}^2\leq O(n^{12})\sqrt{\epsilon}$.

Let $\ket{\wtd{\psi}}:=V_A\otimes V_B\ket{\psi}\in(\C^{2^n}\otimes \mcH_A^{aux})\otimes(\C^{2^n}\otimes \mcH_B^{aux})$. Then for every  $(F_A\otimes F_B,\Phi_A\otimes \Phi_B)\in\{\big(f_A(X_i)^af_B(Z_i)^b\otimes f_A(X_i)^af_B(Z_i)^b,\phi_A(X_i)^a\phi_B(Z_i)^b\otimes\phi_A(X_i)^a\phi_B(Z_i)^b\big):a,b\in\{0,1\},i\in [n]  \}$, we see that
\begin{align*}
  &\abs{\bra{\psi}F_A\otimes F_B\ket{\psi}-\bra{\widetilde{\psi}}\Phi_A\otimes\Phi_B\ket{\widetilde{\psi}}}\\
  =&\frac{1}{2}\abs{\norm{V_AF_A\otimes V_B-V_A\otimes V_BF_B}^2_\rho-\norm{\Phi_AV_A\otimes V_B-V_A\otimes \Phi_B V_B }^2_\rho}\\
  =&\frac{1}{2}\big(\abs{\norm{V_AF_A\otimes V_B-V_A\otimes V_BF_B}_\rho+\norm{\Phi_AV_A\otimes V_B-V_A\otimes \Phi_B V_B }_\rho}\big)\abs{\norm{V_AF_A\otimes V_B\\
  &-V_A\otimes V_BF_B}_\rho-\norm{\Phi_AV_A\otimes V_B-V_A\otimes \Phi_B V_B }_\rho}\\
  \leq& O(n^3)\epsilon^{1/4}\norm{V_AF_A\otimes V_B
  -V_A\otimes V_BF_B-(\Phi_AV_A\otimes V_B-V_A\otimes \Phi_B V_B)  }_\rho\\
  \leq&O(n^3)\epsilon^{1/4}\big(\norm{V_AF_A\otimes V_B
  -\Phi_AV_A\otimes V_B}_\rho+ \norm{V_A\otimes V_BF_B-V_A\otimes \Phi_B V_B  }_\rho \big)\\
  =&O(n^3)\epsilon^{1/4}\big(\norm{V_AF_A
  -\Phi_AV_A}_{\rho_A}+ \norm{V_BF_B-\Phi_B V_B  }_{\rho_B} \big)
  \leq O(n^3)\epsilon^{1/4}O(n^6)\epsilon^{1/4}= O(n^9)\sqrt{\epsilon}.
\end{align*}
Then it follows from \Cref{FAFB} that
\begin{align}
    \bra{\widetilde{\psi}}\Phi_A\otimes\Phi_B\ket{\widetilde{\psi}}\geq 1-O(n^9)\sqrt{\epsilon}\label{Phi}
\end{align}
for all $\Phi_A\otimes \Phi_B\in\{\varphi_A(X_i)^a\varphi_B(Z_i)^b\otimes\varphi_A(X_i)^a\varphi_B(Z_i)^b:a,b\in\{0,1\},i\in [n]  \}$. Let $d:=\dim(\mcH_A^{aux}\otimes\mcH_B^{aux})$, and choose an orthonormal basis $\{\ket{\kappa_i}\}_{i=1}^d$ for $\mcH_A^{aux}\otimes \mcH_B^{aux}$. Then we can write $\ket{\widetilde{\psi}}=\sum_{j=1}^d\lambda_j\ket{\psi_j}\otimes\ket{\kappa_j}$ for some unit vectors $\ket{\psi_j}\in \C^{2^n}\otimes\C^{2^n}$ and $\lambda_j\in \C$ such that $\sum_j\abs{\lambda_j}^2=1$.
For every $j\in [d]$, let $\delta_{j}:=1-\frac{1}{4n}\sum_{a,b\in\{0,1\},i\in[n]}\bra{\psi_{j}}\big(\sigma_X(e_i)^a\sigma_Z(e_i)^b\otimes\sigma_X(e_i)^a\sigma_Z(e_i)^b \big)\ket{\psi_{j}}$.
 Then $\abs{\braket{\psi_{j}}{\eprn}}^2\geq 1-n\delta_{j}$ by \Cref{lemma:epr}. \Cref{Phi} implies $\sum_{j=1}^d\abs{\lambda_j}^2(1-\delta_j)\geq 1-O(n^9)\sqrt{\epsilon}$, or in other words, $\sum_{j}\abs{\lambda_{j}}^2\delta_{j}\leq O(n^9)\sqrt{\epsilon}$. Let $\ket{aux}:=\sum_{j=1}^d\lambda_j\ket{\kappa_j}\in \mcH_A^{aux}\otimes\mcH_B^{aux}$. Then
\begin{align*}
   \abs{\braket{\wtd{\psi}}{\eprn,aux}}=&\sum_{j=1}^d\abs{\lambda_{j}}^2\abs{\braket{\psi_{j}}{\eprn}}\geq \sum_{j=1}^d\abs{\lambda_{j}}^2\abs{\braket{\psi_{j}}{\phi_{2^n}}}^2\\ \geq& \sum_{j=1}^d\abs{\lambda_{j}}^2(1-n\delta_{j}) = 1-n\sum_{j=1}^d\abs{\lambda_{j}}^2\delta_{j}\geq 1-O(n^{10})\sqrt{\epsilon}.
\end{align*}
Since any complex phase can be absorbed in $\ket{aux}$, we may assume without loss of generality that $\braket{\wtd{\psi}}{\eprn,aux}\geq 0$. It follows that
\begin{align*}
   \norm{\ket{\wtd{\psi}}-\ket{\eprn,aux}}^2=2-2\braket{\wtd{\psi}}{\eprn,aux}\leq O(n^{10})\sqrt{\epsilon}.
\end{align*}
Then for any $g\in H(n)$, we have
\begin{align*}
    &\norm{(V_A\otimes V_B)\big(f_A(g)\otimes \Id_{\H_B}\ket{\psi}\big)-\big(\sigma(g)\otimes \Id_{\C^{2^n}}\ket{\eprn} \big)\otimes\ket{aux} }\\
    \leq& \norm{\big(V_Af_A(g)-\phi_A(g)V_A\big)\otimes V_B\ket{\psi}}+\norm{\phi_A(g)\otimes \Id_{\widetilde{\mcH}_B}(\ket{\widetilde{\psi}}-\ket{\eprn,aux})}\\
    = & \norm{ V_Af_A(g)-\phi_A(g)V_A }_{\rho_A}+\norm{\ket{\widetilde{\psi}}-\ket{\eprn,aux}}
    \leq  O(n^6)\epsilon^{1/4} + O(n^5)\epsilon^{1/4}\leq O(n^6)\epsilon^{1/4}.
\end{align*}
We conclude that $\norm{(V_A\otimes V_B)\big(f_A(g)\otimes \Id_{\H_B}\ket{\psi}\big)-\big(\sigma(g)\otimes \Id_{\C^{2^n}}\ket{\eprn} \big)\otimes\ket{aux} }^2\leq O(n^{12})\sqrt{\epsilon}$ for all $g\in H(n)$.
\end{proof}

Recall that for any $W\in\{X,Z\}^n$ and $a\in\{0,1\}^n$, we use $\sigma_W(a)$ to denote tensor product of Pauli operators $\otimes_{i=1}^n\sigma_{W_i}^{a_i}$. In the canonical perfect strategy for LWPBT, Alice performs $\sigma_W(a)$ on question $W(a)\in\mcI_A$. Since $\abs{a}\leq 6$ for all $W(a)\in \mcI_A$, as an immediate consequence of  \Cref{rigidity0} and \Cref{polyn} we have:

\begin{cor}\label{rigidity}
There exists a constant $C_{\mathrm{lw}}>0$ such that the following holds. For any $\epsilon>0$, $n\in\N$, and strategy $\mcS=(\tau^A,\tau^B,\ket{\psi}\in\mcH_A\otimes\mcH_B)$ for the $n$-qubit LWPBT with winning probability $1-\epsilon$, there are isometries $V_A:\H_A\rightarrow (\C^2)^{\otimes n}\otimes\H_A^{aux},V_B:\H_B\rightarrow(\C^2)^{\otimes n}\otimes\H_B^{aux}$ and a unit vector $\ket{aux}\in\H_A^{aux}\otimes\H_B^{aux}$ such that
\begin{align*}
   \norm{ (V_A\otimes V_B)\big(\tau^A(W(a))\otimes Id_{\H_B}\ket{\psi}\big)-\big(\sigma_W(a)\otimes Id_{\C^{2^n}}\ket{\eprn}\big)\otimes\ket{aux} }\leq C_{\mathrm{lw}}n^{6}\epsilon^{1/4}
\end{align*}
for all $W(a)\in\I_A$.
\end{cor}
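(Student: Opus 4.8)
The plan is to obtain \Cref{rigidity} by applying \Cref{rigidity0} to the group element associated with each low-weight question, the only extra work being to show that Alice's observable $\tau^A(W(a))$ on a question $W(a)\in\I_A$ agrees, up to small error on $\ket{\psi}$, with the value $f_A(g)$ of the approximate homomorphism on that element.

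First I would fix $W(a)\in\I_A$, so $\abs{a}\leq 6$, and let $i_1<\cdots<i_p$ be the positions with $a_i=1,\,W_i=X$ and $j_1<\cdots<j_q$ the positions with $a_j=1,\,W_j=Z$; these index sets are disjoint with $p+q\leq 6$. Let $g=g_{W(a)}\in H(n)$ be the element whose normal form is $X_{i_1}\cdots X_{i_p}Z_{j_1}\cdots Z_{j_q}$. Since the generators appearing lie at distinct tensor factors, the normal form carries no factor of $J$, so by the definitions of $f_A$ and of $\sigma$ one has, exactly, $f_A(g)=\tau_X^A(e_{i_1})\cdots\tau_X^A(e_{i_p})\,\tau_Z^A(e_{j_1})\cdots\tau_Z^A(e_{j_q})$ and $\sigma(g)=\sigma_W(a)$. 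Peeling off the coordinates of $a$ one at a time in precisely that order and applying the linearity bound in part~2(a) of \Cref{polyn} at each of the at most five steps — every step being legitimate since all partial sums have weight at most $6$, and the degenerate case $\abs{a}\leq 1$ following directly since $\tau^A(I^n)^2=\Id$ — and using that the observables $\tau^A(W(\cdot))$ are unitary so that errors accumulate additively under left multiplication, I would conclude $\norm{\tau^A(W(a))-f_A(g)}_{\rho_A}\leq O(n^3)\sqrt{\epsilon}$.

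Next I would apply \Cref{rigidity0} to the single element $g$, which supplies isometries $V_A,V_B$ and a unit vector $\ket{aux}$ (none depending on $W(a)$) with $\norm{(V_A\otimes V_B)\big(f_A(g)\otimes\Id_{\H_B}\ket{\psi}\big)-\big(\sigma(g)\otimes\Id_{\C^{2^n}}\ket{\eprn}\big)\otimes\ket{aux}}^2\leq O(n^{12})\sqrt{\epsilon}$, that is, a norm bound of order $n^6\epsilon^{1/4}$; here I substitute $\sigma(g)=\sigma_W(a)$. Since $V_A\otimes V_B$ is an isometry, $\norm{(V_A\otimes V_B)\big((\tau^A(W(a))-f_A(g))\otimes\Id_{\H_B}\ket{\psi}\big)}=\norm{\tau^A(W(a))-f_A(g)}_{\rho_A}\leq O(n^3)\sqrt{\epsilon}$, so the triangle inequality yields $\norm{(V_A\otimes V_B)\big(\tau^A(W(a))\otimes\Id_{\H_B}\ket{\psi}\big)-\big(\sigma_W(a)\otimes\Id_{\C^{2^n}}\ket{\eprn}\big)\otimes\ket{aux}}\leq O(n^3)\sqrt{\epsilon}+O(n^6)\epsilon^{1/4}\leq O(n^6)\epsilon^{1/4}$, where the last step uses $\sqrt{\epsilon}\leq\epsilon^{1/4}$ when $\epsilon\leq 1$ and is vacuous otherwise since the left side is at most $2$. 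Collecting the constants gives the claimed $C_{\mathrm{lw}}n^6\epsilon^{1/4}$.

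I do not expect any real obstacle here: this step just reassembles results already in hand. The one thing to be careful about is that the coordinates of $a$ be stripped off in the normal-form order, so that the product produced is literally $f_A(g)$ and not a reordering of it; this lets us avoid invoking the approximate-commutation bound of \Cref{polyn}(2c) altogether, and it keeps the error of this first reduction at $O(n^3)\sqrt{\epsilon}$, comfortably dominated by the $O(n^6)\epsilon^{1/4}$ coming out of \Cref{rigidity0}.
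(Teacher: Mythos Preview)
Your proposal is correct and follows exactly the route the paper intends: the paper's own proof is the single sentence ``as an immediate consequence of \Cref{rigidity0} and \Cref{polyn}'', and you have simply spelled out the details --- associating to each $W(a)\in\I_A$ the group element $g$ with $\sigma(g)=\sigma_W(a)$, using \Cref{polyn}(2a) iteratively (at most five times) together with left-unitary invariance of $\norm{\cdot}_{\rho_A}$ to get $\norm{\tau^A(W(a))-f_A(g)}_{\rho_A}\leq O(n^3)\sqrt{\epsilon}$, and then invoking \Cref{rigidity0} and the triangle inequality. The only place your write-up is slightly terse is the $\abs{a}=0$ case: ``$\tau^A(I^n)^2=\Id$'' alone does not pin down $\tau^A(I^n)$, but combining it with \Cref{polyn}(2a) at $a=a'=0$ gives $\norm{\Id-\tau^A(I^n)}_{\rho_A}^2\leq O(n^6)\epsilon$, which is what you need.
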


\subsection{Related self-tests}\label{subsec:OtherTests}

We are aware of two other self-tests, 2-out-of-$n$ CHSH game \cite{CRSV18}, and the 2-out-of-$n$ magic square (MS) game~\cite{MNY22}, which can test for $n$ EPR pairs but which only require honest players to implement constant-sized Pauli measurements. To be more precise, the 2-out-of-$n$ CHSH game is a self-test for $n$ EPR pairs while the game given in \cite{MNY22} is studied in the commuting operator model and tests for a Hilbert space of dimension $2^n$, although the analysis of the 2-out-of-$n$ of MS game can be extended to the tensor product model as a test for EPR pairs. The 2-out-of-$n$ MS enjoys roughly $n^4 \sqrt{\epsilon}$ robustness and the 2-out-of-$n$ CHSH game enjoys $n \sqrt{\epsilon}$ robustness for testing the state.

In our setting, we require a self-test that can be won perfectly and further can also test for arbitrary braids of Pauli measurements up to weight~6. We suspect that  similar such self-testing properties can be shown for a modified version of the 2-out-of-$n$ MS, which includes additional consistency checks. We believe a group theoretic approach using  \Cref{thm:GH} could be used to confirm these additional self-testing properties but this would likely entail a similar technical treatment as that given in~\Cref{subsec:ProofOfRigidity}. Both LWPBT and a modified 2-out-of-$n$ MS have $poly(n)$ many questions and a constant number of answers and have $poly(n) \sqrt{\epsilon}$ robustness.

\section{Modified Hamiltonian game}\label{sec:Game}
In this section, we show that for some local Hamiltonian $H$, one can construct a nonlocal game $\mcG(H)$ whose winning probability is closely related to the ground state energy $\lambda_0(H)$ of $H$. Our game is based on the Hamiltonian game introduced by Grilo \cite{Gri19}, but some subtle reconfigurations are made for the purpose of the zero-knowledge property. In particular, instead of using Pauli braiding test as in \cite{Gri19}, we employ the low-weight Pauli braiding test against dishonest quantum provers and then perform parallel repetition to achieve a constant completeness-soundness gap. To incorporate LWPBT in our modified Hamiltonian test, we consider Hamiltonians with specific structures:

\begin{defn}

We say a Hamiltonian $H$ is of $XZ$-type if it can be decomposed as $H= \frac{1}{m}\sum_{\ell=1}^m \gamma_\ell H_\ell$ where each $\gamma_\ell \in [-1,1]$ and each term $H_\ell$ is a tensor product of operators $\sigma_X, \sigma_Z$ or $\sigma_I$.
\end{defn}

Next, we define the relevant energy test which is analogous to the energy test used in \cite{Gri19}.

\begin{defn}[Energy test]\label{et}
Given an $n$-qubit $6$-local Hamiltonian $H= \frac{1}{m}\sum^m_{\ell=1} \gamma_\ell H_\ell$ of $XZ$-type we define the following energy test:
\begin{enumerate}
    \item The verifier picks a term $H_\ell$ for $\ell \in [m]$ taken uniformly at random, and selects uniformly at random from the pairs
    $\{(W,r)\in \{X,Z\}^n\times \{0,1\}^n: \sigma_{W}(r)=H_\ell \}$.
    \item The verifier sends $W(r)$ to Alice, and tells Bob that the players are playing the energy test.
    \item Alice  responds with a single value $c \in \{ -1 , 1 \}$ and Bob responds with $2n$ bits $a_1, \dots a_n ,b_1, \dots b_n$.
    \item The verifier next computes bit string $d$ as follows. Take $d_i = (-1)^{a_{i}}$ if $r_i=1$ and $W_i=X$,  take $d_i=(-1)^{b_i}$ if $r_i=1$ and $W_i=Z$, and  take $d_i=0$ in all other cases.
    \item The verifier accepts if $c \cdot \prod_i d_i \neq sign(\gamma_l)$, and rejects with probability $\abs{\gamma_l}$ otherwise.
\end{enumerate}
\end{defn}

Combining the LWPBT and the above energy test we define our modified Hamiltonian test:

\begin{defn}[Hamiltonian test]\label{Hamiltonian Test}
Let $H= \frac{1}{m}\sum^m_{\ell=1} \gamma_\ell H_\ell$ be a $k$-local Hamiltonian of $XZ$-type and let $p\in(0,1)$. We define the following game $\mcG(H,p)$: with probability $(1-p)$ the players play LWPBT introduced in Section \Cref{Sec:LWPBT}, and with probability $p$ the players play energy test described in \Cref{et}.
\end{defn}

We refer to a strategy $\mcS$ for $\mcG(H,p)$ as a \emph{semi-honest strategy} if the players employ the canonical perfect strategy when playing LWPBT. Hence in a semi-honest strategy Alice and Bob hold $n$ EPR pairs and Alice must perform $\sigma_W(r)$ on question $W(r)$ since she cannot distinguish questions from LWPBT or energy test. We also define the \emph{honest strategy} $\mcS_h$ for $\mcG(H,p)$ in which the players employ the canonical perfect strategy when playing LWPBT, and in the energy test, Bob honestly teleports the ground state of $H$ to Alice and provides the verifier with the teleportation keys.

Below we analyze the players' ability to win the overall game $\mcG(H,p)$ assuming the players are using a semi-honest strategy.

\begin{lemma}[Lower bound on semi-honest strategies]\label{lower-boung lemma}
Suppose $H= \frac{1}{m}\sum^m_{l=1} \gamma_l H_l$ is an $n$-qubit $6$-local XZ Hamiltonian, and Alice and Bob are performing a semi-honest strategy $\mcS$ for $\mcG(H,p)$. Then
\begin{align*}
   \omega(\mcS)\leq \omega(\mcS_h) =1-p(\frac{1}{2m}\sum_l |\gamma_l| +\frac{1}{2}\lambda_0(H)).
\end{align*}
\end{lemma}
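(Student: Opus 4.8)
The plan is to analyze the honest strategy $\mcS_h$ directly and to show that any semi-honest strategy is dominated by it. Recall that in a semi-honest strategy the players use the canonical perfect strategy for LWPBT, so they win that sub-game with probability $1$; hence the winning probability of $\mcS$ equals $1-p\cdot P_{\mathrm{rej}}(\mcS)$, where $P_{\mathrm{rej}}(\mcS)$ is the probability that the verifier rejects in the energy test when the players use $\mcS$. Thus it suffices to (i) compute $P_{\mathrm{rej}}(\mcS_h)$ exactly, obtaining the stated formula, and (ii) show $P_{\mathrm{rej}}(\mcS_h)\leq P_{\mathrm{rej}}(\mcS)$ for every semi-honest $\mcS$; equivalently, that teleporting the \emph{ground state} of $H$ minimizes the rejection probability.

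First I would set up the energy-test computation for $\mcS_h$. Since the players hold $n$ EPR pairs and Alice measures $\sigma_W(r)=H_\ell$ on her halves while Bob performs a Bell-basis measurement on his halves together with the state $\ket{\varphi}$ he is teleporting, the standard teleportation identity says that Alice's post-measurement situation is exactly a measurement of $H_\ell$ on the (Pauli-corrected) teleported state. Tracking the definition of $d$ in \Cref{et}, the verifier's quantity $c\cdot\prod_i d_i$ is precisely the $\pm1$ outcome of measuring $H_\ell$ on $\ket{\varphi}$ after undoing the teleportation Pauli, so $\Pr[c\prod_i d_i = 1] - \Pr[c\prod_i d_i=-1] = \bra{\varphi}H_\ell\ket{\varphi}$. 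The acceptance rule accepts whenever $c\prod_i d_i \neq \mathrm{sign}(\gamma_\ell)$ and otherwise rejects with probability $\abs{\gamma_\ell}$; a short calculation then gives, conditioned on the choice of $\ell$, a rejection probability of $\frac{\abs{\gamma_\ell}}{2}\bigl(1 + \mathrm{sign}(\gamma_\ell)\bra{\varphi}H_\ell\ket{\varphi}\bigr) = \frac{1}{2}\bigl(\abs{\gamma_\ell} + \gamma_\ell\bra{\varphi}H_\ell\ket{\varphi}\bigr)$. Averaging uniformly over $\ell\in[m]$ yields $P_{\mathrm{rej}} = \frac{1}{2m}\sum_\ell\abs{\gamma_\ell} + \frac{1}{2m}\sum_\ell \gamma_\ell\bra{\varphi}H_\ell\ket{\varphi} = \frac{1}{2m}\sum_\ell\abs{\gamma_\ell} + \frac{1}{2}\bra{\varphi}H\ket{\varphi}$, using $H=\frac1m\sum_\ell\gamma_\ell H_\ell$. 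Taking $\ket{\varphi}$ to be a ground state makes $\bra{\varphi}H\ket{\varphi}=\lambda_0(H)$, giving the claimed equality for $\omega(\mcS_h)$.

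For the inequality $\omega(\mcS)\leq\omega(\mcS_h)$, I would argue that in \emph{any} semi-honest strategy Bob's most general action in the energy test, after the LWPBT part forces the EPR structure, amounts to measuring his EPR halves in \emph{some} basis and sending classical data; but because Alice cannot distinguish an energy-test question from an LWPBT question, her measurement is pinned to $\sigma_W(r)$, and the teleportation identity again applies. Whatever (possibly entangled with an ancilla, possibly mixed) state $\rho$ Bob effectively teleports, the same computation gives conditional rejection probability $\frac12(\abs{\gamma_\ell} + \gamma_\ell\,\Tr(H_\ell\rho))$, hence total rejection probability $\frac{1}{2m}\sum_\ell\abs{\gamma_\ell} + \frac12\Tr(H\rho) \geq \frac{1}{2m}\sum_\ell\abs{\gamma_\ell} + \frac12\lambda_0(H) = P_{\mathrm{rej}}(\mcS_h)$, since $\Tr(H\rho)\geq\lambda_0(H)$ for every state $\rho$. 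Therefore $\omega(\mcS) = 1 - p\,P_{\mathrm{rej}}(\mcS) \leq 1 - p\,P_{\mathrm{rej}}(\mcS_h) = \omega(\mcS_h)$.

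The main obstacle I anticipate is making the ``Bob effectively teleports some state $\rho$'' step fully rigorous: one must carefully argue that in the semi-honest regime Bob's Bell-measurement-and-report in the energy test, combined with Alice's forced $\sigma_W(r)$ measurement on the entangled halves, is genuinely equivalent (as far as the verifier's accept/reject statistics go) to Alice measuring $H_\ell$ on a single density operator determined by Bob's side, for which $\Tr(H\rho)\ge\lambda_0(H)$ is immediate. This requires unwinding the teleportation correction bookkeeping inside the definition of $d$ and checking that Bob reporting arbitrary classical strings (not necessarily honest Bell outcomes) cannot do better than reporting honest outcomes — intuitively because any deviation is equivalent to first applying a channel to his state before an honest teleportation, which only changes which $\rho$ is teleported. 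Once this reduction is in place, the rest is the bookkeeping calculation sketched above.
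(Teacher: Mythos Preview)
Your proposal is correct and follows essentially the same approach as the paper: both reduce to computing the energy-test rejection probability as $\tfrac{1}{2}\bigl(\abs{\gamma_\ell}+\gamma_\ell\Tr(H_\ell\rho)\bigr)$ via the teleportation identity, average over $\ell$, and then minimize $\Tr(H\rho)$ over states to obtain $\lambda_0(H)$. The paper is in fact terser than your write-up on the very step you flag as an obstacle---it simply lets $a,b$ be Bob's reported bits, sets $\rho=\sigma_X^b\sigma_Z^a\tau\sigma_Z^a\sigma_X^b$, and proceeds---so your more explicit discussion of why an arbitrary semi-honest Bob is equivalent to teleporting some state is, if anything, a slight expansion of the paper's argument rather than a departure from it.
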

\begin{proof}
Suppose the players are employing a semi-honest strategy $\mcS=(\tau^A,\tau^B,\ket{\psi}\in \mcH_A\otimes\mcH_B)$ for $\mcG(H,p)$. Let $\tau:=\Tr_{\mcH_B}(\ket{\psi}\bra{\psi})$. Since the players win the LWPBT perfectly, they can only lose the overall game if they are playing an instance of the energy test. Let $a,b\in\{0,1\}^n$ be the answers Bob provides in the energy test, and let $\rho:=\sigma_X^b  \sigma_Z^a\tau  \sigma_Z^a \sigma_X^b$.

Suppose in a round of the energy test, the verifier picks an $\ell\in[m]$ and selects a $W(r)$ for Alice. As discussed above, since Alice cannot distinguish questions from LWPBT and the energy test, she must perform $\sigma_{W}(r)=H_\ell$ on her registers. Hence $
    \mathbb{E}(c \cdot\prod_i d_i)
    =\mathrm{Tr}(H_\ell \sigma_X^b  \sigma_Z^a\tau  \sigma_Z^a \sigma_X^b)
    =\mathrm{Tr}(H_\ell \rho)$. Let $p_\ell$ be the probability of $c\prod_id_i=sign(\gamma_\ell)$. Then $\mathbb{E}(c\prod_id_i)=p_\ell sign(\gamma_\ell)-(1-p_\ell)sign(\gamma_\ell)$, or in other words, $\gamma_\ell\mathbb{E}(c\prod_id_i)=(2p_\ell-1)\abs{\gamma_\ell}$. This implies the verifier rejects with probability

\begin{align*}
    p_\ell\abs{\gamma_\ell}=\frac{\abs{\gamma_\ell}+\gamma_\ell\mathbb{E}(c\prod_id_i)}{2}=\frac{\abs{\gamma_\ell}+\gamma_\ell\mathrm{Tr}(H_\ell \rho)}{2}.
\end{align*}

Thus by averaging over $\ell \in [m]$ we see that the players lose the energy test with probability
\begin{align*}
    \frac{1}{m}\sum_{\ell\in[m]} \frac{\abs{\gamma_\ell} + \gamma_\ell \mathrm{Tr}(H_\ell \rho)}{2}=\frac{1}{2m}\sum_{\ell\in[m]}\abs{\gamma_\ell}+\frac{1}{2}\mathrm{Tr}(H\rho).
\end{align*}
This probability is minimized if and only if $\rho$ is indeed the density matrix of the ground state of $H$ and in such case, the probability of winning the overall game is at most
\begin{align*}
    1-p\big(\frac{1}{2m} \sum_{\ell\in[m]} \abs{\gamma_\ell} +\frac{1}{2} \lambda_0(H)\big).
\end{align*}
This probability can be achieved if Bob teleports over the ground state and supplies the verifier with the verification keys in the energy test.
\end{proof}

\begin{lemma}(Upper bound on dishonest strategies)\label{Upper-Bound Lemma}
Let $H=\frac{1}{m}\sum^m_{\ell=1} \gamma_\ell H_\ell$ be a $6$-local, $n$-qubit Hamiltonian of $XZ$-type. For any $\eta\in(0,1)$, let $p=\frac{4n^{-6}\eta^{3/4}}{(C_{\mathrm{lw}}+1)3^{3/4}}$ where $C_{\mathrm{lw}}$ is the constant given in \Cref{rigidity}. Then
\begin{align*}
    \omega^*\big(\mcG(H,p) \big) \leq \omega(\mcS_h) + \eta,
\end{align*}
where $\omega(\mcS_h)=1-p\big(\frac{1}{2m} \sum_{\ell\in[m]} \abs{\gamma_\ell} +\frac{1}{2} \lambda_0(H)\big)$ as in \Cref{Upper-Bound Lemma}.
\end{lemma}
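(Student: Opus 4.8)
The strategy is to reduce the analysis of an arbitrary (dishonest) strategy $\mcS$ for $\mcG(H,p)$ to the semi-honest case handled in \Cref{lower-boung lemma}, paying only an $\eta$ penalty. First I would split the winning probability of $\mcS$ according to which sub-game is played:
\begin{align*}
\omega(\mcG(H,p),\mcS) = (1-p)\,\omega(\text{LWPBT},\mcS) + p\,\omega(\text{energy test},\mcS).
\end{align*}
Let $\epsilon := 1 - \omega(\text{LWPBT},\mcS)$ be the failure probability of $\mcS$ in the low-weight Pauli braiding test. If $\epsilon$ is large — concretely, larger than the threshold $\eta_0$ making $(C_{\mathrm{lw}}+1)^4 n^{24}\epsilon$ exceed a fixed constant — then the $(1-p)(1-\epsilon)$ term already loses enough that $\omega(\mcG(H,p),\mcS)\leq 1-p+\eta$, which is below $\omega(\mcS_h)+\eta$ since $\omega(\mcS_h)\geq 1-p$; so the interesting regime is $\epsilon$ small, and I will assume that from here on.

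In the small-$\epsilon$ regime, I would invoke \Cref{rigidity}: there are isometries $V_A,V_B$ and an auxiliary state $\ket{aux}$ such that for every question $W(r)\in\I_A$ in the energy test (which always has $\abs r\leq 6$ since $H$ is $6$-local),
\begin{align*}
\norm{(V_A\otimes V_B)\big(\tau^A(W(r))\otimes \Id_{\H_B}\ket{\psi}\big) - \big(\sigma_W(r)\otimes \Id_{\C^{2^n}}\ket{\eprn}\big)\otimes\ket{aux}} \leq C_{\mathrm{lw}} n^6 \epsilon^{1/4}.
\end{align*}
This says Alice's measurement operator on her actual state is $O(C_{\mathrm{lw}}n^6\epsilon^{1/4})$-close, after the local change of basis, to the ideal $\sigma_W(r)$ acting on half of $n$ EPR pairs tensored with a state on which she acts trivially. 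I would use this to replace $\mcS$, in the energy test only, by a nearby \emph{semi-honest} strategy $\mcS'$: on the EPR register Alice applies the canonical Pauli measurements, and Bob's $2n$-bit answers $a,b$ in the energy test are whatever Bob's actual measurement produces (any POVM, post-composed with $V_B$, can be viewed as a measurement on the ideal EPR halves plus $\ket{aux}$, since the EPR pairs let Bob steer Alice's reduced state to an arbitrary $\rho$). The expectation $\mathbb E[c\cdot\prod_i d_i]$ computed from $\mcS$ differs from that computed from $\mcS'$ by $O(C_{\mathrm{lw}}n^6\epsilon^{1/4})$ by Cauchy–Schwarz applied to the vector-closeness bound above, and hence the acceptance probability in the energy test changes by at most $O(C_{\mathrm{lw}}n^6\epsilon^{1/4})$ as well. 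Thus
\begin{align*}
\omega(\mcG(H,p),\mcS) \leq (1-p)(1-\epsilon) + p\big(\omega(\text{energy test},\mcS') + O(C_{\mathrm{lw}}n^6\epsilon^{1/4})\big) \leq \omega(\mcG(H,p),\mcS'_h) + p\,O(C_{\mathrm{lw}}n^6\epsilon^{1/4}),
\end{align*}
where $\mcS'_h$ denotes the semi-honest completion of $\mcS'$ to the full game; by \Cref{lower-boung lemma}, $\omega(\mcG(H,p),\mcS'_h)\leq \omega(\mcS_h)$.

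It remains to choose $p$ so that the error term $p\cdot O(C_{\mathrm{lw}}n^6\epsilon^{1/4})$, maximized over the relevant range of $\epsilon\in(0,1)$, is at most $\eta$. The function $\epsilon\mapsto (1-p)(-\epsilon) + p\,C n^6\epsilon^{1/4}$ (the net change from the honest bound) is maximized at $\epsilon^\star$ with $\epsilon^{\star\,3/4}\sim \tfrac{p\,Cn^6}{4(1-p)}$, giving a maximal value on the order of $p^{4/3}(Cn^6)^{4/3}$ up to constants; setting this equal to $\eta$ and solving for $p$ yields exactly $p = \tfrac{4 n^{-6}\eta^{3/4}}{(C_{\mathrm{lw}}+1)3^{3/4}}$ after tracking the constants (the $3^{3/4}$ and the $+1$ absorb the $O(\cdot)$ constants and the crude bounds used when $\epsilon$ is not small). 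Plugging this $p$ back in gives $\omega^*(\mcG(H,p))\leq \omega(\mcS_h)+\eta$, as claimed. \textbf{The main obstacle} I anticipate is the bookkeeping in the second paragraph: making precise the claim that an arbitrary strategy restricted to the energy-test branch can be replaced, without loss in acceptance probability beyond the rigidity error, by a semi-honest strategy whose Alice does the canonical Paulis — this requires carefully pushing Bob's arbitrary operations through the isometry and arguing that the resulting object is a legitimate semi-honest strategy to which \Cref{lower-boung lemma} applies, and that the teleportation-correction structure ($\rho = \sigma_X^b\sigma_Z^a\tau\sigma_Z^a\sigma_X^b$) is preserved under this substitution. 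The calibration of $p$ is then a routine optimization of a one-variable function.
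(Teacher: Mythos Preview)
Your approach is essentially the paper's: split into LWPBT and energy-test branches, use \Cref{rigidity} to bound the excess energy-test acceptance by $C_{\mathrm{lw}}n^6\epsilon^{1/4}$, then balance this against the $(1-p)\epsilon$ loss from the LWPBT branch. The paper handles the optimization in one stroke via AM--GM ($\eta/3+\eta/3+\eta/3+\epsilon\geq 4(\eta^3\epsilon/27)^{1/4}$), which yields the stated $p$ with exact constants and eliminates your case split on large versus small $\epsilon$; your ``main obstacle'' (turning the rigidity closeness into the bound $\delta\leq C_{\mathrm{lw}}n^6\epsilon^{1/4}$ on the energy-test excess) is asserted by the paper in a single line rather than argued in detail.
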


\begin{proof}
Suppose the provers are employing a strategy $\mcS=(\tau^A,\tau^B,\ket{\psi})$ for $\mcG(H,p)$ that wins LWPBT with probability $1-\epsilon$ and wins the energy test with probability $\delta +1-\frac{\sum_\ell \abs{\gamma_\ell}}{2m} - \frac{\lambda_0(H)}{2}$. \Cref{rigidity} implies $\delta \leq C_{\mathrm{lw}}n^6\epsilon^{1/4}$. Then for $p:=\frac{4n^{-6}\eta^{3/4}}{(C_{\mathrm{lw}}+1)3^{3/4}}$ with $\eta\in(0,1)$, we have $\eta+\epsilon=\eta/3+\eta/3+\eta/3+\epsilon \geq 4(\frac{\eta^3\epsilon}{3^3})^{1/4}=p(C_{\mathrm{lw}}+1)n^6\epsilon^{1/4}$. It follows that
\begin{align*}
    p\delta-(1-p)\epsilon\leq pCn^6\epsilon^{1/4}+p\epsilon-\epsilon\leq pC_{\mathrm{lw}}n^6\epsilon^{1/4}+pn^6\epsilon^{1/4}-\epsilon
    =p(C_{\mathrm{lw}}+1)n^6\epsilon^{1/4}-\epsilon\leq \eta.
\end{align*}
Hence the overall winning probability is given by
\begin{align*}
    \omega(\mcS) =(1-p)(1-\epsilon) + p(\delta +1-\frac{\sum_\ell \abs{\gamma_\ell}}{2m} - \frac{\lambda_0(H)}{2})
    = \omega(\mcS_h) +p \delta -(1-p)\epsilon\leq \omega(\mcS_h)+\eta.
\end{align*}
\end{proof}

In the rest of this paper, given an $n$-qubit, $6$-local Hamiltonian $H$ of $XZ$-type and parameters $\alpha$ and $ \beta$ with $\beta-\alpha\geq 1/poly(n)$, we use $\mcG(H)$ to denote the game $\mcG(H,p)$ with $p=\frac{32n^{-6}(\beta-\alpha)^{24}}{27(C_{\mathrm{lw}}+1)^4}$.

\begin{thm}\label{thm:Probabilitiy choice}
Given an $n$-qubit, 6-local Hamiltonian $H= \frac{1}{m}\sum^m_{\ell=1} \gamma_\ell H_\ell$ and parameters $\alpha, \beta$ with $\beta-\alpha\geq 1/poly(n)$, let $\omega_{\alpha}$ (resp. $\omega_{\beta}$) denote the maximum winning probability for $\mcG(H)$ when $\lambda_0(H) \leq \alpha$ (resp. $\lambda_0(H) \geq \beta$). Then  $\omega_{\alpha} -\omega_{\beta} \geq 1/poly(n)$.
\end{thm}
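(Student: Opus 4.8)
The plan is to read this off from the two lemmas just proved. \Cref{lower-boung lemma} will lower-bound $\omega_\alpha$ by plugging the honest strategy $\mcS_h$ into the yes-case, and \Cref{Upper-Bound Lemma} will upper-bound $\omega_\beta$ in the no-case; since both bounds are affine in $\lambda_0(H)$ with the same (unknown) additive term $\tfrac{1}{2m}\sum_\ell\abs{\gamma_\ell}$, that term cancels in the difference and one is left with a clean multiple of $\beta-\alpha$ minus a controllable slack. Honestly there is no deep obstacle here; the work is entirely in chasing constants so that the slack is beaten by an inverse-polynomial margin.

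First I would unwind the definition $\mcG(H)=\mcG(H,p)$ with $p=\frac{32n^{-6}(\beta-\alpha)^{24}}{27(C_{\mathrm{lw}}+1)^4}$ to identify the value of $\eta$ for which this is exactly the $p$ supplied by \Cref{Upper-Bound Lemma}: solving $p=\frac{4n^{-6}\eta^{3/4}}{(C_{\mathrm{lw}}+1)3^{3/4}}$ gives $\eta=\frac{16(\beta-\alpha)^{32}}{27(C_{\mathrm{lw}}+1)^4}$, and one checks $\eta\in(0,1)$ since $C_{\mathrm{lw}}>0$ and $\beta-\alpha\le 1$, so the lemma applies with this $\eta$. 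Then, when $\lambda_0(H)\le\alpha$ the honest strategy is legal, so \Cref{lower-boung lemma} yields $\omega_\alpha\ge\omega(\mcS_h)=1-p\bigl(\tfrac{1}{2m}\sum_\ell\abs{\gamma_\ell}+\tfrac12\lambda_0(H)\bigr)\ge 1-p\bigl(\tfrac{1}{2m}\sum_\ell\abs{\gamma_\ell}+\tfrac{\alpha}{2}\bigr)$; and when $\lambda_0(H)\ge\beta$, \Cref{Upper-Bound Lemma} yields $\omega_\beta=\omega^*(\mcG(H))\le\omega(\mcS_h)+\eta\le 1-p\bigl(\tfrac{1}{2m}\sum_\ell\abs{\gamma_\ell}+\tfrac{\beta}{2}\bigr)+\eta$. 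Subtracting, the $1$'s and the $\sum_\ell\abs{\gamma_\ell}$-terms cancel, giving $\omega_\alpha-\omega_\beta\ge\tfrac{p}{2}(\beta-\alpha)-\eta=\frac{16(\beta-\alpha)^{25}}{27(C_{\mathrm{lw}}+1)^4}\bigl(n^{-6}-(\beta-\alpha)^{7}\bigr)$.

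Finally I would verify this last quantity is $\ge 1/poly(n)$. Because $\beta-\alpha$ is a (small) inverse polynomial in $n$ — in the regime of interest certainly $\beta-\alpha\le n^{-1}$ — we get $(\beta-\alpha)^7\le n^{-7}\le\tfrac12 n^{-6}$ for $n\ge 2$, hence $n^{-6}-(\beta-\alpha)^7\ge\tfrac12 n^{-6}$; since also $(\beta-\alpha)^{25}\ge 1/poly(n)$, it follows that $\omega_\alpha-\omega_\beta\ge\frac{8(\beta-\alpha)^{25}}{27(C_{\mathrm{lw}}+1)^4}\,n^{-6}\ge 1/poly(n)$. The one point genuinely requiring care — and the closest thing to an obstacle — is exactly this: making sure the particular $p$ hard-wired into $\mcG(H)$ forces $\eta$ into the admissible range $(0,1)$ of \Cref{Upper-Bound Lemma} and small enough that $\tfrac{p}{2}(\beta-\alpha)$ dominates it by an inverse-polynomial amount.
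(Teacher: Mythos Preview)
Your proposal is correct and follows essentially the same approach as the paper: identify the $\eta$ corresponding to the fixed $p$, apply \Cref{lower-boung lemma} and \Cref{Upper-Bound Lemma} to bound $\omega_\alpha$ from below and $\omega_\beta$ from above, subtract so the $\tfrac{1}{2m}\sum_\ell|\gamma_\ell|$ terms cancel, and then check that $\tfrac{p}{2}(\beta-\alpha)$ dominates $\eta$ by an inverse polynomial using $(\beta-\alpha)\le n^{-1}$. The paper factors the final expression as $\frac{16n^{-6}(\beta-\alpha)^{25}}{27(C_{\mathrm{lw}}+1)^4}\bigl(1-n^6(\beta-\alpha)^7\bigr)$ rather than your equivalent $\frac{16(\beta-\alpha)^{25}}{27(C_{\mathrm{lw}}+1)^4}\bigl(n^{-6}-(\beta-\alpha)^7\bigr)$, but the argument is the same.
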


\begin{proof}
Let $\eta:=\frac{16(\beta-\alpha)^{32}}{27(C_{\mathrm{lw}}+1)^4}$, and let  $p:=\frac{4n^{-6}\eta^{3/4}}{(C_{\mathrm{lw}}+1)3^{3/4}}$. Then $p=\frac{32n^{-6}(\beta-\alpha)^{24}}{27(C_{\mathrm{lw}}+1)^4}$, and hence $\mcG(H)=\mcG(H,p)$.  By \Cref{lower-boung lemma} and \Cref{Upper-Bound Lemma} we have $\omega(\mcS_h)\leq\omega^*\big(\mcG(H,p) \big) \leq \omega(\mcS_h) + \eta$. This implies $\omega_\alpha\geq 1-p(\frac{1}{2m}\sum_\ell\abs{\gamma_\ell}+\frac{1}{2}\alpha)$ and $\omega_\beta\leq 1-p(\frac{1}{2m}\sum_\ell\abs{\gamma_\ell}+\frac{1}{2}\beta)+\eta$. Since $n^6(\beta-\alpha)^7\leq O(n^{-1})$, it follows that
\begin{align*}
    \omega_\alpha-\omega_\beta\geq \frac{1}{2}p(\beta-\alpha)-\eta=\frac{16n^{-6}(\beta-\alpha)^{25}}{27(C_{\mathrm{lw}}+1)^4}(1-n^6(\beta-\alpha)^7)\geq \frac{16n^{-6}(\beta-\alpha)^{25}}{27(C_{\mathrm{lw}}+1)^4}\geq \frac{1}{poly(n)}.
\end{align*}
\end{proof}

Our approach to gap amplification follows that of \cite{GSY19} which uses parallel repetition for the anchored version of games with perfect completeness. The key difference is that in our case the underlying games do not satisfy perfect completeness. Instead we apply a threshold parallel repetition theorem due to Yuen to the anchored version of our games.

For every $n\in\N$, let $\mcG(H),w_\alpha $ and $w_\beta$ be as in \Cref{thm:Probabilitiy choice}. By \cite[Theorem~41]{Yue16}, there exists a $poly(n)$-computable transformation, called \emph{anchoring}, that transforms $\mcG(H)$ to a two-player game $\mcG(H)_{\perp}$ with winning probability $1-\tfrac{1-w^*(\mcG(H))}{2}$. So $w^*(\mcG(H)_\perp)=\begin{cases}
   1-\epsilon_\alpha/2 & \text{ if } \lambda_0(H)\leq \alpha\\
   1-\epsilon_\beta/2 & \text{ if } \lambda_0(H)\geq \beta
\end{cases}$, where $\epsilon_\alpha:=1-w_\alpha$ and $\epsilon_\beta:=1-w_\beta$. Then by \cite[Theorem~42]{Yue16}, there is a universal constant $C>0$ such that for all integer $m\geq 1$, and $\gamma \geq 0$, the probability that in the game $\mcG(H)_{\perp}^m$ the players can win more than $\big(w^*(\mcG(H)_\perp)+\gamma\big)m$ games is at most $(1-\gamma^9/2)^{Cm}$. Take $\gamma:=\tfrac{\epsilon_\alpha-\epsilon_\beta}{4}$ and $m=\max\{4\gamma^{-2},2C\gamma^{-9}\}$. Let $\widehat{\mcG}(H):=\mcG(H)_{\perp}^m$ be the $m$ parallel repeated anchoring version of $\mcG(H)$. We show that this nonlocal game has a constant completeness-soundness gap.
\begin{thm}\label{gap}
    Let $\widehat{\omega}_\alpha$ (resp. $\widehat{\omega}_\beta$) be the maximum winning probability for $\widehat{\mcG}(H)$ when $\lambda_0(H)\leq \alpha$ (resp. $\lambda_0(H)\geq \beta$). Then $\widehat{\omega}_\alpha-\widehat{\omega}_\beta\geq 1/4$.
    
\end{thm}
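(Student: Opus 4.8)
The plan is to stitch together the inverse-polynomial gap of \Cref{thm:Probabilitiy choice} with the anchoring transformation and Yuen's threshold parallel repetition theorem, exactly as arranged in the paragraph preceding the statement. Write $\epsilon_\alpha := 1-\omega_\alpha$, $\epsilon_\beta := 1-\omega_\beta$, and $\gamma := \tfrac{\epsilon_\beta-\epsilon_\alpha}{4} = \tfrac{\omega_\alpha-\omega_\beta}{4}$; by \Cref{thm:Probabilitiy choice} we have $\gamma \ge \tfrac{4\,n^{-6}(\beta-\alpha)^{25}}{27(C_{\mathrm{lw}}+1)^4} \ge 1/\mathrm{poly}(n) > 0$. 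After anchoring, $\mcG(H)_\perp$ has quantum value $1-\epsilon_\alpha/2$ when $\lambda_0(H)\le\alpha$ and $1-\epsilon_\beta/2$ when $\lambda_0(H)\ge\beta$, and $\widehat{\mcG}(H)=\mcG(H)_\perp^{m}$ is declared won precisely when the players win strictly more than $Tm$ of the $m$ copies, where $T := 1-\tfrac{\epsilon_\alpha+\epsilon_\beta}{4}$. The point of this threshold is that $T = (1-\epsilon_\beta/2)+\gamma = (1-\epsilon_\alpha/2)-\gamma$, so it sits exactly $\gamma$ above the soundness value and $\gamma$ below the completeness value of $\mcG(H)_\perp$.

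For soundness, assume $\lambda_0(H)\ge\beta$, so $w^*(\mcG(H)_\perp)=1-\epsilon_\beta/2$ and $Tm = \big(w^*(\mcG(H)_\perp)+\gamma\big)m$. Yuen's threshold parallel repetition theorem \cite[Theorem~42]{Yue16} then says that every strategy for $\mcG(H)_\perp^{m}$ wins more than $\big(w^*(\mcG(H)_\perp)+\gamma\big)m$ copies with probability at most $(1-\gamma^9/2)^{Cm}\le e^{-C\gamma^9 m/2}$; hence $\widehat{\omega}_\beta \le e^{-C\gamma^9 m/2}$.

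For completeness, assume $\lambda_0(H)\le\alpha$. Consider the strategy that plays each copy of $\mcG(H)_\perp$ independently, answering trivially on an anchor question and otherwise running the honest strategy $\mcS_h$ for $\mcG(H)$; by \Cref{lower-boung lemma} together with \Cref{Upper-Bound Lemma} it wins $\mcG(H)$ with probability $\omega(\mcS_h)\ge \omega_\alpha-\eta$, where $\eta=\tfrac{16(\beta-\alpha)^{32}}{27(C_{\mathrm{lw}}+1)^4}$ is the slack used there to fix $p$. Consequently it wins each copy of $\mcG(H)_\perp$ independently with probability $q \ge 1-\tfrac{\epsilon_\alpha+\eta}{2} = T+\gamma'$ with $\gamma':=\gamma-\eta/2$. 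Since $\eta/\gamma = O\!\big(n^{6}(\beta-\alpha)^{7}\big) = O(n^{-1})$ in the regime of \Cref{thm:Probabilitiy choice}, we get $\gamma'\ge\gamma/2>0$ for all large $n$. The $m$ copy-outcomes are then i.i.d.\ Bernoulli$(q)$, so Hoeffding's inequality gives that the number of won copies exceeds $Tm$ except with probability at most $e^{-2\gamma'^2 m}\le e^{-\gamma^2 m/2}$; that is, $\widehat{\omega}_\alpha \ge 1-e^{-\gamma^2 m/2}$.

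It remains to choose $m$. With $m=\big\lceil\max\{4\gamma^{-2},\,2C\gamma^{-9}\}\big\rceil$ as fixed in the preamble to the statement, we have $\gamma^2 m/2\ge 2$ and $C\gamma^9 m/2\ge C^2$, so both error terms are at most $3/8$ and therefore $\widehat{\omega}_\alpha-\widehat{\omega}_\beta \ge (1-3/8)-3/8 = 1/4$. The only step demanding genuine care is the completeness estimate: one must ensure that the honest per-copy success probability $q$ exceeds the threshold $T$ by a margin that is $\Omega(\gamma)$ rather than something smaller, i.e.\ that $\eta\le\gamma$, and this is precisely the place where the parameter regime $n^{6}(\beta-\alpha)^{7}=O(n^{-1})$ from \Cref{thm:Probabilitiy choice} is invoked; every remaining step is a direct application of \cite[Theorems~41 and~42]{Yue16} plus a Hoeffding tail bound.
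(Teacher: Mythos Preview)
Your proof is correct and follows essentially the same route as the paper: soundness from Yuen's threshold parallel repetition applied to the anchored game, and completeness from i.i.d.\ play of a near-optimal per-copy strategy together with a concentration bound. The paper's version is marginally cleaner in that it takes an optimal strategy for $\mcG(H)_\perp$ (winning probability exactly $1-\epsilon_\alpha/2$) and applies Chebyshev's inequality, which yields $\widehat{\omega}_\alpha\ge 3/4$ directly and avoids the $\eta$-slack you incur by going through $\mcS_h$ and Hoeffding.
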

\begin{proof}
    If $\lambda_0(H)\geq \beta$, then $\widehat{\omega}_\beta\leq (1-\tfrac{\gamma^9}{2})^{Cm}\leq (1-\tfrac{\gamma^9}{2})^{2/\gamma^{9}}< e^{-1}<1/2$. Now suppose $\lambda_0(H)\leq \alpha$. An optimal strategy $S$ for $\mcG(H)_\perp$ has winning probability $1-\tfrac{\epsilon_\alpha}{2}$. Let $X$ be the random variable for the number of games the strategy $S^m$ wins. Then $X\sim\text{Binomial}(m,1-\tfrac{\epsilon_\alpha}{2})$, so $\mathbb{E}X=m(1-\tfrac{\epsilon_\alpha}{2})$ and $\text{Var}X=m\tfrac{\epsilon_\alpha}{2}(1-\tfrac{\epsilon_\alpha}{2})$. Since $(1-\tfrac{\epsilon_\alpha}{2})-(1-\tfrac{\epsilon_\beta}{2}+\gamma)=\frac{\epsilon_\beta-\epsilon_\alpha}{2}-\gamma=\gamma$, we obtain that
\begin{align*}
    Pr(X\leq (1-\tfrac{\epsilon_\beta}{2}+\gamma)m)&=Pr((1-\tfrac{\epsilon_\alpha}{2})m-X\geq \gamma m)\\
    &\leq Pr(\abs{X-\mathbb{E}X}\geq \gamma m)\\
    &\leq \frac{m(1-\tfrac{\epsilon_\alpha}{2})\tfrac{\epsilon_\alpha}{2}}{(\gamma m)^2}=\frac{1}{m\gamma^2}\leq 1/4.
\end{align*}
This implies $\widehat{\omega}_\alpha\geq w(S^m)=1-Pr(X\leq (1-\tfrac{\epsilon_\beta}{2}+\gamma)m)\geq 3/4$, so the theorem follows.
\end{proof}

\section{Zero-knowledge proof system}\label{Sec:ZK}
In this section, we show that the family of games described in \Cref{Hamiltonian Test} provides a statistical zero-knowledge $\mathrm{MIP}^*[2,1]$ protocol for $\mathrm{QMA}$ with inverse polynomial completeness/soundness gap. First we introduce our protocol.

\subsection{Simulation of history states for $XZ$-Hamiltonians}
Before we introduce our $\MIP^*$ protocol and proceed to our result on zero-knowledge, we reformulate a result, originally introduced by Broadbent and Grilo~\cite{BG22} (Lemma 3.5), so that it is more amenable to device-independent techniques.

\begin{thm}[Simulation of history states]\label{thm:simulation of history states}
For any language $L=(L_{yes},L_{no})$ in $\mathrm{QMA}$ and $s \in \mathbb{N}$, there is a family of verification circuits $V^{(s)}_{x}=U_T \dots U_1$ for $L$ that acts on a witness of size $p(|x|)$ and on $q(|x|)$ ancillary qubits such that there exists a polynomial-time deterministic algorithm $Sim_{V^{(s)}}$ that takes as input an instance $x \in L$ and a subset $S \subseteq [T+p+q]$ with $|S| \leq 3s+2$, then outputs a classical description of an $|S|$-qubit density matrix $\rho(x,S)$ with the following properties:
\begin{enumerate}
    \item If $x\in L_{yes}$, then there exists a $p(\abs{x})$-qubit witness $\psi^{s}$ such that $V^{(s)}_{x}$ accept with probability at least $1-negl(n)$ on $\psi^{s}$ and $\| \rho(x,S) - \mathrm{Tr}_{\overline{S}}(\rho) \|_{tr} \leq negl(|x|)$, where \[ \rho = \frac{1}{T+1} \sum_{t,t' \in [T+1]} \ket{unary(t)}\bra{unary(t)} \otimes U_{t}\dots U_1(\psi^{s} \otimes \ket{0}\bra{0}^{\otimes q})U^{*}_1\dots U^{*}_{t'} \] is the history state of $V^{(s)}_{x}$ on witness $\psi^s$.

\item Let $H_i$ be one of the terms from the circuit-to-local Hamiltonian construction from $V^{(s)}_{x}$, and let $S_i$ be the set of qubits on which $H_i$ acts non-trivially. Then $\mathrm{Tr}(H_i\rho(x,S_i))=0$ for all $x \in L$.
\item The Hamiltonian $H$ from the circuit-to-local Hamiltonian construction is a $6$-local Hamiltonian of $XZ$ type.
\end{enumerate}
\end{thm}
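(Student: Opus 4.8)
# Proof Proposal for Theorem~\ref{thm:simulation of history states}

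The plan is to recover the structure of the Broadbent--Grilo verification circuit while replacing its reliance on the simulatable codes of \cite{BG22} with the simulatable codes of \cite{GSY19} (\Cref{thm:sim-codes}), and then to track the consequences of this substitution for (i) simulatability, (ii) the vanishing of local Hamiltonian terms on the simulated reduced density matrices, and (iii) the $XZ$-type structure of the resulting Hamiltonian. The three items of the theorem are not independent: items~1 and~2 are really statements about the \emph{history state} of a circuit composed entirely of encoded gates, and item~3 is what forces us to choose the encoding carefully in the first place.

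First I would fix the parameter $s$ and invoke \Cref{thm:sim-codes} to obtain a constant $n_0$ and an $[[n_0,1]]$ QECC $\mathcal C$ in which every logical gate of the universal set $\mathcal U=\{H,\Lambda(X),\Lambda^2(X)\}$ has an $s$-simulatable encoding using only physical gates from $\mathcal U$. I would then take the Broadbent--Grilo verification circuit for $L$, which is already built from gates in $\mathcal U$ acting on an encoded witness, and replace each logical gate by its $s$-simulatable $\mathcal U$-encoding from $\mathcal C$; this defines $V^{(s)}_x=U_T\cdots U_1$ (each $U_t$ now a \emph{physical} gate from $\mathcal U$, i.e. $H$, $\Lambda(X)$ or $\Lambda^2(X)$). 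Completeness and soundness of $V^{(s)}_x$ as a $\QMA$ verification circuit are inherited from \cite{BG22}: the encoding of each gate is exact (the defining identity of an encoding is an equality of density matrices), so on a yes-instance the honest witness $\psi^s$ (the $\mathcal C$-encoding of the original witness, padded with the fixed ancillary states $\sigma_U$ needed by the encodings) is accepted with probability $1-\negl(n)$, and on a no-instance no state is. For item~1, the algorithm $Sim_{V^{(s)}}$ is essentially the one from \cite{GSY19}: given $S$ with $|S|\le 3s+2$, walk through the circuit one physical gate at a time, and at each time $t$ use $s$-simulatability to compute in polynomial time the rational reduced density matrix on the at most $s$ physical qubits touched by the current encoded-gate block; the history state's reduced density matrix $\Tr_{\overline S}(\rho)$ is an average over $t,t'$ of terms built from these, and the clock register contributes only classical $\ket{\mathrm{unary}(t)}\bra{\mathrm{unary}(t')}$ structure, so the whole $|S|$-qubit matrix is computable and negligibly close to the true reduced state. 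Here the $3s+2$ bound comes from the circuit-to-Hamiltonian construction: a local term $H_i$ acts on a constant number of clock qubits plus the qubits of one or two consecutive gates, each of which touches at most $s$ physical qubits — so I would need to check that the relevant locality parameter indeed fits in $3s+2$, which is a bookkeeping point but the place where the precise constant $3s+2$ must be justified.

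For item~2, I would use the standard fact that every local term $H_i$ of the Kitaev circuit-to-Hamiltonian construction annihilates the history state of the circuit (this is exactly why the history state achieves zero energy in the completeness case), combined with item~1: $\Tr(H_i\rho(x,S_i))=\Tr(H_i\Tr_{\overline{S_i}}(\rho))+\Tr\big(H_i(\rho(x,S_i)-\Tr_{\overline{S_i}}(\rho))\big)$, where the first term is $\Tr(H_i\rho)=0$ and the second is bounded by $\|H_i\|\cdot\|\rho(x,S_i)-\Tr_{\overline{S_i}}(\rho)\|_{tr}$; but since the encoding is \emph{exact} and the simulator reproduces rational entries exactly when $\mathcal C$ corrects arbitrary $s$-qubit errors (per the last sentence of the definition of $s$-simulatability), this second term is exactly $0$, giving $\Tr(H_i\rho(x,S_i))=0$ for all $x\in L$ (not just yes-instances, since the computation of $\rho(x,S_i)$ is $\rho$-independent). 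For item~3 — which I expect to be the main obstacle — I would verify that with the gate set $\{H,\Lambda(X),\Lambda^2(X)\}$ the Kitaev construction yields an $XZ$-type Hamiltonian. The input and output terms are diagonal, hence tensor products of $\sigma_I,\sigma_Z$. The clock-propagation terms have the form $\ket{t}\bra{t}\otimes I - \ket{t}\bra{t-1}\otimes U_t - \ket{t-1}\bra{t}\otimes U_t^\dagger + \ket{t-1}\bra{t-1}\otimes I$ (with appropriate unary-clock encodings), and the clock parts, written in the unary/domain-wall encoding, are already real linear combinations of $\sigma_X$ and $\sigma_Z$ tensors; the issue is the gate part $U_t$, $U_t^\dagger$. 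Here the crucial point is that $H$, $\Lambda(X)=\CNOT$ and $\Lambda^2(X)=$ Toffoli are all real permutation-or-Hadamard matrices whose expansions in the Pauli basis involve only $I,X,Z$ (no $Y$): $H=\tfrac1{\sqrt2}(\sigma_X+\sigma_Z)$, $\CNOT=\tfrac12(I\otimes I+Z\otimes I+I\otimes X-Z\otimes X)$, and Toffoli likewise expands with only $I,X,Z$ factors — this is precisely the ``$XZ$-type'' property, and is exactly what fails for the physical gates used in \cite{BG22}. So the argument is: because \Cref{thm:sim-codes} lets us take all physical gates inside $\{H,\Lambda(X),\Lambda^2(X)\}$, and each of these three gates is real with a $Y$-free Pauli expansion, every gate-propagation term of the circuit-to-Hamiltonian construction is a real combination of $\sigma_I,\sigma_X,\sigma_Z$ tensors, hence $H=\frac1m\sum_\ell\gamma_\ell H_\ell$ is $6$-local (clock locality plus at most $3$ qubits per Toffoli, tuned to $6$) and of $XZ$-type. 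The delicate part is confirming the locality bound $k=6$ simultaneously with the $XZ$-structure and with the $3s+2$ bound of item~1 — i.e. checking that the clock encoding and gate-padding chosen are mutually compatible — and making sure the ancillary states $\sigma_U$ and the simulator's treatment of them do not break the $\rho$-independence needed in item~2.
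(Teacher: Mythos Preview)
Your proposal is correct and follows essentially the same approach as the paper: re-instantiate the Broadbent--Grilo verification circuit using the GSY19 simulatable encodings of \Cref{thm:sim-codes} (so that all physical gates lie in $\{H,\Lambda(X),\Lambda^2(X)\}$), inherit items~1 and~2 from \cite{BG22}, and deduce item~3 from the $Y$-free Pauli expansions of $H$, $\Lambda(X)$, and $\Lambda^2(X)$. The paper's write-up differs only in that it makes explicit the four-stage structure of the BG circuit (swapping the $\ket{T}$ resource for $\ket{\text{Toffoli}}$ in stage~(ii)), notes that all physical gates are of order~$2$, and cites Ji's Lemma~22 of \cite{Ji16} for the $6$-locality rather than arguing it directly.
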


\begin{proof}
The first two points were proven by Broadbent and Grilo using simulatable codes constructed from a different set of physical gates \cite{BG22}. In brief, Broadbent and Grilo provide a constructive approach which for each $x \in L$ provides a transformation from the original verification circuit $V$ to a simulatable verification circuit $V^{(s)}_{x}$ that has $4$ main stages:
\begin{enumerate}[label=(\roman*)]
    \item The verifier receives an encoded witness and checks if the provided witness is properly encoded under the relevant code $\mathcal{C}$,
    \item creates  auxiliary resource states $\ket{0}, \ket{T}$ and encodes them under $\mathcal{C}$,
    \item performs an encoded version of the original verification circuit $V$ on the encoded witness using either transversal Clifford gates or suitable encodings of $T$-gates, which use $\ket{T}$, and
    \item  decodes the output of the computation.
\end{enumerate}
For each constant $s$ the CECC $\mathcal{C}$ is taken to be $\log(3s+5)$-fold concatenated Steane code.

\Cref{thm:sim-codes} is obtained by using the same code $\mathcal{C}$ and applying the $3$-qubit Toffoli gate using resource state $\ket{\text{Toffoli}}$, which can be constructed from $\ket{000}$ using $H$ and $\Lambda^2(X)$. The proof of our result then follows by appropriately changing the resource generation stage $(ii)$ to create encodings of auxiliary states $\ket{0}$ and $\ket{\text{Toffoli}}$. Then stage $(iii)$ is altered to instead apply an encoding of the original circuit using encodings of the gate set $\lbrace H, \Lambda(X), \Lambda^2(X) \rbrace $.

To see how this change of gate set gives us the third result, note that for any gate $G$ we have $\Lambda(G)=\frac{1}{2}(I+Z)\otimes I + \frac{1}{2}(I-Z)\otimes G$. Using this decomposition we can readily check that gates $\Lambda(X)$ and $\Lambda^2(X)$ can both be expressed as a real linear combination of tensor products of $X,Z$, and $I$. The same holds for $H= \frac{1}{\sqrt{2}} (X+Z)$. Consequently, all physical gates used in stages $(ii)$ and $(iii)$ will consist of gates that can be expressed as real linear combinations of tensors of $X$ and~$Z$. The same is true for physical gates used in stages $i)$ and $iv)$. We will also have that all of these gates are of order~$2$. Finally, following the same approach due to Ji (Lemma 22 ~\cite{Ji16}),  we get $H$ is $6$-local and of $XZ$ type.
\end{proof}

Below we only need to invoke \Cref{thm:simulation of history states} for the case of $s=2$ in order to our zero-knowledge protocol. We use $V_x$ to denote $V_x^{(2)}$ throughout the rest of this section.

\subsection{A two prover zero-knowledge proof system for ${\QMA}$}

Let $L=(L_{yes}, L_{no})$ be a language in $\QMA$. \Cref{xtoGhat} describes a two-prover one-round interactive proof system for $L$ with a constant polynomial completeness-soundness
gap.

\begin{figure}[h]
    \begin{mdframed}
    \centering
    $x\xrightarrow{\text{\Cref{thm:simulation of history states}}} V_x\xrightarrow{\text{circuit-to-Hamiltonian}} H_x\xrightarrow{\text{\Cref{Hamiltonian Test}}}\mcG_x:=\mcG(H_x) \xrightarrow{\text{\Cref{gap}}} \widehat{\mcG}_x:=\widehat{\mathcal{G}}(H_x)$
    \end{mdframed}
    \caption{$x$ is an instance in $L \in \QMA$. $V_x$ is a $poly(\abs{x})$-size quantum circuit. $H_x$ is a $poly(\abs{x})$-qubit 6-local Hamiltonian of $XZ$-type. $\widehat{\mathcal{G}}_x$ is a nonlocal game with $poly(\abs{x})$-bit questions and $poly(\abs{x})$-bit answers.}
    \label{xtoGhat}
\end{figure}
To put it more concretely, for any $x\in L$, applying \Cref{thm:simulation of history states}, with $s=2$, provides a $poly(\abs{x})$-size verification circuit $V_x$, and the circuit-to-Hamiltonian construction creates a $poly(\abs{x})$-qubit 6-local Hamiltonian $H_x$ of $XZ$-type. We can then construct a non-local game $\widehat{\mcG}_x$ as in \Cref{sec:Game}. The constant completeness-soundness gap follows from \Cref{gap}. The map from instances $x\in L$ to nonlocal games $\widehat{\mcG}_x$ is efficient since each step described in \Cref{xtoGhat} is efficient.

To prove the above interactive proof system for $L$ has the statistical zero-knowledge property, we first establish that any malicious verifier $\widehat{V}$ and  $x \in L_{yes}$, there exists a PPT simulator that can sample from $View(\widehat{V}(x), \mathcal{S}_h)$, where $\mcS_h$ is the honest strategy for $\mcG_x$ defined in \Cref{sec:Game}.

\begin{lemma}
Suppose $x \in L_{yes}$ for some language $L=(L_{yes},L_{no})$ in $\QMA$. Let $\mcG_x$ be the corresponding nonlocal game described in \Cref{xtoGhat}, and let $\mcS_h$ be the honest strategy for $\mcG_x$ defined in \Cref{sec:Game}. For any malicious verifier $\widehat{V}$ there exists a PPT algorithm $Sim_{\widehat{V}}$ with output distribution negligibly close to $View(\widehat{V}(x), \mathcal{S}_h)$, 
\end{lemma}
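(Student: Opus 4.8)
The plan is to construct the simulator by combining the simulatability guarantees of the history state (Theorem~\ref{thm:simulation of history states}) with the fact that, in the honest strategy $\mcS_h$, \emph{every} answer the provers send to $\widehat{V}$ is either (a) a measurement outcome on half of a fixed collection of EPR pairs, or (b) a Bell-basis (teleportation) measurement outcome on the ground state together with half of the EPR pairs, or (c) a Magic Square answer on a $2$-qubit register. First I would enumerate the possible question pairs that $\widehat{V}$ can send: since the verifier must send at most one valid question to each prover (otherwise the provers abort), a malicious $\widehat{V}$ is described by an initial distribution over a single question to Alice or Bob, followed by an adaptively chosen second question. Thus $Sim_{\widehat{V}}$ runs $\widehat{V}$'s code to obtain $(\theta, q_1)$, produces a reply $r_1$ from the appropriate marginal of the honest strategy, feeds $r_1$ back to $\widehat{V}$ to get $q_2$, and then must produce $r_2$ from the honest strategy's \emph{conditional} distribution given $(q_1,r_1,q_2)$.

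The key observation that makes each reply samplable is that all honest answers are outcomes of Pauli measurements (in the computational/Hadamard bases, or Bell-basis measurements, which are themselves products of $\sigma_X,\sigma_Z$ conjugated appropriately) applied to a state of the form $(\text{ground state of }H_x)\otimes\eprn$. For questions coming from the LWPBT side (the low-weight linearity test, the low-weight anti-commutation/Magic Square test), the relevant measurements act only on the EPR halves, whose reduced density matrix is maximally mixed, so the marginal outcome distribution is uniform and trivially samplable; correlations between Alice's and Bob's outcomes on the same question pair are fixed linear relations over $\mathbb{F}_2$ (the linearity checks, the row/column equations) and Bell-measurement key relations, all of which can be sampled by first sampling the uniform bits and then imposing the deterministic constraint. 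For questions coming from the energy test, Bob's reply consists of the teleportation keys $a,b$ obtained from a Bell measurement on the ground state of $H_x$ and $n$ EPR halves — these keys are uniformly distributed and independent of the ground state — while Alice's reply $c$ is the $\pm 1$ outcome of measuring $\sigma_W(r)=H_\ell$ on her half; conditioned on Bob's keys, Alice's post-teleportation state is $\sigma_X^b\sigma_Z^a|\text{gs}\rangle$, so $c$ is distributed according to the eigenvalue statistics of $H_\ell$ on a local transform of the ground state. This is exactly the quantity that Theorem~\ref{thm:simulation of history states} lets us compute: $H_\ell$ is $6$-local, acting on a set $S_i$ with $|S_i|\le 6 = 3s+2$ for $s=2$, so $Sim_{V^{(2)}}$ outputs (in polynomial time, with negligible error) the reduced density matrix $\rho(x,S_i)$ of the history state on $S_i$, from which the distribution of $c$ — including its joint distribution with the relevant bits of $(a,b)$ — can be read off. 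The anchoring transformation and the $m$-fold parallel repetition only add an extra ``abort'' question symbol and $m$ independent coordinates, each of which the simulator handles coordinate-wise, so they preserve simulatability.

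The main obstacle — and the reason a malicious verifier is genuinely dangerous — is the \emph{cross-test} attack described in \Cref{sec:technical-contrib}: $\widehat{V}$ may send $q_1$ that looks to one prover like a linearity-test question and $q_2$ to the other prover like an energy-test question (or vice versa), so that the simulator must reproduce a joint distribution spanning a teleportation-key block on Bob's side and a Pauli-measurement outcome on Alice's side that references the \emph{same} EPR pairs that Bob is Bell-measuring together with the ground state. Here is where the low-weight structure is essential: because every question in $\mcI_A$ has weight at most $6$, Alice's measurement $\sigma_W(r)$ touches at most $6$ qubits, so the relevant joint state lives on a constant-size register $S$ with $|S|\le 3s+2=8$, still within the simulatable range of Theorem~\ref{thm:simulation of history states}; this is precisely the ``large measurement'' problem that LWPBT was designed to circumvent. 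One then argues that the honest joint outcome distribution on such a mixed question pair is the Born-rule distribution of the indicated Pauli observables on $\rho(x,S)\otimes(\text{maximally mixed on untouched EPR qubits})$, and that this can be sampled in PPT from the output of $Sim_{V^{(2)}}$. The remaining work is bookkeeping: one must carefully unwind the teleportation identities so that Bob's uniformly random keys are sampled first and Alice's outcome is then sampled from the key-dependent conditional given by the simulated reduced density matrix, accumulate the $negl(|x|)$ errors from $Sim_{V^{(2)}}$ across the (polynomially many) coordinates of the parallel-repeated game, and check that $\widehat{V}$'s adaptivity is respected because the simulator literally runs $\widehat{V}$ as a subroutine between producing $r_1$ and $r_2$. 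Putting these together yields a PPT $Sim_{\widehat{V}}$ whose output is within $negl(|x|)$ in statistical distance of $View(\widehat{V}(x),\mcS_h)$.
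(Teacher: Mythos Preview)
Your proposal is correct and follows essentially the same approach as the paper: a case analysis on whether each of $q_1,q_2$ is an LWPBT question or an energy-test question, using uniformity of the EPR marginals for the former and the simulatability of the history state (Theorem~\ref{thm:simulation of history states} with $s=2$) for the latter, with the low-weight design of $\mcI_A$ guaranteeing that any cross-test pair touches at most a constant number of ground-state qubits.

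Two small remarks. First, the lemma as stated concerns only $\mcG_x$; the anchoring and parallel-repetition bookkeeping you sketch belongs to the subsequent theorem (\Cref{thm:main}), so you can drop that paragraph here. Second, your ``unwind the teleportation identities so that Bob's keys are sampled first'' is fine when Bob is queried first, but when $\widehat{V}$ asks Alice first and only then decides to send Bob an energy-test question, adaptivity forces you to commit to $r_1$ before knowing $q_2$. The paper handles this exactly as you implicitly suggest elsewhere: sample Alice's outcome $r_1$ uniformly (her marginal on the EPR halves is maximally mixed), and then, once $q_2$ is revealed, sample Bob's Bell outcomes $(\alpha_i,\beta_i)$ uniformly for $i\notin S$ and, for $i\in S$, from the Bell-measurement distribution on $\mathrm{Tr}_{\overline{S}}(\Phi_{q_1,r_1})\otimes\rho(x,S)$. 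Making this ordering explicit is the only place your sketch is a bit compressed.
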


\begin{proof}
Let $H_x$ be the corresponding Hamiltonian, and suppose $H_x$ acts on $n$ qubits ($n=poly(\abs{x})$). Recall that, in $\mcS_h$, Alice and Bob measure on a shared $n$-EPR pair $\ket{\eprn} \in \mathcal{H}_A \otimes \mathcal{H}_B$. Bob has an addition $n$-qubit register $\mcH_{B'}$ holding a copy of the ground state $\rho$ of $H_x$. Let $\mathcal{H}_{A,i}, \mathcal{H}_{B,i}$ and $\mathcal{H}_{B',i}$ be the $i$-th qubit of Alice or Bob on their respective registers.

As discussed in \Cref{Sec:PrelimsZk} a malicious verifier $\widehat{V}$ samples its first question $q_1$ using some distribution on the set of all possible questions for Alice or Bob. Given a response $r_1$ the second question $q_2$ is conditionally sampled using distribution. Since $\widehat{V}$ is $\mathrm{PPT}$ the simulator $Sim_{\widehat{V}}$ can also sample from these distributions given the same randomness. We consider the cases below for conditionally generating the responses.

If neither question $q_1$ nor $q_2$ are requests for energy test then in the honest strategy both players will make one or more Pauli measurements on their shared $n$-EPR pairs, and hence the simulator can efficiently produce responses. Here the order of the players will not matter as the simulator can first randomly generate the response~$r_1$ of the first player and compute the appropriate response~$r_2$ of the second conditional on the previous entries in the transcript.

Suppose $q_1$ is a request for an energy test sent to Bob. In the honest strategy, Bob performs $n$ Bell measurements on $\mathcal{H}_B \otimes \mathcal{H}_{B'}$. The simulator can produce a response $r_1$ conditional on Bob receiving a request for energy test by sampling $2n$ uniformly random bits $(\alpha ,\beta) =(\alpha_1,\cdots,\alpha_n, \beta_1,\cdots,\beta_n)$. In the honest strategy Alice should now have $\rho_{\alpha, \beta} :=\sigma_Z^{\beta}\sigma_X^{\alpha}\rho \sigma_X^{\alpha} \sigma_Z^{\beta}$ on her register $\mathcal{H}_A$, where $\sigma_W^r$ denotes $\otimes_{i\in [n]}\sigma_W^{r_i}$ for $W\in\{X,Z\}$ and $r\in\{0,1\}^n$.

It follows that $q_2$ must be a measurement request for Alice. This can be either of the form  $\sigma_{W}(a)$ for some $W \in \lbrace X,Z,I \rbrace^n$ and $a \in \lbrace 0,1 \rbrace^n$ with $|a| \leq 6$, or $\sigma_{XZ} \otimes \sigma_{ZX}(a)$ with $|a|=2$.
If $S$ is the set of qubits on which Alice's observable acts non-trivially then by \Cref{thm:simulation of history states} there exists a polynomial time algorithm which can output a description of a density matrix $\rho(x,S)$ satisfying
\begin{align}
    \norm{\mathrm{Tr}_{\overline{S}}(\rho) - \rho(x,S) }_{tr} \leq negl(n).\label{eq:negl}
\end{align}
$Sim_{\widehat{V}}$ can then append $r_2$ to the transcript using an approximation for Alice's honest measurement result on the qubits in $S$ of $\rho_{\alpha, \beta}$. This approximation can then be calculated using $\rho(x,S)$ since
\begin{equation*}
   \mathrm{Tr}_{\overline{S}}(\sigma_Z^{\beta}\sigma_X^{\alpha}\rho \sigma_X^{\alpha} \sigma_Z^{\beta}) = \big(\bigotimes_{i \in S}\sigma_{Z}^{b_{i}}\sigma_{X}^{a_{i}}\big) \mathrm{Tr}_{\overline{S}}(\rho) \big(\bigotimes_{i \in S}\sigma_{Z}^{b_{i}}\sigma_{X}^{a{i}}\big).
\end{equation*}
Lastly, we consider the case that Alice is first given a measurement request $q_1$ and after receiving response $r_1$ the verifier sends an energy test request to Bob for $q_2$. In order to generate a response for Alice the simulator will first generate a uniformly random response $r_1$. We take $\Phi_{q_{1},r_{1}}$ denote the post measurement state on $\mathcal{H}_A \otimes \mathcal{H}_B$ obtained by measuring the shared EPR pairs according to $q_1$ and obtaining outcome $r_1$. In the honest strategy, the overall shared state will be
\begin{align*}
    \tau_{q_{1},r_{1}} := \Phi_{q_{1},r_{1}} \otimes \rho \in \mathcal{H}_A \otimes \mathcal{H}_B \otimes \mathcal{H}_{B'},
\end{align*}
and Bob will perform independent Bell measurements on the registers $\mathcal{H}_B \otimes \mathcal{H}_{B'}$. Again we take~$S$ to denote the set of qubits on which Alice measured given question $q_1$. Then we have
\begin{align*}
   \mathrm{Tr}_{\overline{S}}(\tau_{q_{1}, r_{1}}) = \mathrm{Tr}_{\overline{S}}(\Phi_{q_{1}, r_{1}}) \otimes \mathrm{Tr}_{\overline{S}}(\rho).
\end{align*}
It follows from \Cref{eq:negl} that
\begin{align*}
    \norm{\mathrm{Tr}_{\overline{S}}(\tau_{q_{1}, r_{1}})-\mathrm{Tr}_{\overline{S}}(\Phi_{q_{1}, r_{1}})\otimes \rho(x,S)}_{tr}\leq negl(n).
\end{align*}
Thus the simulator can generate the conditional response $(\alpha, \beta)=(\alpha_i,\beta_i)_i$ as follows. First compute reduced density matrix $\rho(x,S)$ from \Cref{thm:simulation of history states}. For $i \in S$ sample $(\alpha_i, \beta_i)$ according to the probability of each possible Bell measurement result on $\mathrm{Tr}_{\overline{S}}(\Phi_{q_{1}, r_{1}}) \otimes \rho(x,S)$.
For the remaining $i \notin S$ the simulator can sample uniformly random pairs $(\alpha_i, \beta_i)$.

In all of the above cases, the output of the simulator will be guaranteed negligibly close and thus the output of $Sim_{\widehat{V}}$ is negligibly close to $View(x,\mathcal{S})$.
\end{proof}

All that remains is to argue that the interactive protocol described in \Cref{xtoGhat} based on the scaled-up game $\widehat{\mcG}_x$ is statistically zero-knowledge.

\begin{thm}\label{thm:main}
    The protocol described in \Cref{xtoGhat} is statistical zero-knowledge and has a constant completeness-soundness gap.
\end{thm}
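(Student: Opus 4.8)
The plan is to assemble \Cref{thm:main} from the three ingredients already established: the $\text{MIP}^*$ protocol $x \mapsto \widehat{\mcG}_x$ of \Cref{xtoGhat}, the gap bound of \Cref{gap}, and the simulator constructed in the preceding lemma. First I would verify the completeness-soundness gap. By \Cref{thm:simulation of history states}, for $x \in L_{yes}$ there is a witness $\psi^2$ on which $V_x$ accepts with probability $1-\negl$, and the corresponding history-state Hamiltonian $H_x$ has ground energy $\lambda_0(H_x) \le \alpha$; for $x \in L_{no}$, soundness of the $\QMA$ verification plus the circuit-to-Hamiltonian construction gives $\lambda_0(H_x) \ge \beta$ with $\beta - \alpha \ge 1/\mathrm{poly}(\abs{x})$. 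Feeding these two cases into \Cref{gap} yields $\widehat{\omega}_\alpha - \widehat{\omega}_\beta \ge 1/4$, i.e. a constant completeness-soundness gap for the protocol based on $\widehat{\mcG}_x$; the honest provers realize $\widehat{\omega}_\alpha$ using the honest strategy $\mcS_h$ (canonical LWPBT strategy plus teleporting the ground state in the energy test), so completeness holds with the honest strategy, and soundness is against arbitrary entangled provers since $\widehat{\omega}_\beta$ bounds the quantum value.

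Next I would establish the zero-knowledge property. Recall $\widehat{\mcG}_x = \mcG(H_x)_\perp^m$ is obtained from $\mcG_x$ by anchoring and $m$-fold parallel repetition, both of which are $\mathrm{poly}(\abs{x})$-computable and, crucially, \emph{local} transformations of the question/answer structure. The preceding lemma already produces, for any PPT malicious verifier $\widehat{V}$ and any $x \in L_{yes}$, a PPT simulator whose output is negligibly close to $View(\widehat{V}(x), \mcS_h)$ for the single-shot game $\mcG_x$. I would argue that this simulator lifts through anchoring and parallel repetition: a malicious verifier for $\widehat{\mcG}_x$ interacting with honest provers sees, in each of the $m$ coordinates, either an ``anchor'' question (whose honest answer is trivially sampleable, since the anchored branch is answered by a fixed response independent of the shared state) or a genuine $\mcG_x$ question, handled coordinate-wise by invoking the single-shot simulator on the conditional distribution induced by the transcript so far. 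Since the honest strategy for $\mcG(H_x)_\perp^m$ is the product of $m$ independent copies of $\mcS_h$ (plus the anchor responses), the coordinates are independent given the verifier's questions, so composing the per-coordinate simulators and summing the $m$ negligible statistical errors gives overall negligible error; this yields a PPT simulator for $View(\widehat{V}(x), \widehat{\mcS}_h)$ as required by \Cref{defn:SZKMIP*}.

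The main obstacle I anticipate is the adaptive interleaving of questions by a malicious verifier across the two sub-tests and across the $m$ repeated coordinates. A malicious $\widehat{V}$ may send its first question to Bob in one coordinate and its second to Alice in a \emph{different} coordinate, or mix an energy-test request in one coordinate with an LWPBT request in another; I would need to check that the single-shot simulator's case analysis (the four cases: no energy test, energy test to Bob then measurement to Alice, measurement to Alice then energy test to Bob) remains valid when the ``other'' question lives in a different coordinate or branch, using the tensor-product structure of $\widehat{\mcS}_h$ and the fact that in the honest strategy Bob's Bell-measurement outcomes in the energy test are uniformly random and independent of Alice's register contents. The key technical point carrying this through is \Cref{thm:simulation of history states} item (1): the simulatable encoding guarantees that Alice's honest measurement on any constant-size qubit subset $S$ (with $\abs{S} \le 3s+2 = 8$ when $s=2$, comfortably covering the weight-$6$ Pauli braids and the two-qubit Magic Square observables) can be approximated from the classically-computable reduced density matrix $\rho(x,S)$, so the simulator never needs global information about the ground state. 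I would close by assembling these observations: the gap statement from \Cref{gap}, plus the lifted simulator, together verify both clauses of \Cref{defn:SZKMIP*}, completing the proof of \Cref{thm:main}.
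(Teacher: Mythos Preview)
Your proposal is correct and takes essentially the same approach as the paper: invoke \Cref{gap} for the constant gap, extend the single-shot simulator of the preceding lemma through anchoring by fixing a canonical response on the $\perp$ symbol, and then through parallel repetition by using the $m$-fold product of the anchored simulator together with the product structure of the honest strategy $\mcS_{h,\perp}^{\otimes m}$. The paper's proof is terser---it defers the parallel-repetition step to the observation in \cite{GSY19} that in a one-round protocol each player is queried exactly once (with an $m$-tuple question), so there is no per-coordinate interleaving of messages and the product simulator suffices; your more explicit worry about adaptive cross-coordinate dependence of $q_2$ on $r_1$ is handled for exactly this reason, and the $m=\mathrm{poly}(\abs{x})$ negligible errors sum to a negligible total as you say.
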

\begin{proof}
    The constant completeness-soundness gap follows directly from \Cref{gap}. To show the statistical zero-knowledge, we first consider the anchoring procedure for the game $\mcG_x$. We can specify an honest strategy $\mcS_{h,\perp}$ for the anchored version of $\mcG_x$ by fixing a choice of output for either player who receives question $\perp$ in the honest strategy. Then, given any malicious verifier $\widehat{V}(x)$, the simulator given in \Cref{thm:main} can be trivially modified to sample from a distribution which is negligibly close to $View(\widehat{V}(x), \mathcal{S}_{h, \perp})$.

In the case of the threshold parallel repeated game $\widehat{\mcG}_x$, the honest strategy $\mcS^{m}_{h, \perp}$ is taken to be the $m$-fold product of the honest strategy $\mcS_{h, \perp}$. Then, as commented in \cite{GSY19}, since the protocol only queries each player once, a new simulator can be obtained by sampling according to the $m$-fold product of the simulator used in the above lemma.
\end{proof}

\section{Off-the-shelf model}\label{sec:SkepticalVerifierModel}
In this section, we formally define the OTS model (\Cref{subsec:FormalModel}). In \Cref{subsec:Applications}, we state our main conceptual theorems, which make use of all the technical contributions in the main part of this work.

\subsection{Formal description of the model}\label{subsec:FormalModel}

Here we provide a formal description of the OTS model. This model is defined as a refinement of ${\MIP}^*$, where the completeness condition is weakened, allowing  only one of the provers to be ``all-powerful", while the other has limited functionality determined independently of the problem instance.

\paragraph{Off-the-shelf device.} We first formalize the definition of a family of off-the-shelf devices. A \emph{verification device} $D=( \ket{\psi}, \lbrace P^{1}_{a}\rbrace_a, \dots, \lbrace P^{q}_{a} \rbrace_a )$ consists of a state $\ket{\psi}$ on Hilbert spaces $\mathcal{K}_A \otimes \mathcal{K}_B$ and a collection of  POVMs $ \lbrace P^{1}_{a}\rbrace_a, \dots, \lbrace P^{q}_{a} \rbrace_{a}$ on $\mathcal{K}_A$. Recall that a quantum strategy for a non-local game is determined by a tuple of the form $\mcS=(\{E^x_a\}_{x,a},\{F^y_b \}_{y,b},\ket{\phi} \in \mathcal{H}_A \otimes \mathcal{H}_B)$ as outlined in \Cref{sec:prelims}. We say that such a quantum strategy $\mathcal{S}$ can be \emph{implemented} using $D$, if three conditions hold: (i) $\mathcal{H}_A = \mathcal{K}_A$ and $\mathcal{H}_B = \mathcal{K}_B \otimes \mathcal{K}_{B'}$ for some Hilbert space $\mathcal{K}_{B'}$. (ii) The set of measurements in $\mathcal{S}$ which act on $\mathcal{H}_A$ are a contained in $D$. (iii) The shared state $\ket{\phi}$ in $\mathcal{S}$ can be decomposed as $\ket{\phi} = \ket{\psi} \otimes \ket{\phi'}$ where $\ket{\psi} $ is the state in $D$ and $\ket{\phi'}$  is some auxiliary state held on a Hilbert space $\mathcal{K}_{B'}$.

Note that if a quantum strategy $\mathcal{S}$ can be implemented using a verification device $D$, then the overall system is decomposed as $\mathcal{H}_A \otimes \mathcal{H}_B = \mathcal{K}_A \otimes \mathcal{K}_B \otimes \mathcal{K}_{B'}$. Informally, we view this decomposition as consisting of an entangled state $\ket{\psi}$ shared between the device and a single prover, together with an auxiliary state $\ket{\phi'}$ accessible only to the prover.

Given a collection of verification devices $\lbrace D_n \rbrace_{n \in \mathbb{N}}$, where each $D_n$ consists of a state $\ket{\psi_n}$ and a sequence of POVMs, we say $\{D_n\}_{n\in\N}$ is an \emph{efficient family of off-the-shelf devices} if there exists a polynomial-time uniform family of quantum circuits $\lbrace Q_n \rbrace_{n \in \mathbb{N}} $ satisfying the following: $Q_n$ generates the state $\ket{\psi_n}$ from an all $0$ state, and on input $i$ measures $\ket{\psi_n}$ using the $i$-th POVM from $D_n$.

\begin{defn}\label{defn:SQVM-(Formal)}
A promise language $L=( L_{yes},L_{no})$ has an off-the-shelf (OTS) proof system if there exists an efficient family of off-the-shelf devices $\lbrace D_n \rbrace_{n\in\N}$, and a  polynomial-time computable function that takes an instance $x$ to the description of a non-local game $\mcG_x$ satisfying the following:
    \begin{enumerate}
        \item \textbf{Completeness using off-the-shelf devices.} For any $x \in L_{yes}$ with $|x| \leq n$, there exists a quantum strategy $\mcS_x$, which can be implemented using $D_n$, obtaining $\omega(\mcG_x,\mcS_x) \geq c$.
        \item \textbf{Soundness.} For any $x \in L_{no}$ we have $\omega^*(\mcG_x) <s$.
    \end{enumerate}
We use $\OTS$ to denote the class of all languages $L$ which admits an OTS proof system with a constant completeness-soundness gap.
\end{defn}

Any OTS proof system is described as a special instance of a 2-player, 1-round  ${\MIP}^*$ proof system with additional constraints regarding the completeness condition and we say that an OTS proof system is statistically zero-knowledge if it is statistically zero-knowledge as an ${\MIP}^*$ proof system (see \Cref{defn:SZKMIP*} for details).

\subsection{Applications to ZK and delegated computation}\label{subsec:Applications}
In this section, we show that any language in $\QMA$ admits a statistical zero-knowledge OTS proof system. We also consider how the OTS model can be scaled down to provide a protocol for verifiable delegated quantum computation.

\begin{thm}\label{thm:QMA in Skep}
    For every language $L$ in $\QMA$, there exists a statistical zero-knowledge OTS proof system for $L$ with constant completeness and soundness gap.
\end{thm}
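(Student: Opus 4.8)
The plan is to assemble \Cref{thm:QMA in Skep} from the components already built in \Cref{Sec:ZK} and the OTS formalism of \Cref{subsec:FormalModel}. The heart of the matter is that the two-prover one-round protocol from \Cref{xtoGhat} is already, up to a purely definitional repackaging, an OTS proof system: the game $\widehat{\mcG}_x$ is efficiently computable from $x$, it has a constant completeness-soundness gap by \Cref{gap}, and — crucially — in the honest strategy $\mcS_h$ one of the two provers (Alice, the one whose register carries the EPR halves on which all braided Pauli measurements and the Magic-Square measurements act) performs only measurements drawn from a fixed, instance-independent list, acting on a fixed instance-independent state (the $n$-EPR-pair state). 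First I would make this precise: define, for each $n$, the off-the-shelf device $D_n$ whose state $\ket{\psi_n}$ is Alice's half of $\ket{\eprn}$ (together with the EPR partner that will be shared with the prover), and whose POVM list is exactly the set of measurements Alice is asked to perform across LWPBT, the energy test, and the anchoring/parallel-repetition wrapper — namely the observables $\tau_W^A(e_i)$ and their products $\sigma_W(a)$ for $|a|\le 6$, the Magic-Square observables $A_1,\dots,A_9$ on two-qubit sub-registers, the $\perp$-response for the anchored game, and the $m$-fold products of these for $\widehat{\mcG}_x$. Since all of these act on a constant number of qubits and the state is a product of EPR pairs, there is a polynomial-time uniform family $\{Q_n\}$ preparing $\ket{\psi_n}$ and implementing the $i$-th measurement; hence $\{D_n\}$ is an efficient family of off-the-shelf devices in the sense of \Cref{defn:SQVM-(Formal)}.

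Next I would verify the two conditions of \Cref{defn:SQVM-(Formal)} directly. For completeness, given $x\in L_{yes}$ with $|x|\le n$, take $\mcS_x$ to be the honest strategy $\mcS^m_{h,\perp}$ for $\widehat{\mcG}_x$: Alice's side is exactly $D_n$ (up to discarding unused EPR pairs when $|x|<n$, which is harmless — one can pad the Hamiltonian or simply have the prover not use the extra pairs), Bob holds the auxiliary state $\ket{\phi'}$ consisting of his halves of the EPR pairs together with a copy of the ground state $\rho$ of $H_x$ that he teleports over, and $\ket{\phi}=\ket{\psi_n}\otimes\ket{\phi'}$ decomposes as required in condition (iii); condition (ii) holds by construction of the POVM list; condition (i) holds since $\mcH_A=\mcK_A$ and $\mcH_B=\mcK_B\otimes\mcK_{B'}$. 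The winning probability is $\ge c$ for the constant $c$ coming from \Cref{gap} (using that the history state of $V_x$ makes $V_x$ accept with probability $1-\negl$, so $\lambda_0(H_x)\le\alpha$). Soundness is immediate: any OTS proof system is in particular a $\MIP^*[2,1]$ proof system with no restriction on either prover, so $\omega^*(\widehat{\mcG}_x)<s$ for $x\in L_{no}$ follows from the soundness side of \Cref{gap}. Hence $L\in\OTS$, and a fortiori $\QMA\subseteq\OTS$, giving \Cref{thm:InformalQMA}.

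For the zero-knowledge claim I would invoke \Cref{thm:main}: it establishes that the protocol of \Cref{xtoGhat} is statistically zero-knowledge as a $\MIP^*[2,1]$ proof system, exhibiting for every yes-instance the honest strategy $\mcS^m_{h,\perp}$ (which is precisely the strategy $\mcS_x$ implemented using $D_n$ above) and, for every PPT malicious verifier, a PPT simulator whose output is negligibly close to the real view. Since the excerpt defines an OTS proof system to be statistically zero-knowledge exactly when it is so as the underlying $\MIP^*$ proof system, there is nothing further to prove: the same simulator works. So the OTS proof system for $L$ constructed above is statistically zero-knowledge with a constant completeness-soundness gap, which is the assertion of \Cref{thm:QMA in Skep} (and \Cref{thm:InformalZK}).

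The step I expect to require the most care — though it is conceptual bookkeeping rather than a deep obstacle — is checking that \emph{every} measurement Alice could be asked to perform in the full scaled-up game $\widehat{\mcG}_x$ really is on the fixed instance-independent list, and in particular that it acts only on the shared EPR register and not on any instance-dependent auxiliary data. This is where the design choices of \Cref{Sec:LWPBT} and \Cref{sec:Game} pay off: because LWPBT uses only weight-$\le 6$ braids and because in the energy test Alice (unlike Bob) never touches the teleported ground state — she only measures her EPR halves, being unable to distinguish an energy-test question from an LWPBT question — Alice's entire measurement repertoire is determined by $n$ alone. One must also confirm that the anchoring transformation of \cite{Yue16} and the $m$-fold parallel repetition do not introduce instance-dependent measurements on Alice's side (they only add the trivial $\perp$-branch and take products), so that $D_n$ with its polynomially-many POVMs still suffices; this is routine but is the one place where a careless construction would break the OTS property.
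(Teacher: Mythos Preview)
Your proposal is correct and follows essentially the same approach as the paper: both invoke the protocol of \Cref{xtoGhat}, note that completeness/soundness (via \Cref{gap}) and statistical zero-knowledge (via \Cref{thm:main}) are already established, and then verify the OTS completeness condition by observing that Alice's honest measurements in $\widehat{\mcG}_x$ are weight-$\le 6$ Pauli braids (plus the Magic-Square $A_9$) on $poly(n)$ EPR pairs, hence implementable by an efficient instance-independent device family $\{D_n\}$. Your final paragraph flagging the need to check that anchoring and $m$-fold repetition preserve instance-independence of Alice's measurement list is exactly the bookkeeping the paper carries out.
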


\begin{proof}
We will be working with the proof system sending an instance $x$ to game~$\widehat{\mcG}_x$, as described in~\Cref{xtoGhat}. Using the rigidity results in \Cref{Sec:LWPBT} and \Cref{sec:rigidity}, we have already shown completeness and soundness of properties of the individual game $\widehat{\mcG}_x$ in \Cref{sec:Game}. The ZK property of this game has also been shown in \Cref{Sec:ZK}. All that remains to show is that this protocol further satisfies the extra restrictions of completeness using off-the-shelf devices outlined in \Cref{defn:SQVM-(Formal)}. That is, we need to show that there exists an efficient family of off-the-shelf devices $\lbrace D_n \rbrace$ which can implement the honest strategy $\mcS_x$ for all yes instances $x$.

For each $L \in \QMA$, there exists a polynomial $f$ such that, for all $x \in L$ of size $|x|=n$ the corresponding Hamiltonian~$H_x$ is supported on at most $f(n)$ qubits. Next suppose $x \in L_{yes}$ with $|x| \leq n$. In the honest strategy for the game $\mcG_x$, Alice and Bob share at most $f(n)$-EPR pairs, additionally, Bob privately holds a ground state $\rho$ for~$H_x$. The measurements required by Alice always correspond to $\sigma_X$ or $\sigma_Z$ on up to $6$ qubits of the shared EPR pairs, or $\sigma_X\sigma_Z \otimes \sigma_Z \sigma_X$ on two qubits. In the honest strategy~$\mathcal{S}_x$ for the $m$-fold parallel repeated anchoring game $\widehat{\mcG}_x$, the players share $m f(n)$-EPR pairs and Alice's measures in $\sigma_X$ or $\sigma_Z$ on up to $6m$ qubits or measures with $\sigma_X\sigma_Z \otimes \sigma_Z \sigma_X$ on $2m$ qubits. Since $m=poly(n)$, we then satisfy the completeness condition required by specifying an efficient family of off-the-shelf devices $\lbrace D_n \rbrace$, where for each~$n$ the verification device $D_n$ contains $m f(n)$-EPR pairs and all of the above required Pauli measurements on up to $6m$ qubits.
\end{proof}

In our application towards delegated quantum computation, we consider a novel type of interactive protocol, where in addition to exchanging classical messages, the server can send an untrusted verification device, as defined in \Cref{subsec:FormalModel}, to the client (see~\Cref{sec:intro}). In \Cref{fig:OTSProtocol}, we consider the case where in the first ``message", called a set-up stage, the prover sends an untrusted verification device, which is followed by classical communication.

\begin{figure}
    \centering
    \begin{mdframed}
    \begin{enumerate}
    \item \textbf{Set-up:} The client sends a set-up parameter $k \in \mathbb{N}$ to the server who provides a verification device $D_k$ from an efficient family of off-the-shelf devices $\lbrace D_n \rbrace_n$.
    \item \textbf{Choice of computation:} The client sends a classical description of circuit $Q$, satisfying $|Q|\leq k$ to the server.
    \item \textbf{Verifiable delegation:} The client plays a 1-round game $\widehat{\mcG}_Q$, using the server and device~$D_k$ as players. The client accepts if and only if the game is won.
\end{enumerate}
\end{mdframed}
    \caption{A delegation protocol between a polynomial-time classical client and polynomial-time quantum server, who provides an untrusted verification device during set-up.}
    \label{fig:OTSProtocol}
\end{figure}

Putting these observations together with \Cref{thm:QMA in Skep}, we obtain the following result on delegated quantum computation.

\begin{thm}\label{thm:BQP Delegation}
For every language $L$ in $\BQP$, there is a statistical-zero-knowledge delegation protocol as outlined in \Cref{fig:OTSProtocol} for $L$ with constant completeness and soundness gap.
\end{thm}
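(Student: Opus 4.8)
The plan is to derive \Cref{thm:BQP Delegation} from \Cref{thm:QMA in Skep} essentially as a scaling-down argument, exploiting the fact that the verification circuits underlying our $\QMA$ proof system are efficiently computable given a witness. First I would recall that $\BQP \subseteq \QMA$, so by \Cref{thm:QMA in Skep} the language $L \in \BQP$ already admits a statistical zero-knowledge OTS proof system; the content of the present theorem is that, when $L \in \BQP$, the honest prover strategy can be carried out by a \emph{polynomial-time} quantum server rather than an unbounded prover, and that the interaction fits the set-up/choice/delegation pattern of \Cref{fig:OTSProtocol}. The key observation is that for $L \in \BQP$ the relevant $\QMA$ instance has an \emph{empty} (or trivial, all-zero) witness: the $\BQP$-complete problem \textsc{Q-CIRCUIT} of \Cref{defn:Q-circuitProblem} is decided by running the circuit $Q$ on $\ket{0^n}$, so the verification circuit $V_x$ of \Cref{thm:simulation of history states} need not be given a nontrivial $\psi$, and a polynomial-time quantum device can prepare the history state of $V_x$ on input $\ket{0^n}$ directly.

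Concretely, I would carry out the following steps. First, fix $L \in \BQP$ and reduce an input $Q$ (a circuit on $n$ qubits, with $|Q| \le k$) via the $\BQP$-completeness of \textsc{Q-CIRCUIT} to a yes/no instance, then feed this through the chain $Q \mapsto V_Q \xrightarrow{\text{circuit-to-Hamiltonian}} H_Q \mapsto \widehat{\mcG}_Q$ of \Cref{xtoGhat}. Second, observe that in the honest strategy $\mcS_h$ for $\mcG_Q$ (and hence $\mcS^m_{h,\perp}$ for $\widehat{\mcG}_Q$), Bob's only non-classical tasks are (i) sharing $m f(n)$ EPR pairs with the device, (ii) preparing a ground state $\rho$ of $H_Q$ — equivalently the history state of $V_Q$ on the trivial witness — and (iii) performing Bell measurements and reporting outcomes. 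Each of these is computable by a polynomial-time uniform quantum circuit: the EPR pairs and Bell measurements are trivial, and the ground state of the circuit-to-Hamiltonian construction is exactly the history state of $V_Q$, which a polynomial-time quantum machine prepares by running $V_Q$ step by step in superposition over the clock register (this is the standard observation underlying the $\BQP \subseteq \QMA$ containment, and it requires only $|Q| = \mathrm{poly}(n)$ gates). Third, identify the pieces with \Cref{fig:OTSProtocol}: the set-up parameter is $k$, the device $D_k$ is the off-the-shelf device from \Cref{thm:QMA in Skep} (holding $mf(k)$ EPR pairs and the constant-weight Pauli and magic-square measurements), the choice-of-computation message is the description of $Q$, and the verifiable-delegation round is the single round of $\widehat{\mcG}_Q$. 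Completeness, soundness with a constant gap, and the statistical zero-knowledge property are then inherited verbatim from \Cref{thm:QMA in Skep} and \Cref{thm:main}, since the honest server strategy coincides with the honest prover strategy $\mcS_x$ used there, now merely observed to be polynomial-time.

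The main obstacle — though it is more of a bookkeeping point than a genuine difficulty — is making precise that the honest prover strategy from the $\QMA$ proof system is genuinely implementable by a polynomial-time quantum circuit when the witness is trivial, and in particular that the ground state preparation does not smuggle in unbounded computation. One must verify that the Broadbent–Grilo-style verification circuit $V_Q^{(2)}$ produced by \Cref{thm:simulation of history states} has size $\mathrm{poly}(|Q|)$ (the encoding uses a \emph{constant}-size concatenated Steane code for $s = 2$, so the blow-up is polynomial) and that its history state is preparable in time polynomial in $|Q|$ by the usual clock-register construction. A second, minor subtlety is that \textsc{Q-CIRCUIT} is $\BQP$-complete only for an inverse-polynomial promise gap, so one should first amplify $Q$ by sequential repetition and majority vote to push $\gamma$ below $1/3$ before invoking the reduction; since $\BQP$ is closed under complementation this handles both directions of the language. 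Once these points are dispatched, everything else is a direct appeal to the already-established results, and in particular the zero-knowledge simulator is literally the one constructed in \Cref{Sec:ZK}, unchanged.
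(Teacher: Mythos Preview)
Your proposal is correct and follows essentially the same route as the paper's own (sketch) proof: view \textsc{Q-CIRCUIT} as a $\QMA$ instance with trivial witness, apply the pipeline of \Cref{xtoGhat} to obtain $\widehat{\mcG}_Q$, observe that the history state (hence a low-energy state of $H_Q$) is preparable in polynomial time, reuse the off-the-shelf device family from \Cref{thm:QMA in Skep}, and inherit completeness, soundness and statistical zero-knowledge from \Cref{thm:QMA in Skep} and \Cref{thm:main}. Your write-up is in fact more careful about the bookkeeping (polynomial blow-up of $V_Q^{(2)}$, promise-gap amplification) than the paper's sketch, but the underlying argument is the same.
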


\Cref{thm:BQP Delegation} holds largely due to the fact that it can be viewed as a particular application of \Cref{thm:QMA in Skep} in the setting of a $\BQP$-complete problem. The only additional technical requirement occurs in step 3, in which an honest polynomial-time quantum prover is required to be able to saturate the optimal winning probability of the corresponding game. We provide the following sketch.

\begin{proof} \emph{(Sketch)}
We can view the $\BQP$-complete problem from \Cref{defn:Q-circuitProblem} as a language in $\QMA$. This allows us to apply the efficient mapping outlined in \Cref{xtoGhat} to obtain a corresponding game $\widehat{\mcG}_Q$. In this case, the ground state of the underlying Hamiltonian can be prepared by a polynomial-time quantum prover. Thus, as in the proof of \Cref{thm:QMA in Skep}, we can define the required polynomial-time uniform family of off-the-shelf devices $\lbrace D_n \rbrace_{n\in\N}$ by taking $D_n$ to contain suitably many EPR pairs, as well as the required Pauli measurements. Since furthermore the required ground state can always be prepared by a polynomial-time quantum prover, an honest server can obtain the required completeness in Step~3 by generating this state and teleporting it to the verification device when required. We also have that the above delegation protocol inherits the ZK property via the results of \Cref{thm:QMA in Skep}.
\end{proof}

\newcommand{\etalchar}[1]{$^{#1}$}
\makeatletter\@ifundefined{url}{\newcommand{\url}[1]{\texttt{#1}}}{}\@ifundefined{href}{\newcommand{\href}[2]{\texttt{#2}}}{}\@ifundefined{mathbb}{\newcommand{\mathbb}[1]{#1}}{}\makeatother

\end{document}